\renewcommand{\tocsection}[3]{%
  \indentlabel{\@ifnotempty{#2}{\bfseries\ignorespaces#1 #2\quad}}\bfseries#3}
\renewcommand{\tocsubsection}[3]{%
  \indentlabel{\@ifnotempty{#2}{\ignorespaces#1 #2\quad}}#3}
\newcommand\@dotsep{4.5}
\def\@tocline#1#2#3#4#5#6#7{\relax
  \ifnum #1>\c@tocdepth 
  \else
    \par \addpenalty\@secpenalty\addvspace{#2}%
    \begingroup \hyphenpenalty\@M
    \@ifempty{#4}{%
      \@tempdima\csname r@tocindent\number#1\endcsname\relax
    }{%
      \@tempdima#4\relax
    }%
    \parindent\z@ \leftskip#3\relax \advance\leftskip\@tempdima\relax
    \rightskip\@pnumwidth plus1em \parfillskip-\@pnumwidth
    #5\leavevmode\hskip-\@tempdima{#6}\nobreak
    \leaders\hbox{$\m@th\mkern \@dotsep mu\hbox{.}\mkern \@dotsep mu$}\hfill
    \nobreak
    \hbox to\@pnumwidth{\@tocpagenum{\ifnum#1=1\bfseries\fi#7}}\par
    \nobreak
    \endgroup
  \fi}
\renewcommand\csname r@tocindent0\endcsname{0pt}
\def\l@subsection{\@tocline{2}{0pt}{2.5pc}{5pc}{}}
\patchcmd{\@setaddresses}{\indent}{\noindent}{}{}
\patchcmd{\@setaddresses}{\indent}{\noindent}{}{}
\patchcmd{\@setaddresses}{\indent}{\noindent}{}{}
\patchcmd{\@setaddresses}{\indent}{\noindent}{}{}
\DeclareMathOperator{\C}{\mathcal{C}}
\newcommand{\srk}{\mathrm{srk}}
\newcommand{\dsrk}{\mathrm{d}_{\mathrm{srk}}}
\DeclareMathOperator{\GammaL}{\Gamma\mathrm{L}}
\DeclareMathOperator{\Gal}{Gal}
\DeclareMathOperator{\supp}{supp}
\DeclareMathOperator{\rk}{rk}
\DeclareMathOperator{\HH}{H}
\DeclareMathOperator{\mm}{m}
\DeclareMathOperator{\dd}{d}
\DeclareMathOperator{\Mat}{Mat}
\DeclareMathOperator{\Ext}{Ext}
\DeclareMathOperator{\colsp}{colsp}
\DeclareMathOperator{\PGL}{PGL}
\DeclareMathOperator{\GL}{GL}
\DeclareMathOperator{\ww}{w}
\theoremstyle{definition}
\newtheorem{theorem}{Theorem}[section]
\newtheorem{lemma}[theorem]{Lemma}
\newtheorem{corollary}[theorem]{Corollary}
\newtheorem{definition}[theorem]{Definition}
\newtheorem{proposition}[theorem]{Proposition}
\newtheorem{example}[theorem]{Example}
\newtheorem{remark}[theorem]{Remark}
\newtheorem{open}[theorem]{Problem}
\newcommand{\cC}{{\mathcal C}}
\newcommand{\cM}{{\mathcal M}}
\newcommand{\cG}{{\mathcal G}}
\newcommand{\cH}{{\mathcal H}}
\newcommand{\cV}{{\mathcal V}}
\newcommand{\cO}{{\mathcal O}}
\newcommand{\cU}{{\mathcal U}}
\newcommand{\mU}{{\mathcal U}}
\newcommand{\mV}{{\mathcal V}}
\newcommand{\F}{{\mathbb F}}
\newcommand{\D}{{\mathbb D}}
\newcommand{\Z}{{\mathbb Z}}
\newcommand{\NN}{{\mathbb N}}
\newcommand{\bfn}{\mathbf {n}}
\newcommand{\bfm}{\mathbf {m}}
\newcommand{\fq}{{\mathbb F}_{q}}
\newcommand{\Fq}{{\mathbb F}_{q}}
\newcommand{\Fm}{{\mathbb F}_{q^m}}
\newcommand{\la}{\langle}
\newcommand{\ra}{\rangle}
\newcommand{\PG}{\mathrm{PG}}
\newcommand{\N}{\mathrm{N}}
\newcommand{\spac}{\Mat(\bfn,\bfm,\Fq)}
\newcommand{\Fmnk}{[\bfn,k]_{q^m/q}}
\newcommand{\Fmnkd}{[\bfn,k,d]_{q^m/q}}
\newcommand{\Fmk}{[n,k]_{q^m/q}}
\newcommand{\Fmkt}{(n,k)_{q^m/q}}
\newcommand{\st}{\,:\,}
\title{The geometry of one-weight codes in the sum-rank metric}
\date{}
\author[A. Neri]{Alessandro Neri}
\address{Alessandro Neri, \textnormal{Max-Planck-Institute for Mathematics in the Sciences, Inselstraße 22, 04103 Leipzig, Germany}}
\email{alessandro.neri@mis.mpg.de}
\author[P. Santonastaso]{Paolo Santonastaso}
\address{Paolo Santonastaso, \textnormal{Dipartimento di Matematica e Fisica, Universit\`a degli Studi della Campania ``Luigi Vanvitelli'', Viale Lincoln, 5, I--\,81100 Caserta, Italy}}
\email{paolo.santonastaso@unicampania.it}
\author[F. Zullo]{Ferdinando Zullo}
\address{Ferdinando Zullo, \textnormal{Dipartimento di Matematica e Fisica, Universit\`a degli Studi della Campania ``Luigi Vanvitelli'', Viale Lincoln, 5, I--\,81100 Caserta, Italy}}
\email{ferdinando.zullo@unicampania.it}
\subjclass[2020]{11T71; 51E20; 11T06; 94B05} 
\keywords{Sum-rank metric code;  One-weight code; Linear set; Simplex Code; Linearized Reed-Solomon code}
\begin{document}

\maketitle

\begin{abstract}
We provide a geometric characterization of $k$-dimensional $\Fm$-linear sum-rank metric codes as tuples of $\Fq$-subspaces of $\Fm^k$. We then use this characterization to study one-weight codes in the sum-rank metric. This leads us to extend the family of linearized Reed-Solomon codes in order to obtain a doubly-extended version of them. We prove that these codes are still maximum sum-rank distance (MSRD) codes and, when $k=2$, they are one-weight, as in the Hamming-metric case.
We then focus on constant rank-profile codes in the sum-rank metric, which are a special family of one weight-codes, and derive constraints on their parameters with the aid of an associated Hamming-metric code.
Furthermore, we introduce the $n$-simplex codes in the sum-rank metric, which are obtained as the orbit of a Singer subgroup  of $\GL(k,q^m)$. They turn out to be constant rank-profile -- and hence one-weight -- and generalize the simplex codes in both the Hamming and the rank metric.
Finally, we focus on $2$-dimensional one-weight codes, deriving constraints on the parameters of those which are also MSRD, and we find a new construction of one-weight MSRD codes when $q=2$.
\end{abstract}

\tableofcontents

\section{Introduction}

Codes endowed with the sum-rank metric have been demonstrated to be useful in several contexts. Their notion can be traced back to \cite{el2003design,lu2005unified}, where they have been linked to space-time coding. However, they became popular when they were used for improving the performance of multishot network coding built on rank-metric codes \cite{nobrega2010multishot}. Later, sum-rank-metric codes have also been used in distributed storage, 
playing a crucial role in some special constructions of partial MDS codes \cite{martinez2019universal,martinez2020general}.

In the last few years, a deep mathematical theory of sum-rank-metric codes was developed in a series of papers by Martínez-Peñas \cite{martinezpenas2018skew,martinez2019theory,martinezpenas2021hamming}. These codes can be seen as a generalization of Hamming-metric codes and rank-metric codes. Vectors are first partitioned in blocks of (possibly) variable lengths, then on each block one considers the rank-metric and  sums up the distances of each block. Considering only one block corresponds to the rank metric setting, while choosing all the blocks to have length one is equivalent to equip the ambient space with the Hamming metric. 

On the other hand, the theory of error-correcting codes has often been influenced by finite geometry, and the interplay between these two subjects has emerged since the late 50's. Among their connections, it is well-known that linear codes in the Hamming metric correspond to \emph{projective systems}, that are sets of points in a finite projective space. This correspondence translates metric properties of a linear code into geometric properties of the associated projective system; see \cite[Section 1.1]{tsfasman1991algebraic}. This geometric viewpoint is very useful for the classification of one-weight codes shown by Bonisoli in \cite{bonisoli1983every}. 
More recently a similar geometric approach was used for codes in the rank metric. A first intuition was given in \cite{sheekey2016new}, while a full correspondence was provided in \cite{sheekey2019scatterd} and in \cite{Randrianarisoa2020ageometric}: $k$-dimensional rank-metric codes over an extension field $\Fm$ of $\Fq$ correspond to \emph{$q$-systems}, which are special $\Fq$-subspaces of $\Fm^k$. This correspondence allowed us to give a complete characterization of nondegenerate $\Fm$-linear one-weight codes in the rank metric; see \cite[Theorem 12]{Randrianarisoa2020ageometric}, \cite[Proposition 3.16]{alfarano2021linear}. 

\bigskip

\noindent \textbf{Our contribution.} Motivated by these results, in this paper we give a complete geometric characterization of $\Fm$-linear sum-rank metric codes based on Theorem \ref{th:connection} and Theorem \ref{thm:1-1corr}, and then we use this geometric point of view to study one-weight codes in the sum-rank metric. In contrast to what happens for the Hamming and the rank metric, in which there is essentially only one nondegenerate one-weight code, we show that there are many different one-weight codes in the sum-rank metric. 
We first extend linearized Reed-Solomon codes to \emph{doubly-extended linearized Reed-Solomon codes}, which we prove to be maximum sum-rank distance codes in Theorem \ref{th:MSRDLRS}. Furthermore, in Theorem \ref{th:RS1w} we show that the two-dimensional ones are one-weight, as it happens for the Hamming-metric case. 
We then restrict to study constant rank-profile codes, which are sum-rank metric codes such that for every nonzero block-codeword $c=(c_1 \mid \ldots \mid c_t)$, the multiset $\{\{\rk_q(c_1),\ldots,\rk_q(c_t)\}\}$ is always the same. We associate a Hamming-metric code to a sum-rank metric code exploiting the geometric correspondence, and use such a code for deriving constraints on constant rank-profile codes; see Corollary \ref{cor:parameters}.
In Theorem \ref{thm:orbit_construction} we provide an orbital construction of constant rank-profile codes, using the action of a transitive subgroup $\mathcal G$ of $\GL(k,q^m)$. This leads to the notion of \emph{$n$-simplex code} in the sum-rank metric given in Definition \ref{def:simplex}. We observe then that this code specializes to the simplex code in the Hamming metric when $n=1$, and to the simplex code in the rank metric whenever the action of $\mathcal G$ is trivial; see Remark \ref{rem:simplex_special_cases}. Finally, we investigate $2$-dimensional one-weight codes which are also MSRD, introducing the notions of \emph{multi-linear sets} and \emph{scattered multi-linear sets}. First,  we prove that these codes must have at least $q+1$ blocks. Then, in Theorem \ref{th:boundq+1} we show that for $q\geq 3$, whenever $2$-dimensional one-weight MSRD codes with exactly $q+1$ blocks exist, they must have the same block lengths of the doubly-extended linearized Reed-Solomon. For $q=2$ in Theorem \ref{thm:oneweight_MSRD_sporadic} we find a special family of one-weight MSRD codes, which we call \emph{$2$-fold linearized Reed-Solomon codes}, having different block-lengths. We conclude providing additional constructions of $2$-dimensional one-weight codes, using the \emph{lifting construction}. 
\bigskip

\noindent \textbf{Outline.} The paper is structured as follows. 
Section \ref{sec:2}  contains the basic notions on sum-rank metric codes. 
In Section \ref{sec:3} we derive the geometric characterization of sum-rank metric codes in terms of multiple $q$-systems. 
Section \ref{sec:4} is dedicated to the study of doubly-extended linearized Reed-Solomon codes. We then associate a sum-rank metric code with a special Hamming-metric code in Section \ref{sec:extendHamm}, and derive constraints on the parameters of one-weight codes. 
In Section \ref{sec:orbital} we provide an orbital construction of one-weight sum-rank metric codes using transitive groups, and give the counterpart of simplex codes in the sum-rank metric. 
In Section \ref{sec:non-hom} we focus on $2$-dimensional one-weight codes, characterizing the block structure of those that are also MSRD.
Finally,  we draw our conclusions and we describe some natural research questions in Section \ref{sec:8}.

\bigskip

\noindent \textbf{Notation.} We fix now the notation that we will use for the whole paper. For us $q$ is a prime power and $\Fq$ is the finite field with $q$ elements. We will often consider the degree $m$ extension field $\Fm$ of $\Fq$. 
\noindent Let $t$ be a positive integer. From now on
$\bfn=(n_1,\ldots,n_t), \bfm=(m_1,\ldots,m_t) \in \NN^t$ will always be ordered tuples with $n_1 \geq n_2 \geq \ldots \geq n_t$ and $m_1 \geq m_2, \geq \ldots \geq m_t$, and we set $N\coloneqq n_1+\ldots+n_t$. We use the following compact notations for the direct sum of vector spaces 
$$ \Fq^\bfn\coloneqq\bigoplus_{i=1}^t\Fq^{n_i}, \qquad  \Fm^\bfn\coloneqq\bigoplus_{i=1}^t\Fm^{n_i},$$
and for the direct sum of matrix spaces
$$\spac\coloneqq \bigoplus_{i=1}^t \F_q^{n_i \times m_i}.$$
Furthermore, we follow the notation used in \cite[Section 3.3]{alfarano2021sum}. For a vector $\mathbf{a}=(a_1,\ldots, a_r)\in \NN^r$, we define $$\mathcal{S}_{\mathbf{a}}=\mathcal{S}_{a_{1}} \times \ldots \times \mathcal{S}_{a_r},$$
where $\mathcal{S}_i$ is the symmetric group of order $i$. Similarly, we denote by $\GL(\mathbf{a}, \F_q)$ the direct product of the general linear groups of degree $a_i$ over $\F_q$, i.e.
$$ \GL(\mathbf{a}, \F_q) = \GL(a_1, \F_q)\times \ldots\times \GL(a_r, \F_q).$$
Moreover, we denote by $\lambda(\bfn)=(\lambda_1,\ldots,\lambda_s)$ the vector of positive integers which counts the occurrences of the distinct entries of $\bfn$. Formally, let $\mathcal N(\bfn)\coloneqq\{n_1, \ldots, n_t\}=\{n_{i_1},\ldots,n_{i_s}\}$, with $n_{i_1}>\ldots > n_{i_s}$ and $s\coloneqq|\mathcal N (\bfn)|$. Then
$\lambda_j \coloneqq|\{k \st n_k = n_{i_j}\}|$ for each $j\in [s]\coloneqq\{1,\ldots,s\}$.

\noindent Finally, whenever we will talk about duality in this paper, we will consider the standard inner product on the ambient spaces, and denote the \textbf{dual} of a space $V$ with respect to this bilinear form by $V^\perp$.

\section{Sum-rank-metric codes, supports and isometries}\label{sec:2}

\subsection{Matrix codes}
In this section we recall the basic notions of sum-rank-metric codes seen as elements in $\spac$, that will be useful for the rest of the paper. The interested reader is referred to \cite{byrne2021fundamental,byrne2020anticodes,moreno2021optimal} for a more detailed description of this setting. 

\begin{definition}
Let $X\coloneqq(X_1,\dots, X_t)\in\Mat(\mathbf{n},\mathbf{m},\Fq)$. 
The \textbf{sum-rank support} of $X$ is defined as the space
$$\supp(X)\coloneqq(\colsp(X_1), \colsp(X_2),\ldots, \colsp(X_t)) \subseteq \Fq^\bfn,$$
where $\colsp(X_i)$ is the $\F_q$-span of the columns of $X_i$.
The \textbf{rank-list} of $X$ is defined as 
$$\rho(X) \coloneqq (\dim(\colsp(X_1)),\dots, \dim(\colsp(X_t)))\in\mathbb{N}^t.$$
The \textbf{rank-profile} of $X$ is the $t$-uple obtained from $\rho(X)$ by rearranging its  entries in non-increasing order and it is denoted by $\mu(X)$. Finally, the \textbf{sum-rank weight} of $X$ is the quantity
$$\ww_{\srk}(X)\coloneqq\dim_{\Fq}(\supp(X))=\sum_{i=1}^t \rk(X_i).$$
\end{definition}

With these definition in mind, we can endow the space $\spac$ with a distance function, called the \textbf{sum-rank distance},
\[
\dsrk : \spac \times \spac \longrightarrow \mathbb{N}  
\]
defined by
\[
\dsrk(X,Y) \coloneqq \ww_{\srk}(X-Y).
\]

\begin{definition}
 A \textbf{(matrix) sum-rank metric code} $\mathcal{C}$ is an $\F_q$-linear subspace of $\spac$ endowed with the sum-rank distance. 
The \textbf{minimum sum-rank distance} of a sum-rank code $\mathcal{C}$ is defined as usual via $$\dsrk(\mathcal{C})\coloneqq\min\{\ww_{\srk}(X) \st X \in \mathcal{C}, X \neq \mathbf{0}\}.$$ 
The \textbf{sum-rank support} of the code $\mathcal{C}$ is the $\Fq$-span of the supports of all the codewords of $\C$, that is
$$ \supp(\C)\coloneqq\sum_{X\in\C} \supp(X) \subseteq \Fq^\bfn.$$
Finally,  we say that $\mathcal{C}$ is \textbf{sum-rank nondegenerate} if $\supp(\mathcal{C})=\F_q^\bfn$.
\end{definition}

\subsection{Vector codes} \label{sub:vectorsetting}
From now on, we consider the case in which $m=m_1=\cdots=m_t$.
We recall a slightly different notion of sum-rank metric code, in which the codewords are vectors with entries from an extension field $\F_{q^m}$ rather than matrices over $\F_q$. The interested reader is referred to \cite{martinezpenas2018skew,martinez2019universal,martinez2019theory,neri2021twisted,ott2021bounds} for a more detailed description of this setting.

The \textbf{$\Fq$-rank} of a vector $v=(v_1,\ldots,v_n) \in \F_{q^m}^n$ is the $\Fq$-dimension of the
vector space generated over $\F_q$ by its entries, i.e, 
$$\rk_q(v)\coloneqq\dim_{\fq} \langle v_1,\ldots, v_n\rangle_{\fq}.$$ 
Let $x=(x_1 , \ldots,  x_t)\in\F_{q^m}^\bfn$, with $x_i\in\F_{q^m}^{n_i}$ for any $i$. The \textbf{$\Fq$-sum-rank weight of $x$} is defined as the quantity
$$ \ww(x)=\sum_{i=1}^t \rk_q(x_i).$$
Consequently, the 
\textbf{$\Fq$-sum-rank distance on $\Fm^\bfn$}, is  
\[
\dd(x,y)=\sum_{i=1}^t \rk_q(x_i-y_i),
\]
for any  $x=(x_1 , \ldots, x_t), y=(y_1, \ldots, y_t) \in \F_{q^m}^\bfn$, with $x_i,y_i \in \F_{q^m}^{n_i}$.

Observe that $\Fm^\bfn$ is naturally isomorphic to $\Fm^N$ by concatenating the defining vectors. However, in order to define the sum-rank weight and sum-rank distance directly on $\Fm^N$ we would need to specify with respect to which partition of $N$ we are computing them. For this reason we consider $\Fm^\bfn$ as ambient space, so that the partition is already specified. However, in some situation we will use this natural representation of the elements of $\Fm^\bfn$ as vectors of length $N$ over $\Fm$.
Thus,  from now on, we will simply write ``sum-rank weight'' and ``sum-rank distance'' when there is no risk of ambiguity on the base field $\Fq$ and on  the integer partition of $N$.

\begin{definition}
Let $k$ be a positive integer with $1 \le k \le N$. A \textbf{(vector) sum-rank metric code} $\C$ is a $k$-dimensional $\Fm$-subspace of $\Fm^\bfn$ endowed with the sum-rank metric. 
The \textbf{minimum sum-rank distance} of $\C$ is the integer
\[
\dd(\C)=\min\{\dd(x,y) \st x, y \in \C, x\neq y  \}= \min\{\ww(x) \st x \in \C, x\neq 0  \}.
\]
For brevity, from now on we will write that $\C$ is an $\Fmnkd$ code if $k$ is the $\Fm$-dimension of $\C$ and $d$ is its minimum distance, or simply an $\Fmnk$ code if the minimum distance is not relevant/known.
\end{definition}

Since the elements of $\Fm^\bfn$ can also be seen as long vectors in $\Fm^N$, every $\Fmnk$ code can be provided with generator and parity-check matrices. Each of them, is naturally partitioned as $G=(G_1 \mid \ldots \mid G_t)$, where $G_i \in \Fm^{k\times n_i}$ (respectively $H=(H_1 \mid \ldots \mid H_t)$, where $H_i \in \Fm^{(N-k)\times n_i}$).

\medskip 

For vector sum-rank metric codes, a Singleton-like bound holds.
\begin{theorem}[see \textnormal{\cite[Proposition 16]{martinezpenas2018skew}}]  \label{th:Singletonbound}
     Let $\mathcal{C}$ be an $\Fmnkd$ code. Then 
     \[
     d \leq N-k+1.
     \]
\end{theorem}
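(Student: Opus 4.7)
The plan is to reduce the claim to the classical Singleton bound for the Hamming metric by observing that sum-rank weight is dominated by Hamming weight, and then to execute a standard puncturing/projection argument.

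First, I would establish the comparison $\ww(x) \leq \ww_{H}(x)$, where $\ww_{H}(x)$ denotes the Hamming weight of $x \in \Fm^{\bfn}$ viewed (via concatenation) as a vector of length $N$ over $\Fm$. Indeed, for each block $x_i \in \Fm^{n_i}$ the quantity $\rk_q(x_i) = \dim_{\Fq} \langle x_{i,1}, \ldots, x_{i,n_i}\rangle_{\Fq}$ is bounded above by the number of nonzero entries of $x_i$, since those entries generate the $\Fq$-span. Summing over $i = 1, \ldots, t$ yields $\ww(x) \leq \ww_{H}(x)$ for every $x \in \Fm^{\bfn}$.

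Next, I would run the projection argument. View $\mathcal{C}$ as a $k$-dimensional $\Fm$-subspace of $\Fm^N$ and consider the projection
\[
\pi : \Fm^N \longrightarrow \Fm^{k-1}
\]
onto any fixed choice of $k-1$ coordinates. Since $\dim_{\Fm} \pi(\mathcal{C}) \leq k-1 < k$, the rank-nullity theorem forces $\ker(\pi|_{\mathcal{C}})$ to contain a nonzero codeword $c \in \mathcal{C}$. By construction $c$ vanishes on the chosen $k-1$ positions, so its Hamming weight as a vector in $\Fm^N$ satisfies $\ww_{H}(c) \leq N-(k-1) = N-k+1$.

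Combining the two observations, $d \leq \ww(c) \leq \ww_{H}(c) \leq N-k+1$, which is the desired inequality. There is no real obstacle here: the only nontrivial point is the block-wise comparison between rank and Hamming weight, and once that is in place the bound reduces to the Singleton bound for the Hamming metric applied to the long-vector representation of $\mathcal{C}$.
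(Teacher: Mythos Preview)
Your argument is correct. The paper itself does not prove this statement; it is quoted as \cite[Proposition 16]{martinezpenas2018skew} without proof, so there is nothing in the present paper to compare against. Your reduction to the Hamming-metric Singleton bound via the blockwise inequality $\rk_q(x_i)\le \ww_{\HH}(x_i)$ followed by the standard puncturing argument is exactly the classical route, and it is essentially the argument given in the cited reference.
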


An $\Fmnkd$ code is called a \textbf{Maximum Sum-Rank Distance code} (or shortly \textbf{MSRD code}) if $d=N-k+1$.

The next result establishes the $\Fm$-linear isometries of the space $\F_{q^m}^\bfn$ endowed with the sum-rank distance, and it has been proved in \cite[Theorem 3.7]{alfarano2021sum}. The case $\bfn=(n,\ldots, n)$ was already proved in \cite[Theorem 2]{martinezpenas2021hamming}.
\begin{theorem}
The group of $\F_{q^m}$-linear isometries of the space $(\F_{q^m}^\bfn,\dd)$  is
$$((\F_{q^m}^\ast)^{t} \times \GL(\bfn, \F_q)) \rtimes \mathcal{S}_{\lambda(\bfn)},$$
which (right)-acts as 
  \begin{equation*} (x_1 , \ldots, x_t)\cdot (\mathbf{a},A_1,\ldots, A_t,\pi) \longmapsto (a_1x_{\pi(1)}A_1 \mid \ldots \mid a_tx_{\pi(t)} A_{t}).\end{equation*}
\end{theorem}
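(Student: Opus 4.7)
The plan is to prove the two inclusions separately: first, that every element of the claimed semidirect product acts as an $\F_{q^m}$-linear isometry; second, that every such isometry arises in this way.

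The forward direction is routine. For $(\mathbf{a},A_1,\ldots,A_t,\pi)$ in the claimed group and $x=(x_1,\ldots,x_t)\in\F_{q^m}^\bfn$, it suffices to note that (i) multiplication by a nonzero scalar $a_i\in\F_{q^m}^\ast$ leaves the $\F_q$-span of the entries of the $i$-th block invariant, (ii) right multiplication by $A_i\in\GL(n_i,\F_q)$ sends $\F_q$-linear combinations of entries to $\F_q$-linear combinations of the new entries, hence preserves $\rk_q$, and (iii) the block permutation $\pi\in\mathcal{S}_{\lambda(\bfn)}$ only exchanges blocks of equal length, so $\sum_i\rk_q(x_{\pi(i)})=\sum_i\rk_q(x_i)$. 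Composing these three operations preserves $\ww$, and therefore $\dd$.

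For the converse, let $\phi$ be an $\F_{q^m}$-linear isometry. The strategy is to first show that $\phi$ permutes the block subspaces $V_i\coloneqq\{0\}\oplus\cdots\oplus\F_{q^m}^{n_i}\oplus\cdots\oplus\{0\}$, and then determine its restriction to each block using the known rank-metric case. I would argue via the structure of weight-one vectors: the nonzero vectors of weight $1$ are precisely those of the form $(0,\ldots,0,\alpha w,0,\ldots,0)$ with $w\in\F_q^{n_i}\setminus\{0\}$ and $\alpha\in\F_{q^m}^\ast$, so they partition into $\F_{q^m}$-lines each contained in a single $V_i$. The key metric invariant is that given two distinct such weight-one lines $L,L'$, the $2$-dimensional $\F_{q^m}$-subspace $L+L'$ contains exactly $q+1$ weight-one lines when $L,L'\subseteq V_i$ for some common $i$ (corresponding to the $q+1$ projective points of $\langle w,w'\rangle_{\F_q}$), and exactly $2$ weight-one lines otherwise. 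Since $q\geq 2$, these configurations are distinguishable by $\phi$, so the relation \emph{lying in the same block} on weight-one lines is preserved, forcing $\phi(V_i)=V_{\pi(i)}$ for some permutation $\pi$; comparing $\F_{q^m}$-dimensions gives $n_{\pi(i)}=n_i$, so $\pi\in\mathcal S_{\lambda(\bfn)}$.

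After composing $\phi$ with the inverse of this permutation, one reduces to the case in which $\phi(V_i)=V_i$ for every $i$. Each restriction $\phi|_{V_i}$ is then an $\F_{q^m}$-linear automorphism of $\F_{q^m}^{n_i}$ preserving $\rk_q$, so by the classical characterization of $\F_{q^m}$-linear rank-metric isometries (the case $t=1$) it takes the form $x_i\mapsto a_i x_i A_i$ with $a_i\in\F_{q^m}^\ast$ and $A_i\in\GL(n_i,\F_q)$. Reassembling these block maps with the permutation recovers the asserted action, and the semidirect-product structure follows because the scalar and $\GL(n_i,\F_q)$-factors are normal while the permutation group acts by conjugation on indices. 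The main obstacle is the counting step distinguishing same-block from different-block pairs of weight-one lines; once this metric invariant is established, the argument reduces cleanly to the single-block setting.
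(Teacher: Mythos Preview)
The paper does not give its own proof of this theorem: it is stated with a citation to \cite[Theorem 3.7]{alfarano2021sum} (and \cite[Theorem 2]{martinezpenas2021hamming} for the equal-block case), so there is no in-paper argument to compare against.

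That said, your outline is a correct and standard route. The forward inclusion is indeed routine. For the converse, the key step---distinguishing same-block from different-block pairs of weight-one $\F_{q^m}$-lines by counting the weight-one lines in their $\F_{q^m}$-span---is sound: if $L,L'\subseteq V_i$ correspond to $\F_q$-independent $w,w'\in\F_q^{n_i}$, then $\beta w+\gamma w'$ has $\F_q$-rank $1$ precisely when $(\beta:\gamma)\in\PG(1,q)$, giving $q+1$ lines; across different blocks the count is exactly $2$. Two small points worth making explicit in a full write-up: (i) to pass from ``$\phi$ preserves the same-block relation on weight-one lines'' to ``$\phi(V_i)=V_{\pi(i)}$'' you use that each $V_i$ is the $\F_{q^m}$-span of its weight-one lines (e.g.\ the standard basis vectors), together with a dimension comparison; (ii) blocks of length $1$ contain a single weight-one line, but this causes no problem because $\phi$ induces a bijection on the equivalence classes and hence matches class sizes $\frac{q^{n_i}-1}{q-1}$. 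The final reduction to the single-block rank-metric isometry theorem is exactly the right endgame. Your identification of the counting step as the crux is accurate; once it is established, the argument is straightforward.
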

From now on, when the vector $\mathbf{a}=(1,\ldots,1)$, the action of $(\mathbf{a},A,\pi)$ on $x\in\Fm^\bfn$, with $A=(A_1,\ldots, A_t)\in \GL(\bfn,\Fq)$, will be  simply denoted by $x\cdot (A,\pi)$, and if in addition $\pi=\mathrm{id}$, we will simply write 
$x\cdot A$. The same notation is then extended naturally to subsets/subspaces  $\mathcal V\subseteq \Fq^\bfn$, as $\mathcal V \cdot (A,\pi) $ and $\mathcal V \cdot A$.

Since in this setting MacWilliams's extension theorem does not hold, see e.g. \cite{barra2015macwilliams}, we will use the $\Fm$-linear isometries of the whole ambient space in order to define the equivalence of sum-rank metric codes.

\begin{definition}\label{def:equiv_codes} We say that two $\Fmnk$ sum-rank metric codes $\cC_1, \cC_2$ are \textbf{equivalent} if there is an $\Fm$-linear isometry $\phi$, such that $\phi(\cC_1)=\cC_2$. 
\end{definition}
We denote the set of equivalence classes of $\Fmnkd$ sum-rank metric codes by $\mathfrak{C}\Fmnkd$.

\subsection{Supports and degeneracy of sum-rank metric codes}

It is known that  to a vector sum-rank metric code we can associate a matrix sum-rank metric code with the same parameters and metric properties in the following way.

For every $r \in [t]$, let $\Gamma_r=(\gamma_1^{(r)},\ldots,\gamma_m^{(r)})$ be an ordered $\Fq$-basis of $\Fm$, and let $\Gamma\coloneqq(\Gamma_1,\ldots,\Gamma_t)$. Given   $x=(x_1, \ldots ,x_t) \in \Fm^\bfn$, with $x_i \in \Fm^{n_i}$, define the element $\Gamma(x)=(\Gamma_1(x_1), \ldots, \Gamma_t(x_t)) \in \spac$, where $\mathbf{m}=(m,\ldots,m)$ and 
$$x_{r,i} = \sum_{j=1}^m \Gamma_r (x_r)_{ij}\gamma_j^{(r)}, \qquad \mbox{ for all } i \in [n_r].$$
In other words, the $r$-th block of $\Gamma(x)$ is the matrix expansion of the vector $x_r$ with respect to the $\Fq$-basis $\Gamma_r$ of $\Fm$.

The following result is straightforward.

\begin{theorem}\label{thm:isometry_vector_matrix}
    For every tuple ordered $\Fq$-basis $\Gamma=(\Gamma_1,\ldots,\Gamma_t)$ of $\Fm$, the map 
    $$\Gamma: \Fm^\bfn \longrightarrow \spac$$
    is an $\Fq$-linear isometry between the metric spaces $(\Fm^\bfn, \dd)$ and $(\spac,\dsrk)$. 
\end{theorem}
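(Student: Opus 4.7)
The plan is to verify the three required properties in turn: $\Fq$-linearity, bijectivity, and preservation of the weight. Since the sum-rank weight is additive over blocks, once I reduce the problem to a single block it becomes the well-known correspondence between vectors in $\Fm^{n}$ and matrices in $\Fq^{n\times m}$ via basis expansion.

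First, I would observe that for each $r \in [t]$, the map $\Gamma_r \colon \Fm^{n_r} \to \Fq^{n_r \times m}$ is $\Fq$-linear: if $x_r, y_r \in \Fm^{n_r}$ and $\alpha \in \Fq$, then by uniqueness of coordinates with respect to the basis $\Gamma_r$ the $(i,j)$-entry of $\Gamma_r(x_r + \alpha y_r)$ coincides with $\Gamma_r(x_r)_{ij} + \alpha \Gamma_r(y_r)_{ij}$. Taking the direct sum over $r \in [t]$ yields the $\Fq$-linearity of $\Gamma$. Bijectivity then follows either by inverting the construction componentwise (given a matrix $X_r \in \Fq^{n_r \times m}$, the $r$-th block is reconstructed via $x_{r,i} = \sum_{j} (X_r)_{ij}\gamma_j^{(r)}$), or by comparing $\Fq$-dimensions, both of which equal $m\sum_{i=1}^t n_i$.

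Next, I would prove that $\Gamma$ preserves the weight, i.e.\ $\ww(x) = \ww_{\srk}(\Gamma(x))$ for every $x \in \Fm^\bfn$. Since
\[
\ww(x) = \sum_{r=1}^t \rk_q(x_r) \qquad \text{and} \qquad \ww_{\srk}(\Gamma(x)) = \sum_{r=1}^t \rk(\Gamma_r(x_r)),
\]
it suffices to show that $\rk_q(x_r) = \rk(\Gamma_r(x_r))$ for each $r$. Writing $x_r = (x_{r,1},\ldots,x_{r,n_r})$, the defining identity $x_{r,i} = \sum_{j=1}^m \Gamma_r(x_r)_{ij}\gamma_j^{(r)}$ says that the $i$-th row of the matrix $\Gamma_r(x_r)$ is precisely the coordinate vector of $x_{r,i}$ with respect to the basis $\Gamma_r$. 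Hence the $\Fq$-linear isomorphism $\Fm \to \Fq^m$ given by this basis sends $\langle x_{r,1},\ldots,x_{r,n_r}\rangle_{\Fq}$ bijectively onto the row space of $\Gamma_r(x_r)$. Therefore $\rk_q(x_r)$ equals the row rank of $\Gamma_r(x_r)$, which in turn equals its column rank $\rk(\Gamma_r(x_r))$.

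Applying the weight-preservation property to the difference $x - y$ and using $\Fq$-linearity yields $\dd(x,y) = \dsrk(\Gamma(x),\Gamma(y))$, which together with bijectivity completes the proof that $\Gamma$ is an $\Fq$-linear isometry. No step is really an obstacle here: the only subtle point is remembering that the basis expansion puts coordinates along rows, so one must invoke the equality of row and column rank to conclude; everything else is immediate from the definitions.
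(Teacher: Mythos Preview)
Your proof is correct and is exactly the standard verification one expects; the paper itself does not give a proof, stating only that the result is straightforward. Your argument supplies precisely the routine details (blockwise linearity, dimension count for bijectivity, and the observation that the rows of $\Gamma_r(x_r)$ are the coordinate vectors of the entries of $x_r$, so row rank equals $\rk_q(x_r)$), and nothing more is needed.
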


\begin{definition}
The \textbf{matrix sum-rank metric code associated} to an $\Fmnk$ code $\C$
with respect to $\Gamma\coloneqq(\Gamma_1,\ldots,\Gamma_t)$ is
$$\Gamma(\C) \coloneqq \{\Gamma (x) \st x \in \C\} \subseteq \spac.$$
\end{definition}

By Theorem \ref{thm:isometry_vector_matrix}, all the metric properties  of an $\Fmnk$ code $\C$ are preserved via the isometry $\Gamma$, and can be found in the associated matrix code $\Gamma(\C)$. However, one has to be careful, because some of the properties might depend on the choice of the $t$-uple of the $\Fq$-bases $\Gamma$. 
It is well-known that for the rank-metric case the notion of support does not depend on the choice of the $\Fq$-basis $\Lambda$ of $\Fm$; see e.g. \cite[Proposition 2.1]{alfarano2021linear}. Thus, it is not surprising that the same happens for the sum-rank support of an element $x\in \Fm^\bfn$. In other words, we have the following result, whose proof is left to the reader.

\begin{proposition}
 Let $\C$ be an $\Fmnk$ code,  let $\Gamma=(\Gamma_1,\ldots,\Gamma_t),  \Lambda=(\Lambda_1,\ldots,\Lambda_t)$ be two tuples of $\Fq$-bases of $\Fm$ and let $x \in \Fm^\bfn$. Then
 \begin{enumerate}
     \item $\supp(\Gamma(x))=\supp(\Lambda(x))$,
     \item $\rho(\Gamma(x))=\rho(\Lambda(x))$, and
     \item $\mu(\Gamma(x))=\mu(\Lambda(x))$.
 \end{enumerate}
\end{proposition}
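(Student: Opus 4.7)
The plan is to reduce the proposition to a block-wise statement and then show that, within each block, changing the $\Fq$-basis of $\Fm$ amounts to right-multiplying the matrix expansion by an element of $\GL(m,\Fq)$, an operation that preserves the column space.

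First, I observe that all three quantities in play, namely $\supp$, $\rho$, and $\mu$, are built block-wise from the column spaces $\colsp(\Gamma_r(x_r))$ for $r \in [t]$: the support is the direct sum of these column spaces, the rank-list is their $t$-tuple of dimensions, and the rank-profile is the non-increasing rearrangement of that tuple. Hence it suffices to prove, for every fixed $r \in [t]$, the identity $\colsp(\Gamma_r(x_r)) = \colsp(\Lambda_r(x_r))$ inside $\Fq^{n_r}$. From this single fact, statement (1) is obtained by concatenating over $r$, statement (2) is an immediate consequence (equal spaces have equal dimensions), and statement (3) follows because $\mu$ is a function of $\rho$ alone.

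Next, I fix $r \in [t]$ and introduce the change-of-basis matrix $Q_r=(Q_{jk}) \in \GL(m,\Fq)$ defined by $\lambda_k^{(r)} = \sum_{j=1}^m Q_{jk}\, \gamma_j^{(r)}$. Substituting this into the defining relation $x_{r,i} = \sum_k \Lambda_r(x_r)_{ik}\, \lambda_k^{(r)}$ and comparing coefficients with $x_{r,i} = \sum_j \Gamma_r(x_r)_{ij}\, \gamma_j^{(r)}$, using the fact that $\Gamma_r$ is a basis, yields the matrix identity
\[
\Gamma_r(x_r) = \Lambda_r(x_r) \cdot Q_r^{T}.
\]
Since $Q_r^{T}$ lies in $\GL(m,\Fq)$, the columns of $\Gamma_r(x_r)$ are $\Fq$-linear combinations of those of $\Lambda_r(x_r)$ and vice versa; therefore the two matrices share the same column space in $\Fq^{n_r}$, as required.

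I do not anticipate any genuine obstacle here: the argument is essentially a bookkeeping exercise. The only subtlety is the direction of the change-of-basis relation (in particular, whether one obtains $Q_r$ or its transpose on the right), which has to be handled by writing the substitution out carefully once; everything else is immediate from the block-wise structure of the map $\Gamma$ and the invariance of the column space under right multiplication by an invertible matrix.
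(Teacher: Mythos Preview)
Your proof is correct. The paper does not actually supply a proof of this proposition---it explicitly leaves it to the reader, pointing to the rank-metric case \cite[Proposition 2.1]{alfarano2021linear} as the model---and your block-wise argument via the change-of-basis identity $\Gamma_r(x_r)=\Lambda_r(x_r)Q_r^{T}$ is precisely the intended one.
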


Hence, we can give the following definitions of sum-rank support, rank-list and rank-profile for elements in $\Fm^\bfn$. For a deeper understanding of these notions we refer the reader to \cite{martinez2019theory}.

\begin{definition}
The \textbf{sum-rank support} of an element $x=(x_1, \ldots, x_t) \in \Fm^{\bfn}$ is the tuple
$$\supp_{\bfn}(x)\coloneqq\supp(\Gamma(x)),$$
for any (and hence all) choice of $\Gamma=(\Gamma_1,\ldots, \Gamma_t)$, where $\Gamma_i$ is an $\Fq$-basis of $\Fm$ for each $i \in[t]$.
The \textbf{rank-list} of $x=(x_1, \ldots, x_t)$ is the $t$-uple $\rho(x)=(\rk_q(x_1), \ldots,\rk_q(x_t))$. The \textbf{rank-profile} of $x$ is the $t$-uple obtained from $\rho(x)$ by rearranging the entries in non-increasing order and it is denoted by $\mu(x)$.
\end{definition}

In the same way, we can extend the notions of sum-rank support and degeneracy for an $\Fmk$ code.
\begin{definition}
 We define the \textbf{sum-rank support of an $\Fmnk$ code} $\C$ as the $\F_q$-span of the supports of the codewords of $\C$ and we denote it by $\supp(\C)$, i.e.
$$ \supp_{\bfn}(\C)\coloneqq\sum_{c\in\C}\supp_{\bfn}(c).$$
Furthermore, we say  that $\C$ is \textbf{sum-rank nondegenerate} if $\supp(\C)=\Fq^\bfn$. We say that $\C$ is \textbf{sum-rank degenerate} if it is not sum-rank nondegenerate.
\end{definition}

\begin{proposition}\label{prop:support_isometry}
 Let $x=(x_1,\ldots,x_t)\in \Fm^\bfn$, and let $A=(A_1,\ldots, A_t)\in \GL(\bfn,\Fq)$. Then
  $$\supp_\bfn(x\cdot A)=\supp_\bfn(x)\cdot A.$$
  In particular, if $\C_1$ and $\C_2$ are two equivalent $\Fmnk$ codes such that $$\supp_{\bfn}(\C_1)=(\mathcal W_1,\ldots \mathcal W_t), \qquad \supp_{\bfn}(\C_2)=(\mathcal Z_1,\ldots \mathcal Z_t),$$ 
  then there exists $\pi \in \mathcal S_{\lambda(\bfn)}$ and $A \in \GL(\bfn, \Fq)$ such that 
  $Z_{i}= A_i^\top \cdot \mathcal W_{\pi(i)},$ for each $i \in [t]$. In other words, we have
   $\supp_\bfn(\C_2)=\supp_\bfn(\C_1) \cdot (A,\pi)$.
\end{proposition}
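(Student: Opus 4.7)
The plan is to reduce the statement to a direct computation at the level of matrix representations and then propagate from codewords to codes by linearity of the support.

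First I would prove the pointwise claim $\supp_\bfn(x\cdot A)=\supp_\bfn(x)\cdot A$. Fix any tuple of $\Fq$-bases $\Gamma=(\Gamma_1,\ldots,\Gamma_t)$ of $\Fm$ so that $\supp_{n_i}(x_i)=\colsp(\Gamma_i(x_i))$. Since each $A_i\in\GL(n_i,\Fq)$, right multiplication of the row vector $x_i\in\Fm^{n_i}$ by $A_i$ translates, on the coordinate matrix, into left multiplication by $A_i^{\top}$; namely $\Gamma_i(x_iA_i)=A_i^{\top}\,\Gamma_i(x_i)$. Because $A_i^{\top}$ is invertible over $\Fq$, it maps the column space of $\Gamma_i(x_i)$ bijectively onto the column space of $A_i^{\top}\Gamma_i(x_i)$, giving $\colsp(\Gamma_i(x_iA_i))=A_i^{\top}\colsp(\Gamma_i(x_i))$. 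This is exactly the componentwise action that defines $\supp_\bfn(x)\cdot A$.

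For the second part, use the classification of $\Fm$-linear isometries recalled before the proposition: the equivalence between $\C_1$ and $\C_2$ is realized by some $\phi$ corresponding to $(\mathbf a,A,\pi)\in((\Fm^{\ast})^{t}\times\GL(\bfn,\Fq))\rtimes\mathcal S_{\lambda(\bfn)}$, so for each $x\in\C_1$ we have $\phi(x)=(a_1x_{\pi(1)}A_1,\ldots,a_tx_{\pi(t)}A_t)$. The key observation is that the scalar part $\mathbf a$ does not affect the sum-rank support: scaling $x_i$ by $a_i\in\Fm^{\ast}$ amounts to right multiplication of the coordinate matrix $\Gamma_i(x_i)$ by the invertible $\Fq$-matrix $M_{a_i}$ that represents multiplication by $a_i$ in the basis $\Gamma_i$, and right multiplication by an invertible matrix leaves the column space unchanged. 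The block permutation is legal because $\pi\in\mathcal S_{\lambda(\bfn)}$ forces $n_i=n_{\pi(i)}$, so supports are reshuffled among blocks of matching size. Combining these observations with the first step applied to the $A$-component yields, for every $x\in\C_1$,
\[
\supp_\bfn(\phi(x))=\bigl(A_1^{\top}\supp_{n_{\pi(1)}}(x_{\pi(1)}),\ldots,A_t^{\top}\supp_{n_{\pi(t)}}(x_{\pi(t)})\bigr)=\supp_\bfn(x)\cdot (A,\pi).
\]

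To conclude I would upgrade from codewords to codes using the definition of $\supp_\bfn(\C)$ as an $\Fq$-span of codeword supports. Since the action $\cdot(A,\pi)$ is componentwise invertible $\Fq$-linear (preceded by a permutation of coordinates), it commutes with the operation of taking sums of $\Fq$-subspaces:
\[
\supp_\bfn(\C_2)=\sum_{y\in\C_2}\supp_\bfn(y)=\sum_{x\in\C_1}\supp_\bfn(\phi(x))=\Bigl(\sum_{x\in\C_1}\supp_\bfn(x)\Bigr)\cdot(A,\pi)=\supp_\bfn(\C_1)\cdot(A,\pi),
\]
and reading off the $i$-th entry gives $\mathcal Z_i=A_i^{\top}\mathcal W_{\pi(i)}$ as stated. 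The only genuinely delicate point in the whole argument is the column-space invariance under multiplication by $a_i\in\Fm^{\ast}$; everything else is careful bookkeeping with the matrix representation and the permutation $\pi$.
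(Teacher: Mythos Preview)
The paper states this proposition without proof, treating it as a routine verification left to the reader. Your argument correctly supplies the details: the identity $\Gamma_i(x_iA_i)=A_i^{\top}\Gamma_i(x_i)$ is the right computation for the first claim, the handling of the scalar $\mathbf a$ via right multiplication by the invertible $\Fq$-matrix representing multiplication-by-$a_i$ is exactly how one shows the support is unaffected, and the passage from codewords to codes by $\Fq$-linearity of the action is sound. This is essentially the canonical argument and there is nothing to compare against.
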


\begin{proposition}\label{prop:charact_nondegenarate}
Let $\C$ be a $\Fmnk$ code.  The following are equivalent.
\begin{enumerate}
    \item $\C$ is sum-rank nondegenerate.
    \item For every $A=(A_1, \ldots,A_t) \in \GL(\bfn,q)$ the code $\C\cdot A$ is Hamming-nondegenerate.
    \item For any generator matrix $G=(G_1 \,|\, \ldots \,|\, G_t)$ with $G_i \in \Fm^{k \times n_i}$ for $i\in[t]$, the columns of each $G_i$ are $\Fq$-linear independent.
    \item $\dd(\C^\perp)>1$.
\end{enumerate}
\end{proposition}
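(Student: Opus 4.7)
The plan is to prove the cycle $(1) \iff (3) \iff (2)$ and separately $(1) \iff (4)$, after first decoupling everything block-by-block. Since $\supp_\bfn(\C) = (\mathcal V_1,\ldots,\mathcal V_t)$ with $\mathcal V_i := \sum_{c\in\C}\colsp(\Gamma_i(c_i)) \subseteq \Fq^{n_i}$, nondegeneracy amounts to $\mathcal V_i = \Fq^{n_i}$ for every $i$, and the remaining conditions likewise split across blocks. The key lemma I would establish first is the identification
\[
\mathcal V_i^{\perp} \;=\; \bigl\{\alpha\in\Fq^{n_i}\,:\,\textstyle\sum_{j=1}^{n_i}\alpha_j\,c_{i,j}=0\ \text{for all}\ c=(c_1,\ldots,c_t)\in\C\bigr\},
\]
where $\perp$ denotes the standard $\Fq$-pairing on $\Fq^{n_i}$. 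This holds because $\Gamma_i(c_i)_{j\ell}$ is the $\ell$-th $\Fq$-coordinate of $c_{i,j}$, so the dot product of $\alpha$ with the $\ell$-th column of $\Gamma_i(c_i)$ equals the $\ell$-th $\Fq$-coordinate of $\sum_j\alpha_j\,c_{i,j}$ (the $\alpha_j$ are in $\Fq$, hence commute past the coordinate map); these coordinates all vanish for every $\ell$ iff the element $\sum_j\alpha_j\,c_{i,j}\in\Fm$ is zero.

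Granting the lemma, $(1)\iff(3)$ is immediate: fixing any generator matrix $G=(G_1\mid\cdots\mid G_t)$, the condition $\sum_j\alpha_j\,c_{i,j}=0$ for all $c\in\C$ is equivalent to $G_i\alpha^\top=0$, so $\mathcal V_i^{\perp}$ is the $\Fq$-kernel of $G_i$, and this vanishes iff the columns of $G_i$ are $\Fq$-linearly independent. For $(1)\iff(4)$, I would observe that a sum-rank weight-one element $y\in\Fm^\bfn$ must be supported on exactly one block, with $y_i = \lambda w$ for $\lambda\in\Fm^\ast$ and $w\in\Fq^{n_i}\setminus\{0\}$. The standard inner-product condition $y\in\C^\perp$ then collapses to $\lambda\sum_j w_j c_{i,j}=0$ for all $c\in\C$, i.e.\ $w\in\mathcal V_i^{\perp}\setminus\{0\}$. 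Hence a weight-one element exists in $\C^\perp$ iff some $\mathcal V_i\neq\Fq^{n_i}$, i.e.\ $\C$ is sum-rank degenerate.

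For $(3)\iff(2)$, I use that a generator matrix of $\C\cdot A$ is $GA$, whose $i$-th block is $G_iA_i$; the $j$-th column of $G_iA_i$ is the $\Fq$-linear combination of the columns of $G_i$ with coefficients given by the $j$-th column of $A_i$. If $(3)$ holds, then every column of $A_i\in\GL(n_i,\Fq)$ is a nonzero vector in $\Fq^{n_i}$, so each such combination is nontrivial and hence nonzero by $\Fq$-independence, proving Hamming-nondegeneracy of $\C\cdot A$ for all $A$. Conversely, if the columns of some $G_i$ satisfy a nontrivial relation $G_i\alpha=0$ with $\alpha\in\Fq^{n_i}\setminus\{0\}$, extend $\alpha$ to an $\Fq$-basis of $\Fq^{n_i}$ and let $A_i$ be the matrix with $\alpha$ as first column; then $G_iA_i$ has a zero column, violating $(2)$. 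The main obstacle I expect is the careful bookkeeping in the key lemma, since one has to track the interaction between the choice of basis $\Gamma_i$, the column/row indexing convention for $\Gamma_i(c_i)$, and the standard $\Fq$-pairing used in the annihilator; once that identification is in place, everything else reduces to standard linear algebra over $\Fq$ and $\Fm$.
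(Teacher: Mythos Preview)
Your proof is correct. The paper argues via a single cycle $(1)\Rightarrow(2)\Rightarrow(3)\Rightarrow(4)\Rightarrow(1)$, each step by contrapositive, whereas you organize everything around the structural lemma identifying $\mathcal V_i^{\perp}$ with the $\Fq$-right-kernel of the block $G_i$. The two routes share the same ingredients for $(2)\Leftrightarrow(3)$ (extending a dependence to an invertible $A_i$) and for the link between weight-one dual codewords and $\Fq$-relations among columns; the difference is mainly organizational. Your annihilator lemma makes the block-by-block decoupling explicit and lets you handle $(1)\Leftrightarrow(4)$ directly, without the ``up to sum-rank isometry'' reduction the paper invokes in its $(4)\Rightarrow(1)$ step; the paper's cycle, on the other hand, is marginally shorter since it only needs four one-way implications and no auxiliary lemma.
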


\begin{proof}
\underline{$(1) \Rightarrow (2)$:} Assume  that there exists $A=(A_1, \ldots,A_t) \in \GL(\bfn,q)$ the code $\C\cdot A$ is Hamming-degenerate. This is equivalent to say that one entry (say the last one) of $\C\cdot A$ is identically zero. This implies that 
$$\supp(\cC\cdot A)\subseteq \F_q^{n_1} \oplus \ldots \oplus  \F_q^{n_{t-1}}\oplus \cV,$$
where $\cV = \{v \in \F_q^{n_t} \st v_{n_t}=0\}$. 
Hence, by Proposition \ref{prop:support_isometry} we have 
$$\supp(\cC) \subseteq \F_q^{n_1} \oplus \ldots \oplus  \F_q^{n_{t-1}}\oplus (A_t^\top)^{-1}\cdot \cV\subsetneq \Fq^\bfn,$$
meaning that $\C$ is sum-rank degenerate.

\underline{$(2) \Rightarrow (3)$:} Suppose that there exists an $i\in [t]$ such that the columns of $G_i$ are $\Fq$-linearly dependent. Then, there exists $A_i\in \GL(n_i,\Fq)$ such that $G_i A_i$ has a zero column. Thus, for any other choices of $A_r \in \GL(n_r, \Fq)$ for $r\neq i$, we have that the code $\C \cdot A$ is Hamming-degenerate.

\underline{$(3) \Rightarrow (4)$:} Suppose that there exists a codeword $c=(c_1,\ldots,c_t) \in \C^\perp$ such that $\ww(c)=1$. This means that there exists $i \in [t]$, $\alpha \in \Fm^*$ and $(\lambda_1,\ldots,\lambda_{n_i})\in \Fq^{n_i}\setminus\{0\}$ such that  $c_j=0$ for all $j \neq i$ and $c_i=\alpha(\lambda_1,\ldots,\lambda_{n_i})$. Hence, $(\lambda_1,\ldots,\lambda_{n_i})$ provides a nontrivial $\Fq$-linear dependence among the columns of $G_i$, for any $G=(G_1 \,|\, \ldots \,|\, G_t)$ generator matrix for $\C$.

\underline{$(4) \Rightarrow (1)$:} Assume by contradiction that $\C$ is sum-rank degenerate. Then, up to sum-rank isometry, we can say that there exists $i\in [t]$ such that 
 $$\supp(\cC)\subseteq \F_q^{n_1} \oplus \ldots \oplus \cV \oplus \ldots \oplus \F_q^{n_t},$$
 where $\cV \subseteq \{v \in \F_q^{n_i} \st v_{n_i}=0\}$. This implies that $(0, \ldots, e_{n_i}, \ldots 0)\in \C^\perp$, where $e_{n_i}$ denotes the $n_i$-th standard basis vector. Hence $\dd(\C^\perp)=1$.
\end{proof}

\section{Geometry of sum-rank metric codes}\label{sec:3}

It is well-known that equivalence classes of Hamming-metric nondegenerate codes are in one-to-one correspondence with equivalence classes of projective systems; see \cite[Theorem 1.1.6]{tsfasman1991algebraic}.  This connection has  been intensively used to get classification results and intriguing constructions in both the areas of coding theory and finite geometry. Recently, in \cite{sheekey2019scatterd,Randrianarisoa2020ageometric} it has been shown that equivalence classes of nondegenerate rank-metric codes are in one-to-one correspondence with equivalence classes of $q$-systems, where the latter constitute the $q$-analogue of projective systems; see also \cite{alfarano2021linear}. At this point, it is natural to ask whether it is possible to construct geometric objects able to capture the structure of sum-rank metric codes which generalize both projective systems and $q$-systems. In the following, we focus on this question.

\medskip

Let $G=(G_1 \,|\, \ldots \,|\, G_t)$, where $G_i \in \Fm^{k\times n_i}$. For each $i\in[t]$, define $\mathcal{U}_i$ to be the $\Fq$-span of the columns of $G_i$ 
and  the maps
\[
\begin{array}{rccl}
\psi_{G_i}:& \fq^{n_i}& \longrightarrow &\mathcal{U}_i \\
&\lambda & \longmapsto & \lambda G_i^\top,\end{array}\]
\[
\begin{array}{rccl}
\psi_{G}:& \fq^{\bfn}& \longrightarrow & \mathcal{U} \\
&(\lambda_1,\ldots,\lambda_t) & \longmapsto & (\psi_{G_1}(\lambda_1),\ldots, \psi_{G_t}(\lambda_t)).\end{array}\]

\begin{theorem}\label{th:connection}
Let $\C$ be a non-degenerate $\Fmnkd$ code with generator matrix $G=(G_1|\ldots|G_t)$.
Let $\mathcal{U}_i$ be the $\F_q$-span of the columns of $G_i$, for $i\in [t]$. Then, for every $v\in \Fm^k$ and $i \in [t]$ we have
$$\psi_{G_i}^{-1}(\mathcal{U}_i \cap v^\perp)=\supp_{n_i}(vG_i)^\perp.$$
In particular, 
$$\supp_\bfn(vG)^\perp=\psi_G^{-1}((\mathcal{U}_1 \cap v^\perp),\ldots,(\mathcal{U}_t \cap v^\perp))$$
and 
\begin{equation}\label{eq:weight_dimension}
\ww(v G) = N - \sum_{i=1}^t \dim_{\fq}(\mathcal{U}_i \cap v^{\perp}).\end{equation} 
\end{theorem}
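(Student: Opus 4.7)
The plan is to unpack the definition of the sum-rank support of a single vector and relate its orthogonal complement to the kernel of an evaluation map, from which both claims follow by straightforward linear algebra.

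First I would establish the following general identity: for any $x=(x_1,\ldots,x_n)\in\Fm^n$, one has
\[
\supp_n(x)^\perp \;=\; \{\lambda\in\Fq^n \st \lambda_1 x_1+\cdots+\lambda_n x_n=0\}.
\]
This is essentially just a rewriting of the definition of support via a matrix expansion $\Gamma(x)\in\Fq^{n\times m}$: writing $x_j=\sum_k \Gamma(x)_{jk}\gamma_k$, a vector $\lambda\in\Fq^n$ is orthogonal to every column of $\Gamma(x)$ if and only if $\sum_j \lambda_j x_j=0$, since the $\gamma_k$ are $\Fq$-linearly independent.

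Next I would apply this identity with $x=vG_i$. The $j$-th entry of $vG_i$ is $\langle v,g_j^{(i)}\rangle$ where $g_j^{(i)}$ denotes the $j$-th column of $G_i$. Hence
\[
\sum_{j=1}^{n_i}\lambda_j (vG_i)_j = \Bigl\langle v,\sum_{j=1}^{n_i}\lambda_j g_j^{(i)}\Bigr\rangle = \langle v,\psi_{G_i}(\lambda)\rangle.
\]
Since $\psi_{G_i}(\lambda)$ always lies in $\mathcal{U}_i$, this equals zero precisely when $\psi_{G_i}(\lambda)\in\mathcal{U}_i\cap v^\perp$, giving
\[
\supp_{n_i}(vG_i)^\perp \;=\; \psi_{G_i}^{-1}(\mathcal{U}_i\cap v^\perp),
\]
which is the first claim. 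Running this block by block gives the stated description of $\supp_\bfn(vG)^\perp$.

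For the weight formula I would first invoke nondegeneracy: by Proposition \ref{prop:charact_nondegenarate}, for each $i$ the columns of $G_i$ are $\Fq$-linearly independent, so $\psi_{G_i}:\Fq^{n_i}\to \mathcal{U}_i$ is an $\Fq$-isomorphism. Hence
\[
\dim_{\Fq}\psi_{G_i}^{-1}(\mathcal{U}_i\cap v^\perp) = \dim_{\Fq}(\mathcal{U}_i\cap v^\perp),
\]
and combining with the identity above and $\rk_q(vG_i)=\dim_{\Fq}\supp_{n_i}(vG_i)=n_i-\dim_{\Fq}\supp_{n_i}(vG_i)^\perp$, summation over $i\in[t]$ yields \eqref{eq:weight_dimension}. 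There is no real obstacle here; the only point requiring care is making the conventions explicit (which way the matrix expansion goes, and which inner product is used) so that the identity in the first step is stated correctly.
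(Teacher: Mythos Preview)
Your argument is correct and is genuinely different from the paper's. The paper proceeds by choosing an adapted $\Fq$-basis of $\mathcal U_i$ extending a basis of $\mathcal U_i\cap v^\perp$, passing to the matrix $G_iB$ for the corresponding change-of-basis $B\in\GL(n_i,q)$, computing $\supp_{n_i}(vG_iB)$ explicitly as a coordinate subspace via a suitable $\Fq$-basis $\Gamma$ of $\Fm$, and then transporting back using Proposition~\ref{prop:support_isometry}. Your route instead isolates the clean identity $\supp_n(x)^\perp=\{\lambda\in\Fq^n : \sum_j\lambda_j x_j=0\}$ once and for all and reads off the result directly, bypassing the basis-change computation and the appeal to Proposition~\ref{prop:support_isometry}. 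This makes the proof shorter and more transparent; the trade-off is that the paper's approach yields, as a by-product, an explicit diagonalisation of $\Gamma(vG_iB)$, which is not needed for the statement but gives a concrete picture of what the support looks like after a normalising isometry. Note also that your first displayed identity and the resulting equality $\psi_{G_i}^{-1}(\mathcal U_i\cap v^\perp)=\supp_{n_i}(vG_i)^\perp$ hold without any nondegeneracy assumption; only the dimension count in the last step requires $\psi_{G_i}$ to be injective.
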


\begin{proof}
By assumption, $\C$ is non-degenerate and hence by Proposition  \ref{prop:charact_nondegenarate} we have $\dim_{\fq} \mathcal{U}_i=n_i$ for each $i \in [t]$.
Thus, $\psi_{G_i}$ is an  $\Fq$-isomorphism.

Let us now prove the first part of the statement for $i=1$ and let us write $\mathcal{U}=\mathcal{U}_1, G=G_1$ and $n=n_1$.
Let $r\coloneqq n-\dim_{\Fq}(\mathcal{U}\cap v^\perp)$ and let $(g_{r+1}',\ldots, g_{n}')$ be an $\Fq$-basis of $\mathcal{U}\cap v^\perp$ written as column vectors. We can complete it to an $\Fq$-basis $(g_1',\ldots,g_n')$ of $\mathcal{U}$. Let $B\in \GL(n,\Fq)$ be such that $GB=(g_1' \mid \ldots \mid g_n')$. With this choice, we have that
$$ vGB=(x_1,\ldots, x_r, 0, \ldots, 0)$$
and the $x_i$'s must be $\Fq$-linearly independent. Now, we can complete $x_1,\ldots,x_r$ to an $\Fq$-basis $\Gamma\coloneqq (x_1,\ldots, x_m)$ of $\Fm$. It is clear that 
$$\Gamma(vGB)=\begin{pmatrix} I_r & 0 \\
0 & 0
\end{pmatrix}
$$
and hence $\supp_n(vGB)=\langle e_1,\ldots,e_r\rangle_{\Fq}\subseteq \Fq^n$.
Furthermore, by Proposition \ref{prop:support_isometry}, we have that $\supp_n(vG)=\langle e_1,\ldots,e_r\rangle_{\Fq}\cdot B^{-1}$.

On the other hand, by the choice of $B$, we have that $\psi_{GB}^{-1}(\mathcal{U} \cap v^\perp)=\langle e_{r+1},\ldots,e_n\rangle_{\fq}$. Now, for every $\lambda \in \Fq^n$ we have that 
$\psi_{GB}(\lambda)=\lambda(GB)^\top=(\lambda B^\top)G^\top=\psi_{G}(\lambda B^\top)$, and hence $\psi_{GB}^{-1}(y)=\psi_{G}^{-1}(y)(B^\top)^{-1}$ for every $y\in \mathcal{U}$. Thus
$\psi_{G}^{-1}(\mathcal{U} \cap v^\perp)=\psi_{GB}^{-1}(\mathcal{U} \cap v^\perp)\cdot B^\top =\langle e_{r+1},\ldots,e_n \rangle_{\Fq}\cdot B^\top$, proving the first part of the statement.
The second part of the statement immediately follows.
\end{proof}

\begin{remark}
 Note that when we specialize to the Hamming metric, the result obtained is well-known and can be found for instance in \cite{tsfasman1991algebraic}. In the case of rank-metric codes, the result in Theorem \ref{th:connection} is  stronger than the ones described in \cite{sheekey2019scatterd,Randrianarisoa2020ageometric,alfarano2021linear}. Indeed, in these papers the authors only shown \eqref{eq:weight_dimension}, while here we provide a deeper duality connection between the supports of codewords and the associated hyperplanes sections. 
 
 Actually, evidence of the existence of a duality can  already be found in \cite[Theorem 5.6]{alfarano2021linear}. However, the duality was not explicitly derived, and this result immediately follows now from Theorem \ref{th:connection}. 
\end{remark}

\begin{definition}
An \textbf{$\Fmnkd$ system} $\mathcal{U}$ is an ordered set $(\mathcal{U}_1,\ldots,\mathcal{U}_t)$, where, for any $i\in [t]$, $\mathcal{U}_i$ is an $\F_q$-subspace of $\F_{q^m}^k$ of dimension $n_i$, such that
$ \langle \mathcal{U}_1, \ldots, \mathcal{U}_t \rangle_{\F_{q^m}}=\F_{q^m}^k$ and 
$$ d=N-\max\left\{\sum_{i=1}^t\dim_{\F_q}(\mathcal{U}_i\cap \mathcal H) \st \mathcal H \textnormal{ is an $\F_{q^m}$-hyperplane of }\F_{q^m}^k\right\}.$$
\end{definition}

\begin{definition}
 Let $\mU=(\mathcal{U}_1,\ldots,\mathcal{U}_t)$ be  an $\Fmnkd$ system. The \textbf{subspace section} of an $\Fm$-subspace $H\subseteq \Fm^k$ in $\mU$ is the $t$-uple
 $$\chi_{\mU}(H)\coloneqq (\mathcal{U}_1\cap H,\ldots, \mathcal{U}_t\cap H) $$
 and the \textbf{dimension-list} of $H$ in $\mU$ is the associated $t$-uple of the $\Fq$-dimensions, that is
 $\nu_{\mU}(H)\coloneqq (\dim_{\fq}(\mathcal{U}_1\cap H),\ldots,\dim_{\fq}(\mathcal{U}_t\cap H))$.
 The \textbf{dimension-profile} of $H$ in $\mU$ is the $t$-uple obtained from $\nu_{\mU}(H)$ by rearranging the entries in non-increasing order.
\end{definition}

\begin{definition}\label{def:equiv_systems}
Two $\Fmnkd$ systems $(\mathcal{U}_1,\ldots,\mathcal{U}_t)$ and $(\mathcal{V}_1,\ldots, \mathcal{V}_t)$ are \textbf{equivalent} if there exists an isomorphism $\varphi\in\GL(k,\F_{q^m})$, an element $\mathbf{a}=(a_1,\ldots,a_t)\in (\Fm^*)^t$ and a permutation $\sigma\in\mathcal{S}_t$, such that for every $i\in[t]$
$$ \varphi(\mathcal{U}_i) = a_i\mathcal{V}_{\sigma(i)}.$$
We denote the set of equivalence classes of $\Fmnkd$ systems by $\mathfrak{U}\Fmnkd$.
\end{definition}

\begin{remark}
Observe that when $t=1$, the notion of $\Fmnkd$ system coincides with the notion of $[n,k,d]_{q^m/q}$ system (or $q$-system) introduced in \cite{Randrianarisoa2020ageometric}; see also \cite{alfarano2021linear}. The case when $n_1=\ldots=n_t=1$ is a bit more complicated, but we see now that it coincides  with the notion of projective systems only when considering the equivalence classes. Indeed, a $[\mathbf{1},k,d]_{q^m/q}$ system is not a projective $[t,k,d]_{q^m}$ system as defined for instance in \cite{tsfasman1991algebraic}. The difference is that projective systems are sets of points in $\PG(k-1,q^m)$ (or equivalently $1$-dimensional $\Fm$-subspaces of $\Fm^k$), while  $[\mathbf{1},k,d]_{q^m/q}$ systems are \textit{ordered} sets of $1$-dimensional $\Fq$-subspaces of $\Fm^k$. However, when considering the equivalence classes, they coincide, because we identify $\Fm$-multiples and we take into account the action of the symmetric group $\mathcal S_t$.
In other words, $\mathfrak U[\mathbf{1},k,d]_{q^m/q}$ can be identified with the set of equivalence classes of projective $[t,k,d]_{q^m}$ systems.
\end{remark}

Now  we  describe  the  1-to-1  correspondence  between  equivalence  classes  of  sum-rank nondegenerate $\Fmnkd$ code and equivalence classes of $\Fmnkd$ systems.

Consider the following maps
\begin{align*}
    \Psi :  \mathfrak{C}\Fmnkd &\to\mathfrak{U}\Fmnkd \\
    \Phi : \mathfrak{U}\Fmnkd &\to \mathfrak{C}\Fmnkd.
\end{align*}
Let $[\C]\in\mathfrak{C}\Fmnkd$ where $\C$ is a nondegenerate $\Fmnk$ code. Let $G=(G_1 \,\mid\, \ldots \,\mid\, G_t)$ be a generator matrix for $\cC$. Define $\Psi([\C])$ as the equivalence class of $\Fmnkd$ systems $[\mathcal{U}]$, where $\mathcal{U}=(\mathcal{U}_1,\ldots,\mathcal{U}_t)$ and $\mathcal{U}_i$ is the $\F_q$-span of the columns of $G_i$ for every $i\in[t]$. Viceversa, given $[(\mathcal{U}_1,\ldots,\mathcal{U}_t)]\in\mathfrak{U}\Fmnkd$, for every $i\in[t]$, fix an $\F_q$-basis $\{g_{i,1}, \ldots, g_{i,n_i}\}$ of $\mathcal{U}_i$. Define $G_i$ as the matrix whose columns are $\{g_{i,1}, \ldots, g_{i,n_i}\}$ and let $\Phi([(\mathcal{U}_1,\ldots,\mathcal{U}_t)])$ be the equivalence class of the sum-rank metric code generated by $G=(G_1 \,\mid\, \ldots \,\mid\, G_t).$

We can actually see that these two maps are well-defined. This is due to the two definitions of equivalence for $\Fmnkd$ codes (Definition \ref{def:equiv_codes}) and $\Fmnkd$ systems (Definition \ref{def:equiv_systems}). First, if we choose two different generator matrices for the same $\Fmnkd$ code $\C$, then it is clear that the two $\Fmnkd$ systems spanned by the columns of these matrices are equivalent, since one matrix is obtained from the other by left multiplication of an element in $\GL(k,q^m)$. Moreover, it is also immediate to see that if two $\Fmnkd$ codes $\C, \C'$ are equivalent, then the systems associated to any generator matrices $G$ and $G'$ are equivalent, showing that the map $\Psi$ is well-defined. A similar argument can be used to prove that also $\Phi$ is well-defined. 

Thus we can finally state the corresponding result between $\Fmnkd$ codes and $\Fmnkd$ systems. The proof is based on the above discussions and the details are left to the reader.

\begin{theorem}\label{thm:1-1corr}
$\Psi$ and $\Phi$ are well-defined and are the inverse of each other.
\end{theorem}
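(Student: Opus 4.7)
The plan is to split the statement into four sub-claims and tackle them in order: (a) $\Psi$ outputs a valid $\Fmnkd$ system; (b) $\Phi$ outputs a nondegenerate $\Fmnkd$ code; (c) both maps descend to the stated equivalence classes; and (d) the two maps are mutually inverse.

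For (a), I would start from a nondegenerate $\Fmnkd$ code $\C$ with generator matrix $G = (G_1 \mid \ldots \mid G_t)$ and set $\mathcal{U}_i$ to be the $\Fq$-column span of $G_i$. Proposition \ref{prop:charact_nondegenarate} guarantees that the columns of each $G_i$ are $\Fq$-linearly independent, so $\dim_{\Fq}\mathcal{U}_i = n_i$; and $\Fm$-linear independence of the rows of $G$ yields $\langle \mathcal{U}_1, \ldots, \mathcal{U}_t\rangle_{\Fm} = \Fm^k$. The minimum-distance clause in the definition of an $\Fmnkd$ system is then exactly \eqref{eq:weight_dimension} combined with the observation that $v \mapsto v^\perp$ realises every $\Fm$-hyperplane of $\Fm^k$ as $v$ ranges over the nonzero vectors. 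Claim (b) is symmetric: the spanning assumption ensures that $G$ has $\Fm$-rank $k$, the dimension equalities together with Proposition \ref{prop:charact_nondegenarate} give nondegeneracy, and Theorem \ref{th:connection} again delivers the correct minimum distance.

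For (c), two generator matrices of the same code $\C$ differ by $G' = MG$ for some $M \in \GL(k,\Fm)$, so the column spans transform via $\mathcal{U}'_i = M(\mathcal{U}_i)$; this is a system equivalence with $\varphi = M$, trivial scalars and trivial permutation. If instead $\C_2 = \C_1 \cdot (\mathbf{a}, A, \pi)$ for an isometry as in the action preceding Definition \ref{def:equiv_codes}, then a generator matrix of $\C_2$ has $i$-th block $a_i G_{\pi(i)} A_i$, whose $\Fq$-column span is $a_i \mathcal{U}_{\pi(i)}$; this is a system equivalence with $\varphi = \mathrm{id}$ and parameters determined by $(\mathbf{a}, \pi)$. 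The argument for $\Phi$ is strictly dual: a change of $\Fq$-basis inside each $\mathcal{U}_i$ is a right column operation by an element of $\GL(n_i, \Fq)$, and the ingredients $\varphi, \mathbf{a}, \sigma$ of system equivalence translate into a left multiplication on the generator matrix together with a sum-rank isometry. Finally, claim (d) is essentially tautological: in $\Psi \circ \Phi$ the $\Fq$-column span of the $i$-th block is, by construction, the basis we chose for $\mathcal{U}_i$; and in $\Phi \circ \Psi$ any basis of $\mathcal{U}_i$ yields a valid generator matrix of the original $\C$.

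The main obstacle, though essentially bookkeeping, is to write down the explicit dictionary between the three ingredients of sum-rank isometry ($\Fm^*$-scalar per block, $\GL(\bfn, \Fq)$ action per block, permutation in $\mathcal{S}_{\lambda(\bfn)}$) and the three ingredients of system equivalence ($\varphi \in \GL(k, \Fm)$, $\Fm^*$-scalars per block, permutation in $\mathcal{S}_t$, which is automatically forced to lie in $\mathcal{S}_{\lambda(\bfn)}$ by the dimension constraint $\dim_{\Fq}\varphi(\mathcal{U}_i) = \dim_{\Fq} a_i \mathcal{V}_{\sigma(i)}$). Once this dictionary is laid out, both the well-definedness of the maps and the mutual inverse property reduce to elementary linear algebra.
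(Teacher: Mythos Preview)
Your proposal is correct and follows essentially the same approach as the paper: the paper's proof consists only of the sentence ``The proof is based on the above discussions and the details are left to the reader,'' where the ``above discussions'' are precisely the well-definedness sketch in the paragraph preceding the theorem, which you have fleshed out carefully. Your explicit observation that the permutation $\sigma \in \mathcal{S}_t$ in Definition~\ref{def:equiv_systems} is forced into $\mathcal{S}_{\lambda(\bfn)}$ by the dimension constraint is a detail the paper leaves implicit but which is indeed needed to match the two notions of equivalence.
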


As a consequence, we can restate Theorem \ref{th:connection} in a more compact form, highlighting the duality identified by the sum-rank support of a codeword and the subspace section of an hyperplane as follows.

\begin{corollary}[Duality supports-sections]\label{cor:duality_supports_sections}
Let $\C$ be a nondegenerate $\Fmnk$ code and let $\mU \in \Psi([\C])$. Then, for every $G=(G_1 \,\mid\, \ldots \,\mid\, G_t)$ generator matrix of $\C$ and every $v\in \Fm^k$, we have 
$$\chi_{\mU}(v^\perp)=\psi_G(\supp_{\bfn}(vG)^\perp).$$
\end{corollary}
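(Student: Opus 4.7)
The corollary is essentially a repackaging of Theorem \ref{th:connection} using the notation $\chi_{\mU}$ and the map $\psi_G$, so the plan is very short. First I would verify that $\psi_G$ is actually an $\Fq$-isomorphism of $\Fq^\bfn$ onto $\mU=\mathcal{U}_1\oplus\cdots\oplus\mathcal{U}_t$: this is where non-degeneracy of $\C$ enters, since by Proposition \ref{prop:charact_nondegenarate} the columns of each block $G_i$ are $\Fq$-linearly independent, so $\psi_{G_i}$ is injective (hence bijective onto $\mathcal{U}_i$), and taking the direct sum gives that $\psi_G$ is a bijection.

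Next I would apply Theorem \ref{th:connection} directly. The ``in particular'' part of that theorem says
\[
\supp_\bfn(vG)^\perp=\psi_G^{-1}\bigl((\mathcal{U}_1\cap v^\perp),\ldots,(\mathcal{U}_t\cap v^\perp)\bigr).
\]
Since $\chi_{\mU}(v^\perp)=(\mathcal{U}_1\cap v^\perp,\ldots,\mathcal{U}_t\cap v^\perp)$ by definition, the right-hand side is precisely $\psi_G^{-1}(\chi_{\mU}(v^\perp))$.

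Finally I would apply $\psi_G$ to both sides. Because $\psi_G$ is a bijection (in particular on each block, an $\Fq$-isomorphism $\Fq^{n_i}\to\mathcal{U}_i$), we may cancel $\psi_G^{-1}$ to obtain
\[
\psi_G(\supp_\bfn(vG)^\perp)=\chi_{\mU}(v^\perp),
\]
which is the claimed identity. There is no real obstacle here: the only subtlety worth flagging is that $\mU$ is only well-defined up to the equivalence of $\Fmnkd$ systems (Definition \ref{def:equiv_systems}), but once a generator matrix $G$ is fixed the corresponding representative $\mU=(\mathcal{U}_1,\ldots,\mathcal{U}_t)$ with $\mathcal{U}_i$ the $\Fq$-span of the columns of $G_i$ is determined, and the identity holds at the level of this representative.
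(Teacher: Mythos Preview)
Your proposal is correct and matches the paper's approach: the corollary is stated there without proof, as an immediate restatement of Theorem \ref{th:connection} in the language of $\chi_{\mU}$ and $\psi_G$, and your argument spells out exactly this unpacking. Your remark about fixing the representative $\mU$ via the chosen generator matrix $G$ is a useful clarification of a point the paper leaves implicit.
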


\begin{remark}
If $n_1=\ldots=n_t=1$, up to the identification of $\mathfrak U[\mathbf{1},k,d]_{q^m/q}$ with the set of equivalence classes of projective $[t,k,d]_{q^m}$ systems, $\Phi$ and $\Psi$ define exactly the relation between equivalence classes of projective $[t,k,d]_{q^m}$ systems and equivalence classes of nondegenerate Hamming-metric codes, see \cite[Theorem 1.1.6]{tsfasman1991algebraic}. In this case, we denote by $\Phi^{\HH}$ and $\Psi^{\HH}$ the maps $\Phi$ and $\Psi$, respectively.
\end{remark}

Finally, from Theorem \ref{th:connection} and Theorem \ref{th:Singletonbound}, we can rephrase in geometric terms the MSRD property of a code.

\begin{corollary} \label{teo:designMSRD} 
Let $\C$ be a sum-rank nondegenerate $\Fmnkd$ code and let $[(\mathcal{U}_1,\ldots,\mathcal{U}_t)]=\Psi([\mathcal{C}])$. $\mathcal{C}$ is an MSRD if and only if 
\[
\max\left\{ \sum_{i=1}^t \dim_{\fq}(\mathcal{U}_i \cap \mathcal{H})  \st \mathcal{H} \mbox{ hyperplane of }\F_{q^m}^k  \right\} \leq k-1.
\]
\end{corollary}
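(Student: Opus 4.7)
The approach is to combine Theorem \ref{th:connection} with the Singleton-like bound (Theorem \ref{th:Singletonbound}) in a purely formal manner: the corollary is really a restatement of the Singleton bound in the language of the geometric correspondence established by Theorem \ref{thm:1-1corr}.

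First I would fix a generator matrix $G=(G_1 \mid \ldots \mid G_t)$ of $\C$ so that $(\mathcal{U}_1,\ldots,\mathcal{U}_t)$ is (a representative of) $\Psi([\C])$, i.e.\ $\mathcal{U}_i$ is the $\F_q$-column span of $G_i$. Since every codeword has the form $vG$ for a unique $v\in\F_{q^m}^k$, and by Theorem \ref{th:connection} we have
\[
\ww(vG)=N-\sum_{i=1}^t\dim_{\F_q}(\mathcal{U}_i\cap v^\perp),
\]
minimising $\ww(vG)$ over nonzero $v$ is the same as maximising $\sum_{i=1}^t\dim_{\F_q}(\mathcal{U}_i\cap v^\perp)$ over nonzero $v$. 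The $\F_{q^m}$-hyperplanes of $\F_{q^m}^k$ are precisely the subspaces $v^\perp$ with $v\in\F_{q^m}^k\setminus\{0\}$ (with $v$ determined up to an $\F_{q^m}^*$-scalar, which does not affect $v^\perp$), so
\[
d=\dd(\C)=N-\max\left\{\sum_{i=1}^t\dim_{\F_q}(\mathcal{U}_i\cap\mathcal{H})\st \mathcal{H}\text{ hyperplane of }\F_{q^m}^k\right\}.
\]

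From here the statement is immediate: $\C$ is MSRD iff $d=N-k+1$ iff the above maximum equals $k-1$. On the other hand Theorem \ref{th:Singletonbound} guarantees $d\le N-k+1$, hence the maximum is always $\ge k-1$. Therefore the condition ``$\max\le k-1$'' is equivalent to ``$\max=k-1$'', and hence to $\C$ being MSRD, which is exactly the claim.

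I do not anticipate a real obstacle: the only subtle point is noting that the maximum is automatically at least $k-1$ (by the Singleton bound), so that writing $\le k-1$ in the corollary is equivalent to equality. Once this is observed, the proof is a one-line rewriting of the weight formula \eqref{eq:weight_dimension}.
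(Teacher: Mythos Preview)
Your proposal is correct and matches the paper's approach exactly: the paper does not spell out a proof but simply states that the corollary follows from Theorem \ref{th:connection} and Theorem \ref{th:Singletonbound}, which is precisely the combination you use. Your observation that the Singleton bound forces the maximum to be at least $k-1$, so that ``$\le k-1$'' is equivalent to equality, is the only point worth making explicit, and you have done so.
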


The notion of $\Fmnkd$ system is therefore dual to the notion of $\Fmnkd$ code. Despite these objects are equivalent, the geometric point of view as $\Fmnkd$ systems naturally captures the block structure of the metric space. This can be very convenient, depending on the properties that one aims to study of sum-rank metric codes. We will indeed use this geometric characterization later in the next section with the principal goal of constructing one-weight codes, a problem that seems to us more difficult to deal with in the coding theory setting. 
This is not surprising, since there is plenty of literature in which the geometric viewpoint as projective systems allows more naturally to derive results on Hamming-metric codes. 

\section{Doubly-extended linearized Reed-Solomon codes}\label{sec:4}

A particular interest among sum-rank metric is the family of MSRD codes.  The first construction of  MSRD codes was due to Martínez-Peñas in \cite{martinezpenas2018skew} which is now known as \emph{linearized Reed-Solomon codes}. These codes extend to the sum-rank metric the family of Gabidulin codes in the rank metric and the Reed-Solomon codes in the Hamming metric. They are introduced originally in a vectorial setting while in \cite{neri2021twisted} they have been revisited in a skew polynomial setting.

To present the family of linearized Reed-Solomon codes we will need the to introduce 
skew polynomials, which were originally presented in the seminal paper by Ore \cite{ore1933theory}, in a
quite general setting.

Let $\sigma$ be a generator of $\Gal(\F_{q^m}/\F_q)$.
The ring of skew polynomials $\F_{q^m}[x;\sigma]$ is a set of formal polynomials
\[
\left\{f=\sum_{i=0}^d f_i x^i \st f_i \in \F_{q^m}, d \in \Z_{\geq 0} \right\}
\]
equipped with the ordinary (component-wise) addition 
\[
f+g=\sum_{i \geq 0}(f_i+g_i)x^i
\]
and the multiplication rule 
\[
x \cdot a=\sigma(a) \cdot x, \mbox{ for any } a \in \F_{q^m},
\]
extended to polynomials by associativity and distributivity.

The degree of a skew polynomial is defined as in the commutative case, and for an integer $v$, we denote by $\F_{q^m}[x; \sigma]_{<v}$ the set of skew polynomials of degree smaller than $v$.

An element $\beta \in \F_{q^m}$ is a root of a skew polynomial $f=\sum_{i=0}^d f_i x^i$, if $f(\beta):=\sum_{i=0}^d f_i \sigma^i(\beta)=0$. The set of roots of $f$ over $\F_{q^m}$ is denoted by $\ker(f)$ and it turns out to be an $\fq$-subspace of $\F_{q^m}$ having dimension at most $d$, see \cite{lidl1997finite} and \cite[Lemma 3.2]{guralnick1994invertible}.
We are interested in studying sum-rank metric codes in the vectorial setting (as in Subsection \ref{sub:vectorsetting}). To this aim, we recall the \emph{generalized operator evaluation} defined in \cite{leroy1995pseudolinear} as follows. Let $f \in \F_{q^m}[x; \sigma]$ and $a \in \F_{q^m}$. Define the map
\[
\begin{array}{rccl}
f(\cdot)_a: & \F_{q^m}  & \longrightarrow & \F_{q^m} \\
& \beta & \longmapsto & \sum_{i \geq 0} f_i \sigma^i(\beta)\N_i(a),
\end{array} 
\]
where $\N_i(a)\coloneqq\prod_{j=0}^{i-1}\sigma^j(a)$. 

Let $a=(a_1,\ldots,a_t) \in \F_{q^m}^t$ and $\beta=(\beta_1,\ldots,\beta_n) \in \F_{q^m}^n$. We define the multi-point evaluation map as 

\[
\begin{array}{rccl}
\mathrm{ev}_{a,\beta}: & \F_{q^m}[x;\sigma] & \longrightarrow & \F_{q^m}^\bfn \\
&f  & \longmapsto &  (f(\beta_1)_{a_1},\ldots,f(\beta_n)_{a_1} \lvert   \cdots \lvert  f(\beta_1)_{a_t},\ldots, f(\beta_n)_{a_t}).
\end{array} 
\]

In particular, we will use the multi-point evaluation map under certain assumptions on $a$ and $\beta$.
 
\begin{definition}
A pair $(a,\beta) \in \F_{q^m}^t \times \F_{q^m}^n$ is said to be an \textbf{evaluation pair} (with respect to $\sigma$) if the components of $a$ have pairwise distinct norm over $\fq$, and the elements of $\beta$ are linearly independent over $\F_q$. \end{definition}

The properties of the map $\mathrm{ev}_{a,\beta}(\cdot)$ have been exploited in \cite{martinezpenas2018skew} and they lead to the construction of the the following class of MSRD codes.

\begin{definition}[see \textnormal{\cite[Definition 31]{martinezpenas2018skew}}]\label{def:linRScodes}
Let $n \leq m$, $t \leq q-1$ and $1 \leq k \leq N$. Let $(a,\beta) \in \F_{q^m}^t \times \F_{q^m}^n$ be an evaluation pair. The corresponding \textbf{linearized Reed-Solomon code} is defined by
\[
\mathcal{C}_{k,a,\beta}=\{ \mathrm{ev}_{a,\beta}(f) \st f \in \F_{q^m}[x;\sigma]_{<k} \}.
\]
\end{definition}

In \cite[Theorem 4]{martinezpenas2018skew}, it has been shown that the linearized Reed-Solomon codes $\mathcal{C}_{k,a,\beta}$ are MSRD codes.
For a description of the duals of linearized Reed-Solomon codes, see
\cite{caruso2021duals,caruso2021atheory}.

Hence, using Corollary \ref{teo:designMSRD}, we can look at the intersections of an $\Fmnkd$ system associated with a linearized Reed-Solomon code with the hyperplanes.

\begin{proposition}\label{prop:LRScodes}
Let $k \leq N$, $t<q$ and $\mathcal{S}=\langle \beta_1,\ldots,\beta_n \rangle_{\fq}$. Let $\mathcal{C}_{k,a,\beta}$ be a linearized Reed-Solomon code as in Definition \ref{def:linRScodes}. Let $[(\mathcal{U}_1,\ldots,\mathcal{U}_t)]=\Psi([\mathcal{C}_{k,a,\beta}])$. Then a  representative of the class $[(\mathcal{U}_1,\ldots,\mathcal{U}_t)]$ is
\[ \mathcal{U}_i=\{ (y,\sigma(y)\N_1(a_i),\sigma^2(y)N_2(a_i),\ldots, \sigma^{k-1}(y)N_{k-1}(a_i)) \st y \in \mathcal{S}\}. \]
In particular,
    \[\sum_{i=1}^t \dim_{\fq}(\mathcal{U}_i \cap \mathcal{H})   \leq k-1, \]
    for every hyperplane $\mathcal{H}$ of $\F_{q^m}^k$.
\end{proposition}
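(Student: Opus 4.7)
The plan is to read off the generator matrix of $\mathcal{C}_{k,a,\beta}$ from the definition of the multi-point evaluation map, and then identify the $\Fq$-span of its columns block by block. Once the first statement is established, the bound on the hyperplane sections is a direct consequence of the MSRD property combined with Corollary \ref{teo:designMSRD}.

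First I would choose the natural $\Fm$-basis $1, x, x^2, \ldots, x^{k-1}$ of $\Fm[x;\sigma]_{<k}$. The corresponding generator matrix is
$$G=(G_1\mid\cdots\mid G_t), \qquad (G_i)_{j+1,\ell}=x^j(\beta_\ell)_{a_i}=\sigma^j(\beta_\ell)\N_j(a_i),$$
for $0\le j\le k-1$ and $1\le \ell\le n$, using that $x^j(\beta)_{a}=\sigma^j(\beta)\N_j(a)$. Thus the $\ell$-th column of $G_i$ is
$$(\beta_\ell,\,\sigma(\beta_\ell)\N_1(a_i),\,\sigma^2(\beta_\ell)\N_2(a_i),\,\ldots,\,\sigma^{k-1}(\beta_\ell)\N_{k-1}(a_i))^\top.$$

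Next I would take $\Fq$-linear combinations of these columns. Since $\sigma$ is $\Fq$-linear and $\N_j(a_i)\in\Fm$, for any $(c_1,\ldots,c_n)\in\Fq^n$ the sum $\sum_\ell c_\ell\sigma^j(\beta_\ell)\N_j(a_i)=\sigma^j(y)\N_j(a_i)$, where $y=\sum_\ell c_\ell\beta_\ell$. As $(\beta_1,\ldots,\beta_n)$ is an $\Fq$-basis of $\mathcal S$, the parameter $y$ ranges exactly over $\mathcal S$. Hence the $\Fq$-span of the columns of $G_i$ coincides with $\mathcal U_i$ as described. By Theorem \ref{thm:1-1corr}, $(\mathcal U_1,\ldots,\mathcal U_t)$ is a representative of $\Psi([\mathcal C_{k,a,\beta}])$, proving the first part.

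For the hyperplane intersection bound I would invoke \cite[Theorem 4]{martinezpenas2018skew}, which guarantees that $\mathcal C_{k,a,\beta}$ is MSRD. Applying Corollary \ref{teo:designMSRD} to the nondegenerate code $\mathcal C_{k,a,\beta}$ (nondegeneracy is immediate from Proposition \ref{prop:charact_nondegenarate}(3), since the columns of each $G_i$ are $\Fq$-linearly independent — they are the images under the $\Fq$-linear injective map $y\mapsto(y,\sigma(y)\N_1(a_i),\ldots)$ of the $\Fq$-independent vectors $\beta_1,\ldots,\beta_n$), one obtains $\sum_{i=1}^t\dim_{\Fq}(\mathcal U_i\cap\mathcal H)\le k-1$ for every hyperplane $\mathcal H$ of $\Fm^k$.

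There is no real obstacle here; the only place one must be careful is in verifying that the $\Fq$-span of the columns of $G_i$ is genuinely described by letting $y$ vary in $\mathcal S$, which hinges on the $\Fq$-linearity of $\sigma$ and on $\beta_1,\ldots,\beta_n$ being an $\Fq$-basis of $\mathcal S$ rather than merely generators.
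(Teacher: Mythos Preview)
Your proof is correct and follows exactly the approach of the paper, which simply refers to the generator matrix of $\mathcal{C}_{k,a,\beta}$ computed in \cite[p.~604]{martinezpenas2018skew}, observes that the $\Fq$-span of the columns of each block is $\mathcal{U}_i$, and invokes Corollary~\ref{teo:designMSRD} for the last part. You have merely spelled out the details the paper leaves implicit (the explicit form of the columns via the basis $1,x,\ldots,x^{k-1}$, the use of $\Fq$-linearity of $\sigma$, and the verification of nondegeneracy needed to apply Corollary~\ref{teo:designMSRD}).
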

\begin{proof}
A generator matrix $G$ of $\mathcal{C}_{k,a,\beta}$ can be computed as in \cite[p. 604]{martinezpenas2018skew}. Then it is easy to see that the $\F_q$-span of the columns of each block of $G$ gives the $\F_q$-subspace $\mathcal{U}_i$.
The last part then follows by Corollary \ref{teo:designMSRD}.
\end{proof}

As it happens for Reed-Solomon codes in the Hamming metric, we now want to extend the evaluations of a linearized Reed-Solomon code also to $0$ and to $\infty$. For this reason,
for a skew polynomial $f(x)=f_0+f_1x+\ldots+f_{k-1}x^{k-1} \in \F_{q^m}[x; \sigma]_{<k}$  define $f(\beta)_{\infty}\coloneqq\beta f_{k-1}$ for any $\beta \in \mathbb{F}_{q^m}$.

\begin{definition}
Let $1 \leq k \leq N$, $\gamma,\delta \in \F_{q^m}^*$  and let $(a,\beta) \in \F_{q^m}^t \times \F_{q^m}^n$ be an evaluation pair. We define the code
\[
\mathcal{C}_{k,a,\beta}^{\gamma,\delta}=\{ (\mathrm{ev}_{a,\beta}(f)\lvert f(\gamma)_{0} \lvert f(\delta)_{\infty}) \st f \in \F_{q^m}[x;\sigma]_{<k} \}.
\]
When $t=q-1$, the code $\mathcal{C}_{k,a,\beta}^{\gamma,\delta}$ is called the \textbf{doubly-extended linearized Reed-Solomon} code.
\end{definition}

\begin{remark}
Note that if $m=1$, then we recover the notion of doubly-extended Reed-Solomon codes. 
\end{remark}

As one might expect, also doubly-extended linearized Reed-Solomon codes are MSRD.

\begin{theorem}\label{th:MSRDLRS}
Let  $t \leq q-1$ and $2 \leq k \leq N$. Let $(a,\beta) \in \F_{q^m}^t \times \F_{q^m}^n$ be an evaluation pair, let $\gamma,\delta \in \F_{q^m}^*$ and let $\mathcal{S}=\langle \beta_1,\ldots,\beta_n \rangle_{\fq}$. Then 
\[ [(\mathcal{U}_1,\ldots,\mathcal{U}_t,\mathcal{U}_0,\mathcal{U}_{\infty})]= 
\Psi([\mathcal{C}_{k,a,\beta}^{\gamma,\delta}]) . 
\]
where \[ \mathcal{U}_i=\{ (y,\sigma(y)\N_1(a_i),\sigma^2(y)N_2(a_i),\ldots, \sigma^{k-1}(y)N_{k-1}(a_i)) \st y \in \mathcal{S}\},\]
for any $i \in [t]$, $\mathcal{U}_0=\langle (\gamma,0,\ldots,0) \rangle_{\fq}$ and $\mathcal{U}_{\infty}=\langle (0,\ldots,0,\delta) \rangle_{\fq}$.
Moreover, the code $\mathcal{C}_{k,a,\beta}^{\gamma,\delta}$ is an MSRD code.
\end{theorem}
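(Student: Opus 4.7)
The plan is to exhibit a generator matrix of $\mathcal{C}_{k,a,\beta}^{\gamma,\delta}$, read off the $\Fmnkd$ system from the $\Fq$-spans of its column blocks, and then verify the MSRD property via Corollary~\ref{teo:designMSRD}. Using $\{1,x,\ldots,x^{k-1}\}$ as an $\Fm$-basis of $\Fm[x;\sigma]_{<k}$, the defining evaluation map produces a generator matrix $G = (G_1 \mid \cdots \mid G_t \mid g_0 \mid g_\infty)$ whose first $t$ blocks are the generator of $\mathcal{C}_{k,a,\beta}$ already analysed in Proposition~\ref{prop:LRScodes}. Since $\N_0(0) = 1$ but $\N_j(0) = 0$ for $j \geq 1$, evaluation at the last two columns gives $g_0 = (\gamma, 0, \ldots, 0)^\top$ and $g_\infty = (0, \ldots, 0, \delta)^\top$, whose $\Fq$-spans are exactly $\mathcal{U}_0$ and $\mathcal{U}_\infty$; they are one-dimensional because $\gamma,\delta \neq 0$, and the resulting system is nondegenerate since $(\mathcal{U}_1,\ldots,\mathcal{U}_t)$ already spans $\Fm^k$.

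By Corollary~\ref{teo:designMSRD}, to prove MSRD it suffices to check that for every nonzero $v=(v_0,\ldots,v_{k-1}) \in \Fm^k$ and $\mathcal{H}=v^\perp$,
\[ \sum_{i=1}^t \dim_{\Fq}(\mathcal{U}_i \cap \mathcal{H}) + \dim_{\Fq}(\mathcal{U}_0 \cap \mathcal{H}) + \dim_{\Fq}(\mathcal{U}_\infty \cap \mathcal{H}) \leq k-1. \]
Set $g_v(x) = \sum_{j=0}^{k-1}v_j x^j \in \Fm[x;\sigma]$ and let $j_0 = \min\{j : v_j \neq 0\}$, $j_1 = \max\{j : v_j \neq 0\}$. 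A direct computation, in the spirit of Theorem~\ref{th:connection}, yields the $\Fq$-linear identification $\mathcal{U}_i \cap \mathcal{H} \cong \ker(g_v(\cdot)_{a_i}) \cap \mathcal{S}$, while $\mathcal{U}_0 \subset \mathcal{H}$ iff $v_0 = 0$ and $\mathcal{U}_\infty \subset \mathcal{H}$ iff $v_{k-1} = 0$.

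The core of the argument is a skew-polynomial reduction. In $\Fm[x;\sigma]$ one factors $g_v = x^{j_0}\cdot \tilde g_v$, where $\tilde g_v(x) = \sum_{j=0}^{j_1-j_0}\sigma^{-j_0}(v_{j+j_0})x^j$ is nonzero of degree exactly $j_1-j_0$. Iterating the elementary identity $(x h)(y)_a = a\,\sigma(h(y)_a)$ yields $g_v(y)_{a_i} = \N_{j_0}(a_i)\,\sigma^{j_0}(\tilde g_v(y)_{a_i})$. Since $t \leq q-1$ lets us take the $a_i$'s with pairwise distinct \emph{nonzero} norms, $\N_{j_0}(a_i) \neq 0$ and hence $\ker(g_v(\cdot)_{a_i}) = \ker(\tilde g_v(\cdot)_{a_i})$ for every $i$. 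Applying Proposition~\ref{prop:LRScodes} with $k$ replaced by $j_1 - j_0 + 1$ to $\tilde g_v$ (valid because $j_1-j_0+1 \leq k \leq N$ and $t \leq q-1$) gives $\sum_{i=1}^t \dim_{\Fq}(\mathcal{U}_i \cap \mathcal{H}) \leq j_1 - j_0$. Setting $\epsilon_0 = 1$ if $v_0 = 0$ and $0$ otherwise, and $\epsilon_1 = 1$ if $v_{k-1} = 0$ and $0$ otherwise, one has $j_0 \geq \epsilon_0$ and $j_1 \leq k-1-\epsilon_1$, so the total sum is at most $(j_1-j_0) + \epsilon_0 + \epsilon_1 \leq k-1$, as required.

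The main technical obstacle is precisely this factorization together with the evaluation identity relating $g_v(y)_a$ and $\tilde g_v(y)_a$: it absorbs the zero prefix of $v$ into the nonzero scalar $\N_{j_0}(a_i)$, which is exactly where the hypothesis $a_i \neq 0$ (enabled by $t \leq q-1$) enters, and lowers the effective degree of the polynomial by $j_0$ in such a way as to offset the possible extra contributions of $\mathcal{U}_0$ and $\mathcal{U}_\infty$. Once this reduction is in place, the rest of the proof reduces to a direct invocation of Proposition~\ref{prop:LRScodes} and a short case-check on the four configurations of $(\epsilon_0,\epsilon_1)$.
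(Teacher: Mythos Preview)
Your proof is correct and follows essentially the same route as the paper's: both arguments identify the system from the generator matrix, translate $\mathcal{U}_i\cap\mathcal{H}$ into a skew-polynomial kernel, and then strip off the vanishing extremal coefficients of $v$ to lower the effective degree before invoking the MSRD bound for linearized Reed--Solomon codes (Proposition~\ref{prop:LRScodes}). The only organizational difference is that the paper splits into the four cases $(\epsilon_0,\epsilon_1)\in\{0,1\}^2$ and in each case explicitly writes down the reduced polynomial $g'_i$ or $g''_i$, whereas you handle all cases at once via the factorization $g_v=x^{j_0}\tilde g_v$ and the identity $(xh)(y)_a=a\,\sigma(h(y)_a)$; both approaches rely on the same implicit hypothesis that every $a_i$ is nonzero (so that $\N_{j_0}(a_i)\neq 0$), which the paper also uses without stating.
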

\begin{proof}
By Proposition \ref{prop:LRScodes} we have $[(\mathcal{U}_1,\ldots,\mathcal{U}_t)]=\Psi([\mathcal{C}_{k,a,\beta}])$, and hence $[(\mathcal{U}_1,\ldots,\mathcal{U}_t,\mathcal{U}_0,\mathcal{U}_{\infty})]=\Psi([\mathcal{C}_{k,a,\beta}^{\gamma,\delta}])$.
Now, consider any $\mathbb{F}_{q^m}$-hyperplane $\mathcal{H}$ of $\mathbb{F}_{q^m}^k$ and let $h_1x_1+\cdots+h_kx_k=0$ be an its equation.
Consider
\[
g_{i,\mathcal{H}}=h_1 +h_2\N_1(a_i)x+h_3N_2(a_i)x^2+\ldots+ h_kN_{k-1}(a_i)x^{k-1} \in \F_{q^m}[x;\sigma].
\]
Since $\dim_{\fq}(\mathcal{U}_i \cap \mathcal{H}) = \dim_{\fq} (\ker(g_{i,H}) \cap \mathcal{S})$, by Corollary \ref{teo:designMSRD} one gets
\begin{equation}\label{eq:condLRS}
\sum_{i=1}^t \dim_{\fq}(\mathcal{U}_i \cap \mathcal{H}) = \sum_{i=1}^t \dim_{\fq} (\ker(g_{i,\mathcal{H}}) \cap \mathcal{S}) \leq k-1.
\end{equation}

Moreover, by Corollary \ref{teo:designMSRD} it is enough to prove that 
\[
\sum_{i=1}^t \dim_{\fq}(\mathcal{U}_i \cap \mathcal{H}) +\dim_{\fq}(\mathcal{U}_0 \cap \mathcal{H})+\dim_{\fq}(\mathcal{U}_{\infty} \cap \mathcal{H}) \leq k-1,
\]
for every $\mathbb{F}_{q^m}$-hyperplane $\mathcal{H}$ of $\mathbb{F}_{q^m}^k$.

Clearly, if $\dim_{\fq}(\mathcal{U}_0 \cap \mathcal{H})+\dim_{\fq}(\mathcal{U}_{\infty} \cap \mathcal{H})=0$, then the assertion follow by \eqref{eq:condLRS}.

Suppose now that $\dim_{\fq}(\mathcal{U}_0 \cap \mathcal{H})=1$ and $\dim_{\fq}(\mathcal{U}_{\infty} \cap \mathcal{H})=0$.
The equation of $\mathcal{H}$ is $h_2x_2+\cdots+h_kx_k=0$. 
By \eqref{eq:condLRS},
\[
\sum_{i=1}^t \dim_{\fq}(\mathcal{U}_i \cap \mathcal{H})=\sum_{i=1}^t \dim_{\fq}(\ker(g'_i)\cap \mathcal{S})\leq k-2,
\]
with $g'_i=h'_1 \mathrm{id}+h'_2 \N_1(a_i)\sigma+h_3N_2(a_i)\sigma^2+\cdots+ h'_{k-1}N_{k-2}(a_i)\sigma^{k-2}$, where $h'_i=\sigma^{m-1}(h_{i+1})$, for any $i \in [k-1]$.

Similarly, the case $\dim_{\fq}(\mathcal{U}_0 \cap \mathcal{H})=0$ and $\dim_{\fq}(\mathcal{U}_{\infty} \cap \mathcal{H})=1$ can be carried out to get the assertion.

Finally, suppose that $\dim_{\fq}(\mathcal{H}\cap \mathcal{U}_0)+\dim_{\fq}(\mathcal{H}\cap \mathcal{U}_{\infty})=2$, then
the equation of $\mathcal{H}$ is $h_2x_2+\ldots+h_{k-1}x_{k-1}=0$. Then, by \eqref{eq:condLRS}, it follows
\[
\sum_{i=1}^t \dim_{\fq}(\mathcal{U}_i \cap \mathcal{H})=\sum_{i=1}^t \dim_{\fq}(\ker(g''_i)\cap \mathcal{S})\leq k-3,
\]
with $g''_i=h''_1 \mathrm{id}+h''_2\N_1(a_i)\sigma+h_3''N_2(a_i)\sigma^2+\cdots+ h''_{k-2}N_{k-3}(a_i)\sigma^{k-3}$, where $h''_i=\sigma^{m-1}(h_{i+1})$ for any $i\in[k-2]$.
\end{proof}

In the following lemma we study some properties of the $[\bfn,2,d]_{q^m/q}$ system associated with the code $\mathcal{C}_{2,a,\beta}^{\gamma,\delta}.$

\begin{lemma}\label{lem:disjointpseudo}
Let $m$ be a positive integer.
Let $a_1,\ldots,a_{q-1} \in \F_{q^m}^*$ such that $\N_{q^m/q}(a_i) \neq \N_{q^m/q}(a_j)$, if $i \neq j$. Let $\mathcal{U}_i=\{(y,a_i \sigma(y))\st y \in \F_{q^m}\}$, for $i\in [q-1]$. Let $\gamma,\delta \in \F_{q^m}^*$ and define $\mathcal{U}_0=\{(a\gamma,0)\st a \in \F_q\}$ and $\mathcal{U}_{\infty}=\{(0,b\delta) \st b \in \F_q\}$. Then for every one-dimensional $\F_{q^m}$-subspace $\mathcal{S}$ of $\mathbb{F}_{q^m}^2$ there exists $i \in \{0,\ldots,q-1,\infty\}$ such that $\dim_{\fq}(\mathcal{S}\cap \mathcal{U}_i)=1$ and $\dim_{\fq}(\mathcal{S}\cap \mathcal{U}_j)=0$ for every $j \ne i$.
\end{lemma}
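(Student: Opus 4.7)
The plan is to split into three types of one-dimensional $\F_{q^m}$-subspaces of $\F_{q^m}^2$, namely the two ``axes'' $\mathcal{S}=\langle(1,0)\rangle_{\F_{q^m}}$, $\mathcal{S}=\langle(0,1)\rangle_{\F_{q^m}}$ and the generic line $\mathcal{S}=\langle(1,\lambda)\rangle_{\F_{q^m}}$ with $\lambda\in\F_{q^m}^{\ast}$. For the two axis cases, direct inspection works: on $\langle(1,0)\rangle$ one has $\mathcal{U}_0\subseteq\langle(1,0)\rangle$ (giving $\F_q$-dimension $1$), while $(y,a_i\sigma(y))\in\langle(1,0)\rangle$ forces $a_i\sigma(y)=0$, i.e.\ $y=0$, and $(0,b\delta)\in\langle(1,0)\rangle$ forces $b=0$; hence only $\mathcal{U}_0$ meets $\mathcal{S}$ nontrivially. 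The vertical axis case is symmetric, isolating $\mathcal{U}_\infty$.

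For the generic line $\mathcal{S}=\{(\mu,\mu\lambda):\mu\in\F_{q^m}\}$ with $\lambda\neq 0$, observe first that any nonzero element of $\mathcal{S}$ has both coordinates nonzero, so $\mathcal{S}\cap\mathcal{U}_0=\mathcal{S}\cap\mathcal{U}_\infty=\{0\}$. For $i\in[q-1]$, a vector $(y,a_i\sigma(y))$ lies in $\mathcal{S}$ exactly when $a_i\sigma(y)=\lambda y$, which for $y\in\F_{q^m}^{\ast}$ is equivalent to $\sigma(y)/y=\lambda/a_i$. Thus the question becomes: for which $i$ does $\lambda/a_i$ belong to the image of the multiplicative map $\varphi:\F_{q^m}^{\ast}\to\F_{q^m}^{\ast}$, $y\mapsto \sigma(y)y^{-1}$?

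At this point the key tool is Hilbert's Theorem 90, which identifies $\mathrm{Im}(\varphi)$ with the norm-one subgroup $\{z\in\F_{q^m}^{\ast}:\N_{q^m/q}(z)=1\}$. Hence $\mathcal{S}\cap\mathcal{U}_i\neq\{0\}$ if and only if $\N_{q^m/q}(\lambda)=\N_{q^m/q}(a_i)$. By hypothesis the $q-1$ norms $\N_{q^m/q}(a_i)$ are pairwise distinct elements of $\F_q^{\ast}$, so they exhaust $\F_q^{\ast}$. Since $\N_{q^m/q}(\lambda)$ is itself a nonzero element of $\F_q^{\ast}$, exactly one index $i\in[q-1]$ satisfies the matching condition, and all other $\mathcal{U}_j$'s meet $\mathcal{S}$ trivially.

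For this distinguished $i$, one still has to verify that $\dim_{\F_q}(\mathcal{S}\cap\mathcal{U}_i)=1$ and not more. Here I use that $\ker(\varphi)=\F_q^{\ast}$: once a single solution $y_0\in\F_{q^m}^{\ast}$ of $\sigma(y)/y=\lambda/a_i$ exists, the full solution set is $y_0\cdot\F_q^{\ast}$, i.e.\ an $\F_q$-line in $\F_{q^m}$ minus $\{0\}$. Consequently $\mathcal{S}\cap\mathcal{U}_i=\{(ty_0,ty_0\lambda):t\in\F_q\}$ has $\F_q$-dimension exactly $1$. I expect the only real subtlety to be the invocation of Hilbert 90 (equivalently a clean counting argument that the fiber of $\varphi$ is $\F_q^{\ast}$, giving $|\mathrm{Im}(\varphi)|=(q^m-1)/(q-1)$, which coincides with the size of the norm-one group); everything else is routine. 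As a sanity check, this agrees with Corollary~\ref{teo:designMSRD} applied to the code $\mathcal{C}_{2,a,\beta}^{\gamma,\delta}$ of Theorem~\ref{th:MSRDLRS}, where the MSRD property forces $\sum_i\dim_{\F_q}(\mathcal{S}\cap\mathcal{U}_i)\le 1$ for every $\mathcal{S}$, while the direct calculation above shows the bound is always attained.
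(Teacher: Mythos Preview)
Your proof is correct and follows essentially the same approach as the paper: the same case split into the two axes and the generic line $\langle(1,\lambda)\rangle_{\F_{q^m}}$, and the same norm argument (your explicit invocation of Hilbert~90 is exactly what the paper uses implicitly when it writes ``there exists $y\in\F_{q^m}^{\ast}$ such that $\lambda=a_i\sigma(y)y^{-1}$''). The only cosmetic difference is that you compute $\dim_{\F_q}(\mathcal{S}\cap\mathcal{U}_i)=1$ via $\ker(\varphi)=\F_q^{\ast}$, while the paper solves the linear system $\rho(1,\lambda)=(y,a_i\sigma(y))$ directly.
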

\begin{proof}
Let $\mathcal{S}$ be any one-dimensional $\F_{q^m}$-subspace of $\mathbb{F}_{q^m}^2$.
If $\mathcal{S}\in \{\langle(1,0)\rangle_{\F_{q^m}},\langle(0,1)\rangle_{\F_{q^m}}\}$, then either $\dim_{\fq}(\mathcal{S}\cap \mathcal{U}_0)=1$ or $\dim_{\fq}(\mathcal{S} \cap \mathcal{U}_{\infty})=1$. In both the cases, the intersection with remaining $\mathcal{U}_i$'s will be trivial.
Suppose now that $\mathcal{S}\notin \{\langle(1,0)\rangle_{\F_{q^m}},\langle(0,1)\rangle_{\F_{q^m}}\}$. Then $\mathcal{S}=\langle(1,\lambda)\rangle_{\F_{q^m}}$ with $\lambda \in \F_{q^m}^*$ and, by the assumptions on the $a_i$'s it follows that there exists $i \in [q-1]$ such that $\N_{q^m/q}(\lambda)=\N_{q^m/q}(a_i)$.
Hence, there exists an element $y \in \F_{q^m}^*$ such that $\lambda=a_i \sigma(y)y^{-1}$. This implies that
\[ \langle (1,\lambda)\rangle_{\F_{q^m}}=\langle (y,a_i \sigma(y))\rangle_{\F_{q^m}}, \]
and hence $\dim(\mathcal{U}_i\cap \langle (1,\lambda)\rangle_{\F_{q^m}})>0$.
Moreover, if $y \in \F_{q^m}^*$ and $\rho \in \F_{q^m}^*$ are such that $\rho(1,\lambda)=(y,a_i\sigma(y))$ then
\[ \left\{
\begin{array}{ll}
\rho=y,\\
\rho \lambda = a_i \sigma(y),
\end{array}
\right. \]
from which we get $\dim_{\fq}(\mathcal{U}_i\cap \langle (1,\lambda)\rangle_{\F_{q^m}})=1$.
Let $j \in [q-1]$ with $j\ne i$. 
Since $\N_{q^m/q}(\lambda)\ne \N_{q^m/q}(a_j)$, if $\dim_{\fq}(\mathcal{U}_j \cap \langle (1,\lambda)\rangle_{\F_{q^m}})>0$ then, arguing as before, there exists $y \in \F_{q^m}^*$ and $\rho \in \F_{q^m}^*$ such that $\rho(1,y)=(y,a_j\sigma(y))$ and so $y\lambda = a_j\sigma(y)$, which implies that $\N_{q^m/q}(\lambda) = \N_{q^m/q}(a_j)$, a contradiction. The assertion then follows since $\dim_{\fq}(\mathcal{S}\cap \mathcal{U}_0)=\dim_{\fq}(\mathcal{S}\cap \mathcal{U}_{\infty})=0$.
\end{proof}

\begin{remark}
From a geometric point of view, we can consider the linear sets associated with the $\mathcal{U}_i$'s  of Lemma \ref{lem:disjointpseudo} (see Section \ref{sec:extendHamm} and in particular Definition \ref{def:linearset} for the notion of linear sets).
They turn out to be pairwise disjoint scattered linear sets covering all the projective line, that is $L_{\mathcal{U}_0} \cup \cdots \cup L_{\mathcal{U}_{\infty}}=\PG(1,q^m)$, see the next section.
The linear sets $L_{\mathcal{U}_i}$ with $i \in \{1,\ldots,q-1\}$ are called of \textbf{pseudoregulus type}, see \cite{lunardon2014maximum}.
\end{remark}

Lemma \ref{lem:disjointpseudo} essentially implies that the $2$-dimensional doubly-extended linearized Reed-Solomon codes are  one-weight codes, exactly  as it happens for doubly-extended Reed-Solomon codes in the Hamming metric. 

\begin{theorem}\label{th:RS1w}
Let $n = m$, $t = q-1$, $k=2$ and let $(a,\beta) \in \F_{q^m}^t \times \F_{q^m}^m$ be an evaluation pair.
Let $\gamma,\delta \in \F_{q^m}^*$, then the code
$\mathcal{C}_{2,a,\beta}^{\gamma,\delta}$ is a one-weight MSRD code.
\end{theorem}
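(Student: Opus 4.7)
The plan is to combine Theorem~\ref{th:MSRDLRS}, which already gives both the MSRD property and an explicit description of the associated $[\bfn,2,d]_{q^m/q}$ system, with the duality in Theorem~\ref{th:connection} / Corollary~\ref{cor:duality_supports_sections} and the geometric Lemma~\ref{lem:disjointpseudo}.

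First, note that MSRD is immediate from Theorem~\ref{th:MSRDLRS}. It remains to show that $\mathcal{C}^{\gamma,\delta}_{2,a,\beta}$ is one-weight. To this end, let $\mathcal{U}=(\mathcal{U}_1,\ldots,\mathcal{U}_{q-1},\mathcal{U}_0,\mathcal{U}_\infty)$ be the $[\bfn,2]_{q^m/q}$ system associated with $\mathcal{C}^{\gamma,\delta}_{2,a,\beta}$ via Theorem~\ref{th:MSRDLRS}. Since $n=m$ and $\mathcal{S}=\langle \beta_1,\ldots,\beta_m\rangle_{\Fq}=\Fm$, the blocks for $i\in [q-1]$ have the shape $\mathcal{U}_i=\{(y,a_i\sigma(y)) \st y\in\Fm\}$, matching exactly the hypotheses of Lemma~\ref{lem:disjointpseudo}, while $\mathcal{U}_0=\langle(\gamma,0)\rangle_{\Fq}$ and $\mathcal{U}_\infty=\langle(0,\delta)\rangle_{\Fq}$ with $\gamma,\delta\in\Fm^*$. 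In particular, $\mathcal{U}_0+\mathcal{U}_\infty$ already generates $\Fm^2$ over $\Fm$, so $\mathcal{C}^{\gamma,\delta}_{2,a,\beta}$ is sum-rank nondegenerate, and we may apply the weight/intersection formula \eqref{eq:weight_dimension}.

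Now fix any nonzero $v\in \Fm^2$. Since $k=2$, the orthogonal $v^\perp$ is a $1$-dimensional $\Fm$-subspace of $\Fm^2$. By Lemma~\ref{lem:disjointpseudo}, there is a unique index $i\in\{0,1,\ldots,q-1,\infty\}$ with $\dim_{\Fq}(v^\perp\cap\mathcal{U}_i)=1$ and $\dim_{\Fq}(v^\perp\cap\mathcal{U}_j)=0$ for every $j\ne i$. Hence
$$\sum_{i\in\{0,1,\ldots,q-1,\infty\}}\dim_{\Fq}(\mathcal{U}_i\cap v^\perp)=1.$$
Substituting into \eqref{eq:weight_dimension}, with $N=(q-1)m+2$, yields
$$\ww(vG)=N-1=(q-1)m+1$$
for every nonzero $v\in\Fm^2$, where $G$ is any generator matrix of $\mathcal{C}^{\gamma,\delta}_{2,a,\beta}$. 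Since the codewords of $\mathcal{C}^{\gamma,\delta}_{2,a,\beta}$ are exactly the vectors $vG$ as $v$ varies in $\Fm^2$, every nonzero codeword has sum-rank weight $(q-1)m+1$, proving that the code is one-weight.

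The substantive content of the argument is entirely concentrated in Lemma~\ref{lem:disjointpseudo}, which guarantees that the $q+1$ subspaces $\mathcal{U}_0,\mathcal{U}_1,\ldots,\mathcal{U}_{q-1},\mathcal{U}_\infty$ form a partition of the projective line $\PG(1,q^m)$ through the associated linear sets; the rest is a direct application of the geometric dictionary established in Section~\ref{sec:3}. No further obstacle is expected: once the hypotheses of Lemma~\ref{lem:disjointpseudo} are matched, one-weightness is automatic.
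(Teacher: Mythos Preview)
Your proof is correct and follows essentially the same approach as the paper: identify the associated $[\bfn,2]_{q^m/q}$ system via Theorem~\ref{th:MSRDLRS}, apply Lemma~\ref{lem:disjointpseudo} to see that every $1$-dimensional $\Fm$-subspace meets the collection in total $\Fq$-dimension exactly $1$, and conclude one-weightness via Theorem~\ref{th:connection}. You spell out a few more details (nondegeneracy, the explicit value $N-1=(q-1)m+1$), but the argument is the same.
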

\begin{proof}
Consider
$\mathcal{U}_i=\{(y,a_i \sigma(y))\st y \in \F_{q^m}\}$, for any $i\in[q-1]$ and the $a_i$'s have pairwise distinct norm over $\fq$, $\mathcal{U}_0=\langle(\gamma,0)\rangle_{\fq} $ and $\mathcal{U}_{\infty}=\langle(0,\delta)\rangle_{\fq}$.
Then by Theorem \ref{th:MSRDLRS} $[(\mathcal{U}_1,\ldots,\mathcal{U}_t,\mathcal{U}_0,\mathcal{U}_{\infty})]=\Psi([\mathcal{C}_{2,a,\beta}^{\gamma,\delta}])$. 
By Lemma \ref{lem:disjointpseudo}, since the $\mathcal{U}_i$'s meets every one-dimensional $\F_{q^m}$-subspace of $\F_{q^m}^2$ only in a one-dimensional $\fq$-subspace, then $\sum_{i=0}^{q-1} \dim_{\fq}(\mathcal{U}_i \cap S)+\dim_{\fq}(\mathcal{U}_{\infty}\cap S)=1$, for each one-dimensional $\F_{q^m}$-subspace $S$ in $\F_{q^m}^2$.
The assertion then follows by Theorem \ref{th:connection}.
\end{proof}

\section{Associating an Hamming-metric code} \label{sec:extendHamm}

The characterization of one-weight codes in a given metric has always been a fascinating problem, due to their interesting combinatorial properties and their strong rigidity structures. One-weight codes in the Hamming metric have been classified in the well-known work of Bonisoli \cite{bonisoli1983every}. An easy way to prove such a result is using the geometric characterization of Hamming-metric codes via projective systems. The geometric approach was the key used very recently to obtain a classification of one-weight (vector) codes in the rank metric in \cite{Randrianarisoa2020ageometric}; see also \cite[Proposition 3.16]{alfarano2021linear}. From a more intrinsic point of view, this approach can be also seen as relying on Bonisoli's result for Hamming-metric codes by using the transformation introduced in \cite[Section 4]{alfarano2021linear} to obtain a Hamming-metric code from a rank-metric one. Unfortunately, this idea seems unlikely to work if we try to obtain a similar classification result for  one-weight sum-rank metric codes. However, we can use a similar strategy to obtain partial results on sum-rank metric codes with constant rank-profile.

To this aim, we will recall some basics on linear sets.

\begin{definition}\label{def:linearset}
Let $\Lambda=\PG(\F_{q^m}^k,\F_{q^m})=\PG(k-1,q^m)$.
Let $\mathcal{U}$ be an $[n,k]_{q^m/q}$ system, then the set of points
\[ L_\mathcal{U}=\{\la {u} \ra_{\mathbb{F}_{q^m}} \st  { u}\in \mathcal{U}\setminus \{{\bf 0} \}\}\subseteq \Lambda \]
is said to be an $\fq$-\textbf{linear set of rank} $n$.
\end{definition}

For such a linear set we have the following bound on its size
\begin{equation}\label{eq:boundsizels}
    |L_\mathcal{U}| \leq \frac{q^n-1}{q-1}.
\end{equation}

Let $P=\langle v\rangle_{\F_{q^m}}$ be a point in $\Lambda$. The \emph{weight of $P$ in $L_\mathcal{U}$} is defined as 
\[ \ww_{L_\mathcal{U}}(P)=\dim_{\fq}(\mathcal{U}\cap \langle v\rangle_{\F_{q^m}}). \]
If $L_\mathcal{U}$ is an $\fq$-linear set with the property that $\ww_{L_\mathcal{U}}(P)\leq 1$ for every point $P \in \Lambda$, we say that $L_\mathcal{U}$ is \emph{scattered}.
Equivalently, $L_\mathcal{U}$ is scattered if and only if the size of $L_\mathcal{U}$ satisfies the equality in \eqref{eq:boundsizels}, that is $|L_\mathcal{U}|=\frac{q^n-1}{q-1}$. 
Moreover, in \cite{blokhuis2000scattered} it has been proved that for scattered $\fq$-linear sets of rank $n$ in $\Lambda$ we have
\begin{equation}\label{eq:boundscatt}
    n\leq \frac{mk}2.
\end{equation}
For more details on linear sets see \cite{lavrauw2015field,polverino2010linear}.

\subsection{The associated Hamming-metric code}
Following the strategy developed in \cite[Section 4]{alfarano2021linear}, we associate an equivalence class of Hamming-metric codes to an equivalence class of sum-rank-metric codes. For consistency, we will keep as much as possible the notation introduced in \cite{alfarano2021linear}.

Denote by 
$\mathfrak D\Fmkt$ the set of $n$-dimensional $\Fq$-subspaces of $\Fm^k$, and let $\mathfrak D(\frac{q^n-1}{q-1},k)_{q^m}$ be the set  of multisets of $\PG(k-1,q^m)$ with cardinality (counted with multiplicity) $\frac{q^n-1}{q-1}$. Define the map
$$\begin{array}{rccc}\Ext^{1}:& \mathfrak D\Fmkt & \longrightarrow & \mathfrak D(\frac{q^n-1}{q-1},k)_{q^m} \\
&\mU & \longmapsto & (L_\mU, \mm_\mU),\end{array}$$
where $\mm_\mU$ is the multiplicity function given by
$$\mm_\mU(\langle v\rangle_{\Fm})\coloneqq\frac{q^{\dim_{\Fq}(\mU\cap \langle v\rangle_{\Fm})}-1}{q-1}.$$
For $i \in [t]$, let $(\cM_i,\mm_{i})\in \mathfrak D(n_i,k)_{q^m}$ be a collection of multisets of $\PG(k-1,q^m)$. Define their \textbf{disjoint union} as
$$\biguplus_{i=1}^t (\cM_i,\mm_{i})\coloneqq(\cM,\mm),$$
where $\cM=\cM_1\cup\ldots\cup \cM_t$, and $\mm(P)=\mm_1(P)+\ldots+\mm_t(P)$ for every $P\in \PG(k-1,q^m)$.

Let $\mathfrak D(\bfn,k)_{q^m/q}$ be the set of $\Fmnk$ systems. Define the map
 $$\begin{array}{rccc}\Ext:&\mathfrak D(\bfn,k)_{q^m/q} & \longrightarrow & \mathfrak D(\frac{q^{n_1}+\ldots +q^{n_t}-t}{q-1},k)_{q^m} \\
& \mU\coloneqq(\mU_1,\ldots,\mU_t) & \longmapsto & \biguplus\limits_{i=1}^t\Ext^1(\mU_i).\end{array}$$

\begin{proposition}\label{prop:ext_welldefined}\;
\begin{enumerate}
    \item If $\mU$ is an $\Fmnk$ system, then $\Ext(\mU)$ is a  projective $[\frac{q^{n_1}+\ldots +q^{n_t}-t}{q-1},k]_{q^m}$ system.
    \item If $\mU, \cV$ are two equivalent $\Fmnk$ systems, then $\Ext(\mU)$ and $\Ext(\cV)$ are two equivalent projective $[\frac{q^{n_1}+\ldots +q^{n_t}-t}{q-1},k]_{q^m}$ systems.
\end{enumerate}
\end{proposition}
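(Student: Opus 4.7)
The plan is to verify the two claims essentially by unwinding the definitions, with the main work going into (2) and careful bookkeeping of scalars, permutations, and multiplicities.

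For part (1), I would first compute the cardinality of $\Ext(\mU)$ as a multiset. By definition $\Ext^1(\mU_i) = (L_{\mU_i},\mm_{\mU_i})$, so its cardinality (counted with multiplicity) equals
\[
\sum_{P \in L_{\mU_i}} \mm_{\mU_i}(P) \;=\; \sum_{\langle v\rangle_{\Fm}\,:\,\mU_i\cap\langle v\rangle_{\Fm}\neq 0}\frac{q^{\dim_{\Fq}(\mU_i\cap \langle v\rangle_{\Fm})}-1}{q-1} \;=\; \frac{q^{n_i}-1}{q-1},
\]
since the one-dimensional $\Fq$-subspaces of $\mU_i$ are exactly partitioned among the $\Fm$-lines they span (each $\Fm$-line $\langle v\rangle_{\Fm}$ contributing $(q^{\dim_{\Fq}(\mU_i\cap\langle v\rangle_{\Fm})}-1)/(q-1)$ one-dimensional $\Fq$-subspaces). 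Summing over $i\in[t]$ yields the claimed cardinality $(q^{n_1}+\cdots+q^{n_t}-t)/(q-1)$. The spanning property is then immediate: by definition of an $\Fmnk$ system $\langle \mU_1,\ldots,\mU_t\rangle_{\Fm}=\Fm^k$, and hence the union $L_{\mU_1}\cup\cdots\cup L_{\mU_t}$ spans $\PG(k-1,q^m)$.

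For part (2), by Definition \ref{def:equiv_systems} there are $\varphi\in \GL(k,\Fm)$, $\mathbf{a}=(a_1,\ldots,a_t)\in(\Fm^*)^t$ and $\sigma\in\cS_t$ such that $\varphi(\mU_i)=a_i\cV_{\sigma(i)}$ for every $i\in[t]$. The key observation is that the construction $\Ext^1$ is insensitive to both $\Fm^*$-scaling and to application of an $\Fm$-linear isomorphism, in the following precise sense: for any $a\in\Fm^*$ and any $\mU_i\in \mathfrak D\Fmkt$,
\[
L_{a\mU_i} = L_{\mU_i},\qquad \mm_{a\mU_i}=\mm_{\mU_i},\qquad L_{\varphi^{-1}(\cV_j)}=\varphi^{-1}(L_{\cV_j}),\qquad \mm_{\varphi^{-1}(\cV_j)}(\langle v\rangle_{\Fm})=\mm_{\cV_j}(\langle \varphi(v)\rangle_{\Fm}),
\]
all of which follow directly from the definitions (scalars leave both the set of $\Fm$-lines met and the intersection dimensions unchanged, while $\varphi$ bijects $\Fm$-lines and preserves $\Fq$-dimensions of intersections). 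Combining, we get $\Ext^1(\mU_i)=\varphi^{-1}(\Ext^1(\cV_{\sigma(i)}))$ as multisets in $\PG(k-1,q^m)$.

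It remains to pass to the disjoint union. Since disjoint union of multisets is commutative, reindexing by $\sigma$ is harmless, and $\varphi^{-1}$ commutes with disjoint union, so
\[
\Ext(\mU)=\biguplus_{i=1}^t\Ext^1(\mU_i)=\biguplus_{i=1}^t\varphi^{-1}\!\left(\Ext^1(\cV_{\sigma(i)})\right)=\varphi^{-1}\!\left(\biguplus_{j=1}^t\Ext^1(\cV_j)\right)=\varphi^{-1}(\Ext(\cV)).
\]
Hence the projectivity $[\varphi]\in\PGL(k,\Fm)$ sends $\Ext(\cV)$ to $\Ext(\mU)$, proving equivalence. The only potential pitfall is the multiplicity bookkeeping under $\Fm^*$-scaling and under $\varphi$; once those are checked, the argument is essentially formal. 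No step appears genuinely hard, so I expect the proof to be largely a careful, definition-driven verification, which is why the authors remark that the details may be left to the reader.
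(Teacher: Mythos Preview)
Your proposal is correct and follows essentially the same definition-chasing approach as the paper. The only differences are cosmetic: you verify the multiset cardinality explicitly (the paper leaves this implicit in the codomain of $\Ext$), and you track the scalars $a_i$ and the permutation $\sigma$ separately, whereas the paper absorbs both at once with an ``up to permutation'' reduction before applying a single $f\in\GL(k,q^m)$.
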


\begin{proof}
\begin{enumerate}
    \item Let $\mU\coloneqq(\mU_1,\ldots,\mU_t)$ be an $\Fmnk$ system. Then, by definition it holds that $\langle \mathcal{U}_1,\ldots,\mathcal{U}_t\rangle_{\Fm}=\Fm^k$. Thus, in $\PG(k-1,q^m)$ we have 
    $$\langle \Ext(\mU)\rangle=\langle L_{\mU_1},\ldots,L_{\mU_t}\rangle=L_{\langle \mU_1,\ldots,{\mU_t}\rangle_{\Fm}}=\PG(k-1,q^m). $$
    \item Let $\mU\coloneqq(\mU_1,\ldots,\mU_t)$, $\mV\coloneqq(\mV_1,\ldots,\mV_t)$ be two equivalent $\Fmnk$ systems. Up to permutation, we may assume that there exists $f \in \GL(k,q^m)$ such that $f(\mU_i)=\mV_i$, for every $i \in [t]$. Therefore, we have that 
    \begin{align*}\Ext(\mV)&=\biguplus\limits_{i=1}^t\Ext^1(\mV_i) = \biguplus\limits_{i=1}^t\Ext^1(f(\mU_i))= \biguplus\limits_{i=1}^t(L_{f(\mU_i)},\mm_{f(\mU_i)}) \\
    &=\biguplus\limits_{i=1}^t(\overline{f}(L_{\mU_i}),\mm_{f(\mU_i)}) = \overline{\phi}(\Ext(\mU)),
    \end{align*}
    where $\overline{f}$ is the projection of $f$ in $\PGL(k,q^m)$, proving the claim.
\end{enumerate}
\end{proof}

Hence, by Proposition \ref{prop:ext_welldefined}, we have a natural induced map 
$$ \Ext^{\HH}: \mathfrak U \Fmnk \longrightarrow \mathfrak U[M,k]_{q^m},$$
where $M\coloneqq \frac{q^{n_1}+\ldots+q^{n_t}-t}{q-1}$, and  $\mathfrak U[M,k]_{q^m}$ denotes the set of the equivalence classes of projective $[M,k]_{q^m}$ systems,
\begin{definition}
Let $\C$ be a sum-rank nondegenerate $\Fmnk$ code. Any code $C \in (\Phi^{\HH}\circ \Ext^{\HH} \circ \Psi)([\C])$ is called an \textbf{associated Hamming-metric code} with $\C$. 
\end{definition}

When the choice of the associated Hamming-metric code with an $\Fmnk$ code $\C$ is not important nor relevant (e.g. when the property we are investigating is invariant under code equivalence), we will simply write $\C^{\HH}$ to denote it.  

\begin{definition}
 Let $G=(G_1 \,|\, \ldots \,|\, G_t)\in \Fm^{k\times N}$ be the generator matrix of a nondegenerate $\Fmnk$ code, with $G_i \in \Fm^{k \times n_i}$, and let $\mU=(\mU_1,\ldots, \mU_t)$ be the $\Fmnk$ system where $\mU_i$  is the $\Fq$-span of the columns of $G_i$. We define 
 $G_{\Ext}\in \Fm^{k\times M}$ to be any matrix obtained from the projective $[M,k]_{q^m}$ system $\Ext(\mU)$, where $M\coloneqq \frac{q^{n_1}+\ldots+q^{n_t}-t}{q-1}$. 
\end{definition}

\begin{remark}
 In contrast with what happens for rank-metric codes (see \cite[Section 4]{alfarano2021linear}), in this case it is not clear how to determine the minimum distance of $\C^{\HH}$ starting from the minimum sum-rank distance of $\C$. However, the following result links the rank list distribution of $\C$ with the Hamming weights of $\C^{\HH}$.
\end{remark}

\begin{proposition}\label{prop:weight_G_Ext}
 Let $G=(G_1 \,|\, \ldots \,|\, G_t)\in \Fm^{k\times N}$ be the generator matrix of a nondegenerate $\Fmnk$ code, and let $v \in \Fm^k \setminus \{0\}$. Then
 $$\ww_{\HH}(vG_{\Ext})= \sum_{i=1}^t\frac{q^{n_i}-q^{n_i-\rk(vG_i)}}{q-1}=\sum_{i=1}^t\frac{q^{n_i}-q^{n_i-\rho(vG)_i}}{q-1}.$$

In particular, if $\C$ is a nondegenerate $\Fmnk$ code, then the minimum distance of $\C^{\HH}$ is given by
 $$ \dd(\C^{\HH})= \min_{\mathbf{r}\in \mathrm{S}(\C)}\sum_{i=1}^t\frac{q^{n_i}-q^{n_i-r_i}}{q-1},$$
 where $\mathrm{S}(\C)$ is the set of rank-lists of $\C$. 
\end{proposition}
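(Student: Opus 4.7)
The plan is to evaluate $\ww_{\HH}(vG_{\Ext})$ directly in terms of how the $\Fm$-hyperplane $v^\perp$ meets each $\mU_i$, and then invoke Theorem \ref{th:connection} to replace those intersection dimensions by ranks.

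First I would unpack the construction of $G_{\Ext}$: by definition its columns are nonzero representatives of the points of the multiset
\[ \Ext(\mU)=\biguplus_{i=1}^t\bigl(L_{\mU_i},\mm_{\mU_i}\bigr), \]
each appearing with the prescribed multiplicity, so the total length is $M=\sum_{i=1}^t(q^{n_i}-1)/(q-1)$. A column represented by a vector $u$ contributes to the Hamming support of $vG_{\Ext}$ exactly when $\langle u\rangle_{\Fm}\not\subseteq v^\perp$. Hence
\[ \ww_{\HH}(vG_{\Ext})=M-\sum_{i=1}^t\sum_{P\subseteq v^\perp}\mm_{\mU_i}(P), \]
where the inner sum ranges over the projective points $P$ of $\PG(k-1,q^m)$ contained in $v^\perp$.

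Next I would simplify each inner sum by a standard linear-set count. Since $v^\perp$ is $\Fm$-invariant, every nonzero vector $u$ of $\mU_i\cap v^\perp$ spans a projective point $\langle u\rangle_{\Fm}\subseteq v^\perp$, so the nonzero vectors of $\mU_i\cap v^\perp$ decompose as the disjoint union of the sets $(\mU_i\cap P)\setminus\{0\}$ as $P$ varies over the points of $v^\perp$. This yields
\[ \sum_{P\subseteq v^\perp}\bigl(q^{\dim_{\Fq}(\mU_i\cap P)}-1\bigr)=q^{\dim_{\Fq}(\mU_i\cap v^\perp)}-1, \]
and dividing by $q-1$ converts the left-hand side into $\sum_{P\subseteq v^\perp}\mm_{\mU_i}(P)$. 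Substituting back gives
\[ \ww_{\HH}(vG_{\Ext})=\sum_{i=1}^t\frac{q^{n_i}-q^{\dim_{\Fq}(\mU_i\cap v^\perp)}}{q-1}. \]

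Finally I would invoke Theorem \ref{th:connection}, which identifies $\dim_{\Fq}(\mU_i\cap v^\perp)$ with $n_i-\rk(vG_i)$; this produces the first equality of the statement, while the second equality is simply the definition of the rank-list $\rho(vG)$. The formula for $\dd(\C^{\HH})$ follows at once: $\ww_{\HH}(vG_{\Ext})$ depends on $v$ only through $\rho(vG)$ and the nonzero codewords of $\C^{\HH}$ are exactly the vectors $vG_{\Ext}$ with $v\in\Fm^k\setminus\{0\}$, so minimising over $v$ coincides with minimising the displayed expression over $\mathrm{S}(\C)$. I expect no substantive obstacle: the argument is essentially bookkeeping, and the only step warranting a moment of care is the partition identity above, which is the elementary fact that $V\setminus\{0\}$ is the disjoint union of the punctured projective lines it covers, applied to $V=\mU_i\cap v^\perp$.
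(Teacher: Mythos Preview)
Your proposal is correct and follows essentially the same route as the paper's own proof: both start from $\ww_{\HH}(vG_{\Ext})=M-\sum_{P\subseteq v^\perp}\mm(P)$, reduce the inner sum via the partition of $(\mU_i\cap v^\perp)\setminus\{0\}$ into the fibres $(\mU_i\cap P)\setminus\{0\}$, and then apply Theorem~\ref{th:connection} to replace $\dim_{\Fq}(\mU_i\cap v^\perp)$ by $n_i-\rk(vG_i)$. The paper phrases the counting step as $\frac{1}{q-1}\sum_i|\mathcal H\cap(\mU_i\setminus\{0\})|$, but this is exactly your partition identity.
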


\begin{proof}
Let $\mU=(\mU_1,\ldots, \mU_t)$ be the $\Fmnk$ system choosing as $\mU_i$ the $\Fq$-span of the columns of $G_i$. Let $\mathcal{H}$ be the hyperplane defined by $v^{\perp}$. Then Hamming-weight of $vG_{\Ext}$ is given by
\[
\mathrm{w}_{\HH}(vG_{\Ext})=M-\sum_{P \in \PG(\mathcal{H},\F_{q^m})} \mathrm{m}(P),
\]
where $\mathrm{m}$ is the multiplicity function of $\Ext(\mathcal{U})$. Observe that 
\begin{align*}
\sum_{P \in \PG(\mathcal H,\F_{q^m})} \mathrm{m}(P) & = \sum_{i=1}^t \sum_{P \in \PG(\mathcal H,\F_{q^m})} \mathrm{m_{\mathcal{U}_i}}(P) \\
& =\sum_{i=1}^t \sum_{\substack{\mathcal{V} \subseteq \mathcal{H}, \\ \dim_{\F_{q^m}}(\mathcal{V})=1}} \frac{q^{\dim_{\F_q}(\mathcal{U}_i \cap \mathcal{V})}-1}{q-1} \\
& =\frac{1}{q-1} \sum_{i=1}^t \lvert \mathcal{H} \cap (\mathcal{U}_i \setminus \{0\}) \rvert \\
& = \sum_{i=1}^t \frac{q^{\dim_{\F_q}(\mathcal{H} \cap \mathcal{U}_i )}-1}{q-1}.
    \end{align*}
Hence \[\mathrm{w}_{\HH}(vG_{\Ext})=\sum_{i=1}^t \frac{q^{n_i}-1}{q-1}-\sum_{i=1}^t \frac{q^{\dim_{\F_q}(\mathcal H \cap \mathcal{U}_i )}-1}{q-1}=
\sum_{i=1}^t\frac{q^{n_i}-q^{\dim_{\F_q}\mathcal (H \cap \mathcal{U}_i )}}{q-1}.
\]
The first part of the assertion follows from Theorem \ref{th:connection}, whereas the second one follows by definition of rank-list of $\mathcal{C}$.
\end{proof}

The above result extends to the sum-rank metric \cite[Theorem 8]{alfarano2021linear}, \cite[Section 5]{blokhuis2000scattered} and \cite[Theorem 7.1]{zini2020scattered}.

\subsection{Parameters of constant rank-profile codes}
We have seen how the machinery defined above links sum-rank metric codes to Hamming-metric codes. We now exploit this machinery to derive properties of constant rank-profile codes.

\begin{proposition}\label{prop:constant_rank_profile_simplex}
 Let $\C$ be a nondegenerate $[(n,\ldots,n),k]_{q^m/q}$ code which is constant rank-profile. Then $\C^{\HH}$ is a constant weight code in the Hamming metric. 
\end{proposition}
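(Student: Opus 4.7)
The plan is to apply Proposition \ref{prop:weight_G_Ext} directly, observing that the Hamming weight formula derived there becomes a symmetric function of the rank-list entries once all block lengths are equal.

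First, let $G = (G_1 \mid \ldots \mid G_t)$ be a generator matrix of $\C$ and fix $v \in \F_{q^m}^k \setminus \{0\}$. Since $n_i = n$ for every $i \in [t]$, Proposition \ref{prop:weight_G_Ext} gives
\[
\ww_{\HH}(vG_{\Ext}) \;=\; \sum_{i=1}^t \frac{q^n - q^{n-\rho(vG)_i}}{q-1} \;=\; \frac{tq^n}{q-1} - \frac{1}{q-1}\sum_{i=1}^t q^{n - \rho(vG)_i}.
\]
The key observation is that the sum $\sum_{i=1}^t q^{n-\rho(vG)_i}$ depends only on the multiset of entries of $\rho(vG)$, that is, only on the rank-profile $\mu(vG)$, since rearranging the $\rho(vG)_i$ in non-increasing order does not change the value of a symmetric function of them.

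Next, I invoke the constant rank-profile hypothesis: there is a fixed tuple $\mu \in \NN^t$ such that $\mu(vG) = \mu$ for every nonzero $vG \in \C$. Combining this with the previous display shows that $\ww_{\HH}(vG_{\Ext})$ takes the same value for every nonzero $v$, so $\C^{\HH}$ is a constant-weight Hamming code, as required.

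There is essentially no obstacle here: the proof is a direct consequence of Proposition \ref{prop:weight_G_Ext} together with the symmetry of the weight formula under permutation of the blocks, which is available precisely because all block lengths $n_i$ coincide. The only thing to keep track of is that the associated Hamming-metric code $\C^{\HH}$ is defined up to equivalence, but Hamming weight distributions are invariant under code equivalence, so the statement ``$\C^{\HH}$ is constant weight'' is well-posed.
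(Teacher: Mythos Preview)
Your proof is correct and follows essentially the same approach as the paper: both apply Proposition \ref{prop:weight_G_Ext}, use that all block lengths equal $n$, and conclude that the Hamming weight depends only on the rank-profile, which is constant by hypothesis. Your explicit remark that the weight formula is a symmetric function of the $\rho(vG)_i$ (hence invariant under passage from rank-list to rank-profile) makes the logic slightly more transparent than the paper's version, but the argument is the same.
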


\begin{proof}
Let $\mathbf{r}=(r_1,\ldots,r_t)$ be the constant rank-profile. Let $G$ be a generator matrix of $\C$ and let us take the code $\C^{\HH}$ to be the one generated by $G_{\Ext}$. Then, for every $v \in \Fm^k\setminus \{0\}$, we have by Proposition \ref{prop:weight_G_Ext} that
$$ \ww_{\HH}(vG_{\Ext})=\sum_{i=1}^t\frac{q^{n}-q^{n-r_i}}{q-1}$$
is constant. 
\end{proof}

\begin{corollary}\label{cor:parameters}
 Let $\C$ be a nondegenerate $[(n,\ldots,n),k]_{q^m/q}$ code which is constant rank-profile, with rank-profile $\mathbf{r}=(r_1,\ldots, r_t)$. Then: 
 \begin{enumerate}
     \item The number $$\ell\coloneqq\frac{t(q^n-1)(q^m-1)}{(q-1)(q^{km}-1)}$$ is a positive integer.
     \item It holds that $$ t q^{m(k-1)} (q^n-1)(q^m-1)=(q^{km}-1)\Big(tq^n-\sum_{i=1}^tq^{n-r_i}\Big). $$
 \end{enumerate}
\end{corollary}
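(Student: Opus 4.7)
The plan is to lift the problem to the associated Hamming-metric code $\C^{\HH}$ and exploit the rigidity of constant-weight Hamming codes. By Proposition \ref{prop:constant_rank_profile_simplex}, $\C^{\HH}$ is a constant-weight Hamming code over $\F_{q^m}$ of length $M=\frac{t(q^n-1)}{q-1}$, dimension $k$, and common Hamming weight
\[
w=\sum_{i=1}^t\frac{q^n-q^{n-r_i}}{q-1}
\]
(the latter coming from Proposition \ref{prop:weight_G_Ext}). Moreover, $\C^{\HH}$ is Hamming-nondegenerate: the columns of $G_{\Ext}$ are representatives of points in $\PG(k-1,q^m)$, hence nonzero, and they span $\F_{q^m}^k$ because the $\mU_i$ do.

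For part (1), I would invoke Bonisoli's classification of constant-weight linear codes: every nondegenerate $\F_{q^m}$-linear code all of whose nonzero codewords have the same Hamming weight is monomially equivalent to an $\ell$-fold replication of the simplex $\bigl[\tfrac{q^{mk}-1}{q^m-1},k\bigr]_{q^m}$-code, for some positive integer $\ell$. Applying this to $\C^{\HH}$ and comparing lengths yields
\[
M=\ell\cdot\frac{q^{mk}-1}{q^m-1}, \qquad\text{so}\qquad \ell=\frac{M(q^m-1)}{q^{mk}-1}=\frac{t(q^n-1)(q^m-1)}{(q-1)(q^{mk}-1)},
\]
which is automatically a positive integer by Bonisoli's theorem. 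This proves (1).

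For part (2), I would carry out a straightforward double count of the pairs $(v,j)$ with $v\in\F_{q^m}^k\setminus\{0\}$ and $j\in[M]$ such that $(vG_{\Ext})_j\neq 0$. Summing over $v$ gives $(q^{mk}-1)w$ by constancy of the weight; summing over $j$ gives $M\cdot q^{m(k-1)}(q^m-1)$, since each column $g_j$ of $G_{\Ext}$ is nonzero and thus $v\cdot g_j=0$ for exactly $q^{m(k-1)}$ vectors $v$. Equating the two expressions and substituting the formulas for $M$ and $w$ yields
\[
tq^{m(k-1)}(q^n-1)(q^m-1)=(q^{mk}-1)\Bigl(tq^n-\sum_{i=1}^tq^{n-r_i}\Bigr),
\]
as required. (Equivalently, one can use that the simplex $[\tfrac{q^{mk}-1}{q^m-1},k]_{q^m}$-code has constant weight $q^{m(k-1)}$, so the $\ell$-fold replication has constant weight $w=\ell q^{m(k-1)}$, giving (2) after substitution.) The only real subtlety is ensuring that $\C^{\HH}$ is Hamming-nondegenerate so that Bonisoli applies in its clean form; this is immediate from the definition of $G_{\Ext}$, as noted above.
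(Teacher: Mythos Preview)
Your proof is correct and follows essentially the same approach as the paper: both apply Bonisoli's classification to the associated Hamming-metric code $\C^{\HH}$, using Propositions \ref{prop:constant_rank_profile_simplex} and \ref{prop:weight_G_Ext} to identify its length and constant weight. For part (2) the paper simply equates the constant weight $w$ with $\ell\, q^{m(k-1)}$ (the weight of the $\ell$-fold simplex replication), whereas you give an equivalent direct double-count and mention the paper's route as an alternative; these are the same computation in different clothing.
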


\begin{proof}
 By Proposition \ref{prop:constant_rank_profile_simplex}, we have that $\C^{\HH}$ is a constant weight code in the Hamming metric. However, the only nondegenerate constant weight codes in the Hamming metric are (equivalent to) concatenations of $\ell$ copies of the $[\frac{q^{km}-1}{q^m-1},k,q^{m(k-1)}]_{q^m}$ simplex code; see \cite{bonisoli1983every}. Since the length of $\C^{\HH}$ is $t\frac{(q^n-1)}{(q-1)}$, then we obtain that 
 $$ \ell=\frac{t(q^n-1)(q^m-1)}{(q-1)(q^{km}-1)}$$ is a positive integer.
 Moreover, by Proposition \ref{prop:weight_G_Ext}, the minimum distance of $\C^{\HH}$ must be  
 $$\sum_{i=1}^t\frac{q^n-q^{n-r_i}}{q-1}, $$
 leading to the desired equality.
\end{proof}

Corollary \ref{cor:parameters} provides constraints on the parameters that a nondegenerate constant profile $[(n,\ldots,n),k]_{q^m/q}$ code can have, as the following example will illustrate. 

\begin{example}\label{exa:constantrankprofile}
Let us assume that we want to construct a nondegenerate  $[(n,\ldots,n),k]_{q^m/q}$ code with constant rank-profile $\mathbf{r}=(r_1,\ldots,r_t)$, and let us fix $n=3$, $k=m=2$. Using Corollary \ref{cor:parameters}(1), we obtain that 
$$ t\frac{q^2+q+1}{q^2+1}$$
is an integer, and thus $t=(q^2+1)t'$.
Now, using Corollary \ref{cor:parameters}(2), we deduce after some simplifications that 
$$ t'(q^3+q^2)=\sum_{i=1}^{(q^2+1)t'}q^{3-r_i}.$$
On the other hand, observe that $3-r_i=\dim_{\Fq}(\mU_i\cap H)$ for some $\F_{q^2}$-hyperplane of $\F_{q^2}^2$. Since $\dim_{\Fq}(\mU_i)=3$, then we can only have $r_i \in \{1,2\}$. Let 
$a_i\coloneqq|\{ j \st  r_j=i \}|$, for $i=1,2$. Therefore, we have
$$\begin{cases} a_1+a_2=t'(q^2+1) \\
q^2a_1+qa_2=t'(q^3+q^2),
\end{cases}
$$
which leads to the unique solution
$$ a_1=t', \qquad a_2=t'q^2. $$
Thus, the constant rank-profile must be
$$ \mathbf{r}= ( \underbrace{2, \ldots, 2}_{t'q^2 \text{ {times}}}, \underbrace{1, \ldots, 1}_{t' \text{ {times}}}).$$
\end{example}

\section{Orbital construction and simplex codes}\label{sec:orbital}

In this section we provide a construction of constant rank-profile codes using the orbits of a transitive action on the nonzero elements of $\Fm^k$. When the orbits are obtained by the action of a Singer subgroup of $\mathrm{GL}(k,q^m)$, such codes will be called simplex codes.

\subsection{Orbital construction}

Let $\mathcal{G}\le\GL(k,q^m)$, and consider the action of $\mathcal{G}$ on $\Fm^k\setminus\{0\}$ -- which we denote by $\phi_{\mathcal{G}}$ -- induced by the one of $\GL(k,q^m)$, namely
$$ \begin{array}{rccl} \phi_{\mathcal{G}} : & \mathcal{G}\times (\Fm^k\setminus\{0\}) & \longrightarrow & \Fm^k\setminus\{0\}\\
&(A,v)& \longmapsto & vA. 
\end{array}$$
The same group $\mathcal{G}$ can be considered also acting on any Grassmannian. More precisely, for any positive integer $r$ dividing $m$, and any positive integer $s$ we have that $\phi_{\mathcal{G}}$ induces an action on 
$$\mathrm{Gr}_{q^r}(s,\Fm^k)\coloneqq\{ \mathcal{V} \subseteq \Fm^k \st {\mathcal{V} \mbox{ is an } \F_{q^r}\mbox{-subspace and }} \dim_{\F_{q^r}}(\mathcal{V})=s\}. $$
However, in this case, the action of $\mathcal{G}$ is not always faithful, since its kernel is given by $\mathcal{G}\cap \D_{q^r}$, where $\D_{q^r}=\{\alpha I_k \st \alpha \in \F_{q^r}^*\}$. 
However, if for any $r$ dividing $m$ we restrict to study equivalence classes of the Grassmannian $\mathrm{Gr}_{q^r}(s,\Fm^k)$ according to Definition \ref{def:equiv_systems}, then it is clear that also $\cG\cap \D_{q^m}$ acts trivially. Thus, we have an induced action of the group $\overline{\cG}\coloneqq \mathcal{G}/(\mathcal{G}\cap \D_{q^m})$ on the set $\mathfrak U[s,k,d]_{q^m/q^r}$, which we denote by  
$\phi_{\mathcal{G}}^{r,s}$.

\noindent Of particular interest for us is the action $\phi_{\mathcal{G}^\top}^{m,k-1}$ that 
$$\mathcal{G}^\top\coloneqq \left\{A^\top \st A \in \mathcal G \right\}$$
induces on $\mathrm{Gr}_{q^m}(k-1,\Fm^k)$, that is the set of $\Fm$-hyperplanes of $\Fm^k$. For brevity, we  will denote it by $\hat{\phi}_{\mathcal{G}}$.

From now on we will say that a group $\mathcal{G}\le\GL(k,q^m)$ is \textbf{transitive} whenever the action $\phi_{\mathcal{G}}^{m,1}$ is  transitive. The following result is straightforward.

\begin{lemma}\label{lem:phi_transitive}
If $\mathcal{G}$ is transitive, then also $\hat{\phi}_{\mathcal{G}}$ is transitive. 
\end{lemma}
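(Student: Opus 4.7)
The plan is to exploit the standard projective duality between points and hyperplanes of $\PG(k-1,q^m)$, which identifies them via the bijection $\Phi \colon \langle v\rangle_{\Fm} \mapsto v^\perp$, and to show that this duality intertwines the $\mathcal{G}$-action on points with the $\mathcal{G}^\top$-action on hyperplanes. Once this equivariance is established, transitivity transfers from one side to the other for free.

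First, I will record the key identity: for every $A \in \GL(k,q^m)$ and every nonzero $v \in \Fm^k$,
\[
v^\perp \cdot (A^{-1})^\top = (vA)^\perp.
\]
This is a one-line computation from $\langle xB^\top, y \rangle = \langle x, yB\rangle$ applied to $B = A$: an element $u$ lies in the left-hand side iff $u = w(A^{-1})^\top$ with $\langle w, v\rangle = 0$, and substituting $w = uA^\top$ gives $\langle u, vA\rangle = 0$, i.e.\ $u \in (vA)^\perp$.

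Since $\mathcal{G}$ is a group, the set $\{(A^{-1})^\top \st A \in \mathcal{G}\}$ coincides with $\mathcal{G}^\top$, so the identity above says precisely that the bijection $\Phi$ intertwines the action $\phi_\mathcal{G}^{m,1}$ on $\PG(k-1,q^m)$ with the action $\hat\phi_\mathcal{G} = \phi_{\mathcal{G}^\top}^{m,k-1}$ on $\Fm$-hyperplanes. Because $\Phi$ is a $\mathcal{G}$-equivariant bijection and the former action is transitive by hypothesis, the latter must be transitive as well.

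The only real point to be careful about is bookkeeping: keeping left/right conventions straight, verifying that $\Phi$ is well-defined on equivalence classes in $\mathfrak{U}[k-1,k,d]_{q^m/q^m}$ (immediate, since $1$-dimensional $\Fm$-subspaces have no nontrivial rescalings to worry about), and noting that the switch from $(A^{-1})^\top$ to $A^\top$ in the parametrization of $\mathcal{G}^\top$ is harmless precisely because $\mathcal{G}$ is closed under inversion. No deeper obstacle is expected, which matches the paper's description of the statement as straightforward.
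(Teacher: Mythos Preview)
Your proof is correct and in fact more direct than the paper's argument. Both proofs exploit the point--hyperplane duality $\langle v\rangle_{\Fm}\mapsto v^\perp$, but they diverge at the second step. The paper first invokes duality to reduce to showing that $\phi_{\mathcal{G}^\top}^{m,1}$ is transitive on \emph{points}, and then applies Burnside's Lemma together with the observation that $A$ and $A^\top$ share the same spectrum and eigenspace dimensions, forcing $|X^A|=|X^{A^\top}|$ and hence equal orbit counts for $\mathcal{G}$ and $\mathcal{G}^\top$. Your approach bypasses the counting argument entirely: the identity $v^\perp\cdot(A^{-1})^\top=(vA)^\perp$ exhibits $\Phi$ as an equivariant bijection along the group isomorphism $\mathcal{G}\to\mathcal{G}^\top$, $A\mapsto (A^{-1})^\top$, so transitivity transfers immediately. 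Your route is shorter and avoids the appeal to Burnside; the paper's approach, while heavier, has the mild advantage that it proves the stronger statement that $\mathcal{G}$ and $\mathcal{G}^\top$ have the same number of orbits on $\PG(k-1,q^m)$ even without the transitivity hypothesis.
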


\begin{proof}
By duality, it is equivalent to show that $\mathcal{G}^\top$ acts transitively on the one-dimensional $\F_{q^m}$-subspaces of $\Fm^k$, or, in other words, that the action $\phi_{\mathcal{G}^\top}^{m,1}$ is transitive. Let $X\coloneqq \mathrm{Gr}_{q^m}(1,\Fm^k)$, and for any matrix $A\in\mathcal G$ denote by $X^A$ to be the set of fixed points of $X$ under $A$, that is 
$$ X^A:=\left\{V\in X \st \phi_{\mathcal G}^{m,1}(A,V)=V \right\}.$$ 
However, any element in $X$ is of the form $\langle v\rangle_{\Fm}$ for some $v \in \Fm^k\setminus \{0\}$, and it holds that 
\begin{align*} X^A&=\{\langle v\rangle_{\Fm} \st \langle vA\rangle_{\Fm}=\langle v\rangle_{\Fm}  \} \\
&= \{\langle v\rangle_{\Fm} \st vA=\lambda v, \mbox{ for some } \lambda \in \Fm^*\} \\
&=\Big\{ \langle v\rangle_{\Fm} \in  X \st v \in \bigcup_{\lambda \in \sigma_A} A_{\lambda}\Big\},
\end{align*}
where $\sigma_A$ denotes the spectrum of $A$, and $A_\lambda$ is the eigenspace of $A$ with respect to the eigenvalue $\lambda$.
By Burnside's Lemma, one has that the number of orbits of the actions $\phi_{\mathcal{G}}^{m,1}$ and $\phi_{\mathcal{G}^\top}^{m,1}$ are respectively
\begin{align*}|X/\mathcal G|&=  \frac{1}{|\mathcal G|}\sum_{A\in\mathcal G}|X^A| \\
|X/\mathcal G^\top|&=  \frac{1}{|\mathcal G|}\sum_{A\in\mathcal G}|X^{A^\top}|
\end{align*}
Since $A$ and $A^\top$ have the same spectrum and isomorphic eigenspaces, we conclude that
$$ 1=|X/\mathcal G|=|X/\mathcal G^\top|,$$
and therefore $\phi_{\mathcal{G}^\top}^{m,1}$ is transitive.
\end{proof}

At this point we can state the main result of this section, which allows to construct several constant rank-profile codes.

\begin{theorem}\label{thm:orbit_construction}
Let $r$ be a divisor of $m$ and let $t$ be a positive integer such that $n\coloneqq rt \leq km$.  Let $\mathcal{G}\le\GL(k,q^m)$ be a transitive subgroup and let $\cO=(\cU_1,\ldots,\cU_t)$ be an orbit of the action $\phi_{\mathcal{G}}^{r,s}$. Then every code $\C\in\Phi([\cO])$ is a constant rank-profile code. 
\end{theorem}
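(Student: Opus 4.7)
The plan is to realize $\mathcal G^\top$ as a group of sum-rank isometries of a well-chosen representative $\mathcal C\in\Phi([\mathcal O])$ that acts transitively on the $1$-dimensional $\Fm$-subspaces of the message space $\Fm^k$. Combined with the invariance of the rank-profile under $\Fm^*$-scaling, this will force every nonzero codeword of $\mathcal C$ to have the same rank-profile.

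First I would pick a representative $\mathcal U_0$ of one class in the orbit and, for each $i\in[t]$, an element $A_i\in\mathcal G$ such that $\mathcal U_i=\mathcal U_0 A_i$ (where $\mathcal G$ acts on the right, as in $\phi_\mathcal G$). All the $\mathcal U_i$'s then have the same $\Fq$-dimension $n=rs$, so the associated code has constant block length $\bfn=(n,\ldots,n)$. If $G_0\in\Fm^{k\times n}$ is any matrix whose columns form an $\Fq$-basis of $\mathcal U_0$, then $G_i\coloneqq A_i^\top G_0$ has columns spanning $\mathcal U_i$, so $G=(G_1\mid\cdots\mid G_t)$ generates a code $\mathcal C\in\Phi([\mathcal O])$ and it suffices to prove that this specific code is constant rank-profile.

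For the core computation, fix $A\in\mathcal G$. Since $\mathcal G$ preserves the orbit as equivalence classes in $\mathfrak U[s,k]_{q^m/q^r}$, there exist a permutation $\sigma\in\mathcal S_t$ and scalars $c_i\in\Fm^*$ such that $\mathcal U_i A=c_i\mathcal U_{\sigma(i)}$; this rewrites as $A_iA=c_i\widetilde B_i A_{\sigma(i)}$ for some $\widetilde B_i$ in the stabilizer of $\mathcal U_0$. Because $\widetilde B_i$ fixes $\mathcal U_0$ as an $\Fq$-subspace, its transpose sends the basis of $\mathcal U_0$ encoded by the columns of $G_0$ to another basis, that is $\widetilde B_i^\top G_0=G_0M_i$ for some $M_i\in\GL(n,\Fq)$. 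Plugging everything together yields
\[
vA^\top G_i \;=\; v(A_iA)^\top G_0 \;=\; c_i\,vA_{\sigma(i)}^\top\widetilde B_i^\top G_0 \;=\; c_i\,vG_{\sigma(i)}M_i,
\]
so the codeword $vA^\top G$ arises from $vG$ by permuting the blocks through $\sigma$, right-multiplying the $i$-th block by $M_i\in\GL(n,\Fq)$, and scaling it by $c_i\in\Fm^*$. All of these operations live in the sum-rank isometry group $((\Fm^*)^t\times\GL(\bfn,\Fq))\rtimes\mathcal S_{\lambda(\bfn)}$, so $vG$ and $vA^\top G$ share the same multiset of block ranks.

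To conclude, Lemma \ref{lem:phi_transitive} guarantees that $\mathcal G^\top$ acts transitively on the $1$-dimensional $\Fm$-subspaces of $\Fm^k$, so for any two nonzero $v,v'\in\Fm^k$ there are $A\in\mathcal G$ and $\alpha\in\Fm^*$ with $v'=\alpha vA^\top$; scalar multiplication by $\alpha$ preserves block ranks, so $v'G$ has the same rank-profile as $vA^\top G$, and hence as $vG$ by the previous step. Thus $\mathcal C$ is constant rank-profile, and the property propagates to every element of $\Phi([\mathcal O])$ by equivalence. The most delicate part of the argument is the bookkeeping with the scalars $c_i$ coming from the equivalence in $\mathfrak U[s,k]_{q^m/q^r}$ and with the transposition that turns a symmetry of the message space into one of the codewords, but both fit precisely into the $(\Fm^*)^t$-factor of the isometry group and into the statement of Lemma \ref{lem:phi_transitive}, respectively.
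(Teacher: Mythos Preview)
Your proof is correct and rests on the same core idea as the paper's: both exploit Lemma~\ref{lem:phi_transitive} (transitivity of $\mathcal G^\top$) and the fact that $\mathcal G$ permutes the $\mathcal U_i$'s up to $\Fm^*$-scalars. The difference is one of packaging rather than substance. The paper works on the \emph{system side}: for two hyperplanes $\mathcal H_1=v_1^\perp$ and $\mathcal H_2=v_2^\perp$, it picks $A\in\mathcal G$ with $(v_1A^\top)^\perp=v_2^\perp$ and observes directly that $\dim_{\Fq}((v_1A^\top)^\perp\cap\mathcal U_i)=\dim_{\Fq}(v_1^\perp\cap\mathcal U_iA)$, so $\nu_{\mathcal O}(\mathcal H_2)$ is a permutation of $\nu_{\mathcal O}(\mathcal H_1)$; the rank-profile statement then follows implicitly via Theorem~\ref{th:connection}. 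You work instead on the \emph{code side}, building a concrete generator matrix and exhibiting the map $vG\mapsto (vA^\top)G$ explicitly as a sum-rank isometry in $((\Fm^*)^t\times\GL(\bfn,\Fq))\rtimes\mathcal S_{\lambda(\bfn)}$. Your route is slightly longer but yields a small extra dividend: it shows that $\mathcal G^\top$ embeds, via $A\mapsto(\text{scaling},\text{block-}\GL,\sigma)$, into the isometry group of $\mathcal C$, acting transitively on the projectivized code. The paper's route is shorter and stays entirely in the geometric language it has just developed.
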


\begin{proof}
 Let $\mathcal{H}_1=v_1^\perp$, $\mathcal{H}_2=v_2^\perp \subseteq \Fm^k$ be two $\Fm$-hyperplanes. It is enough to show that the dimension-list $\nu_{\cO}(H_1)$ can be obtained by permuting the dimension-list $\nu_{\cO}(H_2)$. Since $\mathcal{G}$ is transitive, by Lemma \ref{lem:phi_transitive} also $\widehat{\phi}_{\mathcal{G}}$ is transitive, hence there exists
 $A\in \mathcal{G}$  such that $(v_1A^\top)^\perp=v_2^\perp$. Then, we have
 \begin{align*}
 \nu_{\cO}(H_2)&=\nu_{\cO}((v_1A^\top)^\perp)=(\dim_{\Fq}((v_1A^\top)^\perp\cap \mU_1),\ldots, \dim_{\Fq}((v_1A^\top)^\perp\cap \mU_t)) \\
 &=(\dim_{\Fq}(v_1^\perp\cap \phi_{\mathcal{G}}^{r,s}(A,\mU_1)),\ldots, \dim_{\Fq}(v_1^\perp\cap \phi_{\mathcal{G}}^{r,s}(A,\mU_t)))\\ 
 &=\nu_{\phi_{\mathcal{G}}^{r,s}(A,\cO)}(H_1).
 \end{align*}
Since $\cO$ is an orbit of the action $\phi_{\mathcal{G}}^{r,s}$, then $\phi_{\mathcal{G}}^{r,s}(A,\cO)$ is just a permutation of $\cO$, and hence $\nu_{\cO}(H_2)$ is a permutation of $ \nu_{\cO}(H_1)$. 
\end{proof}

In order to use Theorem \ref{thm:orbit_construction} for concrete constructions of constant rank-profile codes, we need to find concrete examples of transitive subgroups $\mathcal{G}\le \GL(k,q^m)$.

\begin{remark}
It is worth to note that the concatenation of two constant rank-profile codes is a rank-profile code as well.
\end{remark}

In the following remark we point out that the construction of one-weight sum-rank metric code in Theorem \ref{th:RS1w} cannot be obtained as a concatenation of orbital constructions.

\begin{remark}
Consider the $\fq$-subspaces associated with the doubly-extended linearized Reed-Solomon code of dimension two: $\mathcal{U}_i=\{(y,a_i \sigma(y))\st y \in \F_{q^m}\}$, for any $i\in[q-1]$ where the $a_i$'s have pairwise distinct norm over $\fq$, $\mathcal{U}_0=\langle(1,0)\rangle_{\fq} $ and $\mathcal{U}_{\infty}=\langle(0,1)\rangle_{\fq}$. 
Suppose that these subspaces have been obtained as the union of orbit under the action of some transitive groups. Since $\mathcal{U}_0$ and $\mathcal{U}_{\infty}$ are the only $\fq$-subspaces of dimension one, then $\{\mathcal{U}_0,\mathcal{U}_{\infty}\}$ is an orbit under a transitive subgroup of $\mathrm{GL}(2,q^m)$, which is not possible.
\end{remark}

\subsection{The simplex codes in the sum-rank metric}

In this section we define the concept of an $n$-simplex code for the sum-rank metric. This code is an $[(n,\ldots,n),k,d]_{q^m/q}$. We will show that it is constant rank-profile and its parameters are fully determined by the weight distribution of a special rank-metric code associated to it. Furthermore, we discuss the equivalence classes of $n$-simplex codes. 

\begin{definition}
 A \textbf{Singer subgroup} $\mathcal{G}\le \GL(k,q^m)$ is a cyclic subgroup of order ${q^{km}-1}$. \end{definition}
One may construct a Singer subgroup $\mathcal{G}\le \GL(k,q^m)$ by taking the multiplicative subgroup of the finite field $\F_{q^{km}}$. More specifically, we can take $\cH_p\coloneqq\{M_p^i \st 0\le i \le q^{km}-2\}$, where $M$ is the companion matrix of an irreducible polynomial $p$ of degree $k$ over $\Fm$. 
Furthermore, it is well-known that any Singer subgroup $\mathcal{G}$ is conjugate to $\cH_p$, that is
$$ \mathcal{G}=B \cH_p B^{-1}=\left\{BM_p^iB^{-1} \st 0 \le i \le q^{km}-2\right\}, $$
for some $B \in \GL(k,q^m)$.

\begin{definition}\label{def:simplex}
 Let $\mU$ be an $n$-dimensional $\Fq$ subspace of $\Fm^k$ and let $\mathcal{G}$ be a Singer subgroup of $\GL(k,q^m)$. An \textbf{$n$-simplex code} $\C_{\mathcal{G},\mU}$ is any $[(n,\ldots,n),k]_{q^m/q}$ code belonging to $\Phi([\cO])$, where $\cO=(\phi_{\mathcal{G}}^{1,n}(A,\mU))_{A \in \overline{\mathcal{G}}}$. If $\mathcal{G}=\cH_p$, then we say that the $n$-simplex code is \textbf{standard}.
\end{definition}

\begin{example}\label{ex:simplex}
Let $k=m=q=2$ and $n=3$. Let $\F_4=\F_2(\alpha)$, where $\alpha^2+\alpha+1=0$ and let us fix the $3$-dimensional $\Fq$-subspace 
$$\mU\coloneqq \left\{\begin{pmatrix} (a+b)\alpha+b, & a+c\alpha\end{pmatrix} \st a,b,c\in \F_2 \right\}\subseteq \F_4^2.$$
Let us consider the field extension $\F_{16}=\F_2(\beta)$, where $\beta^4+\beta+1=0$. One can compute the minimal polynomial of $\beta$ over $\F_2(\alpha)$, obtaining $p(x)=x^2+x+\alpha^2$.
The Singer subgroup $\cH_p$ is therefore given by 
$\cH_p\coloneqq \left\{ M_p^i \st 0\leq i \leq 14 \right\},$
where
$$M_p\coloneqq \begin{pmatrix} 0 & \alpha^2  \\ 1 & 1\end{pmatrix},$$
while $\overline{\cH}_p\cong \left\{ M_p^i \st 0\leq i \leq 4 \right\}.$
An $\Fq$-basis for $\mU$ is given by $(\alpha,1), (\alpha^2,0),(0,\alpha)$. Thus, the $3$-simplex code $\C_{\cH_p,\mU}$ is a $[(3,\ldots,3),2]_{4/2}$ code with generator matrix 
$$G=( \,A \, \mid \, (M_p)^\top A\, \mid \, \ldots \, \mid \, (M_p^\top)^{4}A\,), $$
where 
$$A\coloneqq \begin{pmatrix}\alpha & \alpha^2 & 0 \\ 1 & 0 & \alpha \end{pmatrix}.$$
Thus, we have
$$ G=\left( \begin{array}{ccc|ccc|ccc|ccc|ccc}
\alpha & \alpha^2 & 0 & 1 & 0 & \alpha & 0 & \alpha & \alpha & \alpha^2 & \alpha & \alpha^2 & \alpha^2 & \alpha^2 & \alpha \\
1&0 & \alpha & 0 & \alpha & \alpha & \alpha^2 & \alpha & \alpha^2 & \alpha^2 & \alpha^2 & \alpha & 1 & \alpha & 0
\end{array}\right),$$
the only sum-rank weight of this code is $9$ and the only rank-profile is $(2,2,2,2,1)$. This is an example of a constant rank-profile code whose parameters are those of Example \ref{exa:constantrankprofile} with $q=2$ and $t=(q^2+1)$ and $t'=1$.
\end{example}

We have just seen in Example \ref{ex:simplex} a construction of a $3$-simplex code, which has constant weight $9$ and constant rank-profile $(2,2,2,2,1)$. Thus, it is natural to ask what can be said in general about the only rank-profile and the only sum-rank weight of an $n$-simplex code in the sum-rank metric. The following result provides an answer to this question, linking the rank-profile and the constant weight to the weight distribution and to the total weight of a particular rank-metric code.

In order to do that we introduce some notations.  For a given $\Fq$-subspace $\mU \subseteq \Fm^k$ such that $s\coloneqq \dim_{\Fm}\langle \mathcal U\rangle_{\Fm}$, define
$\widetilde{\mU}$ to be an $\Fm$-linear embedding of $\mU$ in $\Fm^s$, that is an $[n,s]_{q^m/q}$ system.

Furthermore, for an element $v =(v_1, \ldots, v_t)\in \Fm^{\bfn}$, we define its \textbf{block-Hamming support}, to be the set 
$$\supp_{\bfn}^{\HH}(v)\coloneqq \{i \in [t] \st v_i \neq 0\}.$$

\begin{theorem}
Let $\mU$ be an $n$-dimensional $\Fq$-subspace of $\Fm^k$, let $\mathcal{G}$ be a Singer subgroup of $\GL(k,q^m)$, and let $t\coloneqq \frac{q^{km}-1}{q^m-1}$. Let $\widetilde{\mU}$ be the $[n,s]_{q^m/q}$ system deriving from $\mU$, and let $\widetilde{\C}\in \Phi([\widetilde{\mU}])$ be any associated rank-metric code.
Then, for every nonzero 
$c=(c_1, \ldots, c_t)\in \C_{\mathcal{G},\mU}$,
the map 
$$\begin{array}{rccc} \iota : &\supp_{\bfn}^{\HH}(c) & \longrightarrow & \PG(\widetilde{\mathcal{C}},\Fm) \\
&i & \longmapsto & \langle c_i\rangle_{\Fm}
\end{array}$$
is $q^{(k-s)m}$-to-$1$.
In particular,
$\supp_\bfn(c)$ is obtained by rearranging $q^{(k-s)m}$ copies of the vector $(\supp(\tilde{c}))_{\tilde c \in \PG(\widetilde{\C},\Fm)}$
together with $\binom{k-s}{1}_{q^m}$ occurrences of the trivial space $\{0\}$.

Thus, the nonzero weight of $\C_{\mathcal{G},\mU}$ is
\[ w=\frac{q^{(k-s)m}}{q^m-1} \sum_{c \in \widetilde{\C}}\ww(c). \]
\end{theorem}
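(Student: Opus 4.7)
The plan is to reduce the analysis of codewords in $\C_{\mathcal{G},\mU}$ to a single $\Fm$-linear surjection onto the associated rank-metric code $\widetilde{\C}$. First I would fix an $\Fq$-basis $u_1,\ldots,u_n$ of $\mU$ and let $G_0\in\Fm^{k\times n}$ be the matrix whose columns are $u_1^\top,\ldots,u_n^\top$. Picking representatives $A_1,\ldots,A_t$ of the cosets in $\overline{\mathcal{G}}$, the orbit $\mathcal{O}$ is realized by the $\Fq$-subspaces $\mU A_i$, so $G=(A_1^\top G_0\mid\cdots\mid A_t^\top G_0)$ is a generator matrix of $\C_{\mathcal{G},\mU}$, and the $i$-th block of a codeword $c=vG$ equals $c_i=(vA_i^\top)G_0$. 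Since $\mathcal{G}$ is a Singer subgroup, $|\overline{\mathcal{G}}|=t=|\PG(k-1,q^m)|$, and the argument used in the proof of Lemma~\ref{lem:phi_transitive} shows that $\mathcal{G}^\top$ acts transitively on the set of $\Fm$-lines of $\Fm^k$, hence freely by the cardinality match. Therefore, for every fixed nonzero $v$, the assignment $A_i\mapsto\langle vA_i^\top\rangle_{\Fm}$ is a bijection between a transversal of $\overline{\mathcal{G}}$ and $\PG(k-1,q^m)$.

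The main technical step, and the principal obstacle in the regime $s<k$, is the exact count of indices $i$ producing a prescribed block line. To carry it out I would introduce the $\Fm$-linear map $\varphi:\Fm^k\to\widetilde{\C}$, $w\mapsto wG_0$: it is surjective, and $\ker\varphi=\langle\mU\rangle_{\Fm}^{\perp}$ has $\Fm$-dimension $k-s$. Through the bijection above, the condition $c_i=0$ is equivalent to $\langle vA_i^\top\rangle_{\Fm}\subseteq\ker\varphi$, which happens for exactly $\binom{k-s}{1}_{q^m}$ indices (the number of $\Fm$-lines in $\ker\varphi$); these account for the zero entries of $\supp_\bfn(c)$. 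For each nonzero line $\langle\tilde c\rangle_{\Fm}\in\PG(\widetilde{\C},\Fm)$, the preimage $\varphi^{-1}(\langle\tilde c\rangle_{\Fm})$ is a $(k-s+1)$-dimensional $\Fm$-subspace of $\Fm^k$, and removing the $\binom{k-s}{1}_{q^m}$ lines lying in $\ker\varphi$ leaves
\[
\binom{k-s+1}{1}_{q^m}-\binom{k-s}{1}_{q^m}=q^{(k-s)m}
\]
$\Fm$-lines mapping onto $\langle\tilde c\rangle_{\Fm}$. Transporting via the bijection, exactly $q^{(k-s)m}$ indices $i$ satisfy $\langle c_i\rangle_{\Fm}=\langle\tilde c\rangle_{\Fm}$, which is the $q^{(k-s)m}$-to-one property of $\iota$.

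For the support description, whenever $c_i=\lambda\tilde c$ with $\lambda\in\Fm^*$, the two matrix expansions differ by right multiplication by the element of $\GL(m,\Fq)$ encoding the change of $\Fq$-basis from $(\gamma_l)_l$ to $(\lambda\gamma_l)_l$, and such a right multiplication preserves the $\Fq$-column span; hence $\supp(c_i)=\supp(\tilde c)$. Assembling the $q^{(k-s)m}$ copies of each $\supp(\tilde c)$ and the $\binom{k-s}{1}_{q^m}$ zero entries yields the asserted form of $\supp_\bfn(c)$. The weight formula is then immediate: each nonzero $\Fm$-line of $\widetilde{\C}$ contains $q^m-1$ codewords all sharing the same $\Fq$-rank, so
\[
\ww(c)=\sum_{i=1}^{t}\rk_q(c_i)=q^{(k-s)m}\!\!\!\!\!\sum_{\langle\tilde c\rangle\in\PG(\widetilde{\C},\Fm)}\!\!\!\!\!\rk_q(\tilde c)=\frac{q^{(k-s)m}}{q^m-1}\sum_{\tilde c\in\widetilde{\C}}\ww(\tilde c).
\]
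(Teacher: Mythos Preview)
Your proof is correct and follows essentially the same approach as the paper's: you set up the same block generator matrix $(A_1^\top G_0\mid\cdots\mid A_t^\top G_0)$, identify the blocks $c_i=(vA_i^\top)G_0$ as elements of the row span of $G_0$ (i.e.\ of $\widetilde{\C}$), use the regularity of the action of $\overline{\mathcal G^\top}$ on $\PG(k-1,q^m)$ to get the bijection $i\mapsto\langle vA_i^\top\rangle_{\Fm}$, and count preimages through the $(k-s)$-dimensional kernel of $w\mapsto wG_0$. Your write-up is in fact slightly more explicit than the paper's in two places: you carry out the line count $\binom{k-s+1}{1}_{q^m}-\binom{k-s}{1}_{q^m}=q^{(k-s)m}$ rather than just asserting the $q^{(k-s)m}$-to-one property, and you justify why $\supp(c_i)$ depends only on $\langle c_i\rangle_{\Fm}$, which the paper uses tacitly.
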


\begin{proof}
Let $c=(c_1, \ldots, c_t)\in \C_{\mathcal{G},\mU}$ and let $\{u_1,\ldots,u_n\}$ be an $\F_q$-basis of $\mathcal{U}$ and let $G \in \F_{q^m}^{k\times n}$ be the matrix whose columns are $u_1,\ldots,u_n$.
Suppose that $\mathcal{\overline{G}}=\{A_1=id,A_2,\ldots,A_{t}\}$. It follows that a generator matrix of $\C_{\mathcal{G},\mU}$ is $(G|A_2^\top G|\ldots|A_t^\top G)$, since $\{u_1A_i,\ldots,u_nA_i\}$ is an $\F_q$-basis of $\phi_{\mathcal{G}}^{1,n}(A_i,\mathcal{U})$. Moreover, note that the span over $\F_{q^m}$ of the rows of $G$ and the span over $\F_{q^m}$ of the rows $A_i^\top G$ coincide. 
Let $u'_1,\ldots,u'_s \in \mathcal{U}$ and $v'_1,\ldots,v'_s \in \tilde{\mathcal{U}}$ be two $\F_{q^m}$-basis of $\langle\mathcal{U}\rangle_{\F_{q^m}}$ and $\langle\tilde{\mathcal{U}}\rangle_{\F_{q^m}}$, respectively.
Let $B\in \F_{q^m}^{k\times s}$ with $\rk(B)=s$ such that $u'_iB=v'_i$ for any $i\in [s]$. It follows that $c_i \in \mathrm{rowspan}_{\F_{q^m}}(A_i^\top G)=\mathrm{rowspan}_{\F_{q^m}}(G)=\mathrm{rowspan}_{\F_{q^m}}(\tilde{G})$, where the columns of $\tilde{G}$ are the $u_iB$'s and hence it is a generator matrix of $\tilde{\mathcal{C}}$. Then the application $\iota$ is well-defined.
Since $\mathcal{G}$ is transitive and has order $q^{km}-1$ there is a bijection between the element of $\mathcal{G}$ and the nonzero vector of $\F_{q^m}^k$ and hence a bijection between $\overline{\mathcal{G}}$ and the points of $\PG(k-1,q^m)$.
Since $\dim_{\F_{q^m}}(\ker_{\F_{q^m}}(G))=k-s$, it follows that $\iota$ is $q^{(k-s)m}$-to-$1$. Moreover, a codeword $(c_1,\ldots,c_t)=(vG,vA_2^\top G,\ldots,vA_t^\top G)$ has exactly $\frac{q^{(k-s)m}-1}{q^m-1}=\binom{k-s}{1}_{q^m}$ zero entries since $vA_i^T$ covers the non zero vectors of $\F_{q^m}^{k}$, up to a scalar in $\F_{q^m}^*$. Since $\iota$ is $q^{(k-s)m}$-to-$1$ its image has size $(\frac{q^{km}-1}{q^m-1}-\frac{q^{(k-s)m}-1}{q^m-1})\cdot \frac{1}{q^{(k-s)m}}=\lvert \PG(\widetilde{\C},\Fm) \rvert$. It follows that $\supp_\bfn(c)$ is obtained by rearranging $\binom{k-s}{1}_{q^m}$ occurrences of the trivial space $\{0\}$ together with $(\supp(\tilde{c}))_{\tilde c \in \PG(\widetilde{\C},\Fm)}$ counted exactly $q^{(k-s)m}$ times.
\end{proof}

Back to Example \ref{ex:simplex}, then one can see that  $\widetilde{\C}$ is the code whose generator matrix is
$$ A\coloneqq \begin{pmatrix}\alpha & \alpha^2 & 0 \\ 1 & 0 & \alpha \end{pmatrix},$$
and the set of (representatives of) the projective codewords of $\widetilde{\C}$ is 
$$\left\{(\alpha,\alpha^2,0),(1, 0, \alpha), (\alpha^2,\alpha^2,\alpha), (0,\alpha^2,\alpha^2),(1,\alpha^2,1)\right\}. $$

\bigskip

We now observe that we can just restrict to study the $n$-simplex code obtained from the Singer subgroup $\cH_p$. This is due to the following result, showing that all the $n$-simplex codes are equivalent to a standard $n$-simplex code. This is a consequence of the fact that all the Singer subgroups are conjugated.

\begin{proposition}
 Let $B\in\GL(k,q^m)$ and $\mathcal{G}=B\cH_p B^{-1}$. Then $\C_{\mathcal{G},\mU}$ is equivalent to $\C_{\cH_p,\phi_{\cG}^{1,n}(B, \mU)}$. 
\end{proposition}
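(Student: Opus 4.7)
The plan is to exhibit an explicit equivalence at the level of $\Fmnk$ systems and then invoke the correspondence $\Phi$ from Theorem \ref{thm:1-1corr} to transfer it to an equivalence of codes. The key observation is that right multiplication by $B^{-1}$ carries one orbit onto the other, component by component.

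First, I would pin down matching coset representatives for $\overline{\cG}$ and $\overline{\cH_p}$. Since scalar matrices are central in $\GL(k,q^m)$, conjugation by $B$ fixes $\D_{q^m}$ setwise, and hence
\[
\cG\cap \D_{q^m}=B(\cH_p\cap \D_{q^m})B^{-1}=\cH_p\cap \D_{q^m}.
\]
Consequently, any complete set of representatives $\{M_p^{i}\}_{i\in I}$ of $\overline{\cH_p}$ yields a complete set of representatives $\{BM_p^{i}B^{-1}\}_{i\in I}$ of $\overline{\cG}$, in bijection with the former in the obvious way.

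Next, I would unfold the two orbits involved. By Definition \ref{def:simplex}, the code $\C_{\cG,\mU}$ corresponds under $\Phi$ to the ordered tuple
\[
\cO=\bigl(\mU\cdot BM_p^{i}B^{-1}\bigr)_{i\in I},
\]
while $\C_{\cH_p,\mU\cdot B}$ corresponds to
\[
\cO'=\bigl((\mU\cdot B)\cdot M_p^{i}\bigr)_{i\in I}=\bigl(\mU\cdot BM_p^{i}\bigr)_{i\in I}.
\]
A direct computation gives $\mU\cdot BM_p^{i}B^{-1}=(\mU\cdot BM_p^{i})\cdot B^{-1}$ for every $i\in I$, so $\cO$ is obtained from $\cO'$ by applying the $\Fm$-linear automorphism $B^{-1}\in\GL(k,q^m)$ to each component.

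Finally, in the language of Definition \ref{def:equiv_systems}, taking $\varphi=B^{-1}$, trivial scalars $a_i=1$, and the identity permutation shows that $\cO$ and $\cO'$ are equivalent as $\Fmnk$ systems. Theorem \ref{thm:1-1corr} then immediately yields the claimed equivalence of sum-rank metric codes. The only mild subtlety I anticipate is keeping the indexing of the two orbits aligned through the conjugation, but the bijection of representatives described in the first step makes this transparent, so there is no genuine obstacle.
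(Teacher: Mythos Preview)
Your proof is correct and follows essentially the same route as the paper's: both arguments show that the orbit tuple for $\C_{\cG,\mU}$ equals the orbit tuple for $\C_{\cH_p,\mU\cdot B}$ right-multiplied by $B^{-1}$, and then invoke the equivalence of systems. You add a small clarification the paper leaves implicit, namely that $\cG\cap\D_{q^m}=\cH_p\cap\D_{q^m}$ so that coset representatives match up via conjugation, but otherwise the two arguments are identical.
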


\begin{proof}
We have that $\C_{\mathcal{G},\mU}$ and $\C_{\cH_p,\phi_{\cG}^{1,n}(B, \mU)}$ are equivalent if and only any of their associated $\Fmnk$ systems are equivalent. By definition of $n$-simplex codes we can take $(\phi_{\cG}^{1,n}(A,\mU))_{A \in \overline{\cG}}$ as  $\Fmnk$ system associated to $\C_{\mathcal{G},\mU}$. Thus,
\begin{align*}(\phi_{\cG}^{1,n}(A,\mU))_{A \in \overline{\cG}} &=(\phi_{\cG}^{1,n}(A,\mU))_{A \in B\overline{\cH}_pB^{-1}}=(\phi_{\cG}^{1,n}(BAB^{-1},\mU))_{A \in \overline{\cH}_p}\\
&=(\phi_{\cG}^{1,n}(AB^{-1},\phi_{\cG}^{1,n}(B, \mU)))_{A \in \overline{\cH}_p}=(\phi_{\cG}^{1,n}(A,\phi_{\cG}^{1,n}(B, \mU)))_{A \in \overline{\cH}_p}\cdot B^{-1},
\end{align*}
which is clearly equivalent to any $\Fmnk$ system associated to $\C_{\cH_p,\phi_{\cG}^{1,n}(B, \mU)}$.
\end{proof}

As a consequence we obtain the following result on the number of inequivalent $n$-simplex codes.

\begin{corollary}\label{cor:number_inequivalent_simplex}
 The number of inequivalent $n$-simplex codes is at most the number of orbits of the action $\phi_{\cH_p}^{1,n}$.
\end{corollary}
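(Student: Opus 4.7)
My plan is to reduce to standard $n$-simplex codes and then exploit the commutativity of $\cH_p$ to show that any two subspaces lying in the same $\phi_{\cH_p}^{1,n}$-orbit give rise to equivalent simplex codes. Since inequivalent standard simplex codes must then come from distinct orbits, the desired bound will follow.

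First, by the preceding proposition, every $n$-simplex code $\C_{\mathcal{G},\mU}$ is equivalent to a standard one $\C_{\cH_p,\mU'}$ for a suitable $n$-dimensional $\F_q$-subspace $\mU'$. Hence it suffices to bound the number of equivalence classes of \emph{standard} $n$-simplex codes.

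Second, I would show that if $\mU' = \phi_{\cH_p}^{1,n}(A,\mU) = \mU A$ for some $A\in \cH_p$, then $\C_{\cH_p,\mU A}\sim \C_{\cH_p,\mU}$. The critical observation is that $\cH_p$ is cyclic, hence abelian, so $A$ commutes with every element $A'\in \cH_p$. Using this, for each $A' \in \overline{\cH}_p$ one computes
\[
\phi_{\cH_p}^{1,n}(A',\mU A) \;=\; \mU A A' \;=\; \mU A' A \;=\; \phi_{\cH_p}^{1,n}(A',\mU)\cdot A.
\]
Consequently the two $\Fmnk$ systems $(\phi_{\cH_p}^{1,n}(A',\mU A))_{A'\in \overline{\cH}_p}$ and $(\phi_{\cH_p}^{1,n}(A',\mU))_{A'\in \overline{\cH}_p}$ differ by a single global right-action of $A\in\GL(k,q^m)$, and are therefore equivalent in the sense of Definition \ref{def:equiv_systems}. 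By Theorem \ref{thm:1-1corr}, the associated codes are equivalent as well. Combining this with the first reduction, the map sending an orbit of $\phi_{\cH_p}^{1,n}$ to the equivalence class of the corresponding standard simplex code is well-defined and surjective onto all equivalence classes of $n$-simplex codes, giving the claimed inequality.

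The only real subtlety, and the step I would be most careful with, is the passage between $\cH_p$ and its quotient $\overline{\cH}_p = \cH_p/(\cH_p\cap \D_{q^m})$ in the definition of $\phi_{\cH_p}^{1,n}$; but since the subgroup $\cH_p\cap \D_{q^m}$ acts by scalar multiplication and therefore preserves every $\F_q$-subspace setwise up to the scalar which is absorbed by Definition \ref{def:equiv_systems}, the commutativity argument above goes through verbatim at the level of $\overline{\cH}_p$. Everything else is a routine unwinding of the definitions.
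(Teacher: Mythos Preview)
Your argument is correct and matches the paper's implicit reasoning (the paper states the corollary as an immediate consequence of the preceding proposition without further proof): reduce to standard $n$-simplex codes, then observe that the code depends only on the $\phi_{\cH_p}^{1,n}$-orbit of $\mU$.

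One minor remark: the commutativity of $\cH_p$, while true, is not actually needed. If $\mU'=\mU A_0$ for some $A_0\in\cH_p$, then the tuple $(\mU' A')_{A'\in\overline{\cH}_p}=(\mU A_0 A')_{A'\in\overline{\cH}_p}$ is simply a \emph{reindexing} of $(\mU A')_{A'\in\overline{\cH}_p}$, because left multiplication by the class of $A_0$ is a bijection of $\overline{\cH}_p$. Thus the two systems differ by a permutation $\sigma\in\mathcal S_t$, which is already enough for equivalence in Definition~\ref{def:equiv_systems}, without invoking a global $\GL(k,q^m)$-action. Your route through commutativity shows equivalence via $\varphi=A_0$ with $\sigma=\mathrm{id}$ instead; both are valid, but the permutation argument is slightly more direct and would work for the orbital construction of Theorem~\ref{thm:orbit_construction} with any transitive group $\mathcal G$, not only cyclic ones.
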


\begin{remark}\label{rem:simplex_special_cases}
The definition of $n$-simplex code generalizes those of simplex codes in the Hamming and in the rank metric. For the Hamming metric, we have $n=1$, and  any transitive group $\mathcal{G}$ clearly acts transitively also on the $1$-dimensional subspaces of $\Fm^k$ (that is $\PG(k-1,q^m)$). Thus, the only possibility for an orbital construction proposed in Theorem \ref{thm:orbit_construction}  is that the orbit is the full $\PG(k-1,q^m)$, and  therefore the only code obtained in this way is the $[\frac{q^{km}-1}{q^m-1},k]_{q^m}$ simplex code. 
If we try to recover a one-weight code in the rank metric from Theorem \ref{thm:orbit_construction}, we need instead that $t=1$, that is the orbit of $\phi_{\mathcal{G}}^{r,s}$ consists of only one subspace $\mU$. Since $\mathcal{G}$ is transitive, this is only possible when $\mU=\Fm^k$, leading to the simplex code in the rank metric defined in \cite{Randrianarisoa2020ageometric,alfarano2021linear}.

However, in contrast with the Hamming and the rank metric, in the sum-rank metric one-weight codes are not unique up to equivalence and repetition. As shown in Corollary \ref{cor:number_inequivalent_simplex} we may have many distinct inequivalent code with the same parameters. The number of such equivalence classes coincides with the number of orbits of the action $\phi_{\cH_p}^{1,n}$. These orbits have been studied in various other contexts; see e.g. \cite{drudge2002orbits}.
\end{remark}

\begin{remark}
Simplex codes in the sum-rank metric have already been introduced in \cite{martinezpenas2021hamming}, but with a different meaning. Indeed, the definition given there generalizes the property of simplex codes in the Hamming metric of being the dual of the only nontrivial perfect codes with minimum distance $3$. Furthermore, those generalizations of simplex codes are not constant weight and it is not shown any systematic construction nor even their existence. What we are instead proposing here is to generalize simplex codes in the sum-rank metric with the aim of keeping the property of having only one nonzero weight. This leads to  codes sharing also the geometric and group theoretical aspects with classical simplex codes. For these reasons, we believe that our generalization of simplex codes may have a deep mathematical meaning and it certainly deserves attention. 
\end{remark}

\begin{remark}
Consider $k$ as an odd positive integer and $\mathcal{U}=\fq^k\subseteq \F_{q^2}^k$ and let $\theta$ be the Singer cycle induced by a primitive element in $\F_{q^{2k}}$. The orbit $\mathcal{O}$ of $\mathcal{U}$ under the action of the group $\mathcal{G}$ generated by $\theta^{\frac{q^k-1}{q-1}}$ has size $\frac{q^k+1}{q+1}$, \cite{hirschfeld1998projective}.
From a geometric point of view, the linear sets associated to the $\fq$-subspaces of $\mathcal{O}$ form a partition of $\PG(k-1,q^2)$ in Baer subgeometries, which is known as \textbf{classical Baer subgeometry partition}, see \cite{baker2000baer}.
In \cite{santonastaso2021subspace}, it has been proved that a Baer subgeometry partition yields to a one-weight sum-rank metric code.
Even if the group $\mathcal{G}$ is not transitive and hence the code associated with $\mathcal{O}$ is not an $n$-simplex, the orbit $\mathcal{O}$ still defines a one-weight code.
See \cite{ebert1998partitioning} for other Baer subgeometry partitions which are not classical.
\end{remark}

\section{Non-homogeneous case and multi-linear sets}\label{sec:non-hom}

In the above section, we have investigated  one-weight sum-rank metric codes in which all the blocks have the same length.
In this section we deal with one-weight codes of dimension two with possibly different length of the blocks; for this reason we call them \textbf{non-homogeneous}. By Theorem \ref{th:connection} and by the definition of weight of a point with respect to a linear set, the property of being a one-weight code may be rephrased in geometric terms.

Let $\mathcal{U}=(\mathcal{U}_1,\ldots,\mathcal{U}_t)$, where $\mathcal{U}_i$ is an $\fq$-subspace of $\F_{q^m}^k$ for every $i$. We define the \textbf{$\fq$-multi-linear set} $L_{\mathcal{U}}$ as the set of points in $L_{\mathcal{U}_1}\cup\ldots\cup L_{\mathcal{U}_t}$ and the \textbf{multi-weight} $w_{L_{\mathcal{U}}}(P)$ of a point $P \in \PG(k-1,q^m)$ with respect to $L_{\mathcal{U}}$ is 
\[ w_{L_{\mathcal{U}}}(P)\coloneqq\sum_{i=1}^t w_{L_{\mathcal{U}_i}}(P). \]
We say that an $\fq$-multi-linear set $L_{\mathcal{U}}$ is \textbf{scattered} if 
\[ w_{L_{\mathcal{U}}}(P)\leq 1, \]
for every point $P\in \PG(1,q^m)$.
If $L_{\mathcal{U}}$ is a scattered $\fq$-multi-linear set then $L_{\mathcal{U}_i}$'s are pairwise disjoint and they also turn out to be scattered.

Let $\mathcal{C}$ be a non-degenerate $[\mathbf{n},2,d]_{q^m/q}$ code. Let $\mathcal{U}=(\mathcal{U}_1,\ldots,\mathcal{U}_t) \in \Psi([\mathcal{C}])$. Then $\mathcal{C}$ is one-weight if and only if
\begin{equation} \label{eq:geometriconeweight2}
 \ww_{L_{\mathcal{U}}}(P)=N-d, \mbox{      for each }P \in \PG(1,q^m),
\end{equation}

that is the linear sets associated with  a non-degenerate one-weight code have to cover all the projective line and each point must have the same multi-weight.

\subsection{Two-dimensional one-weight MSRD codes}

In this section we characterize those one-weight codes which are also MSRD codes and we provide bounds on $t$ in such a case. The first part is a consequence of \eqref{eq:geometriconeweight2} and Theorem \ref{th:Singletonbound}.

\begin{corollary} \label{prop:multiplicitymsrd}
Let $\mathcal{U}=(\mathcal{U}_1,\ldots,\mathcal{U}_t)$ be a $[\mathbf{n},2,d]_{q^m/q}$ system. 
Then $\Phi([\mathcal{U}])$ is a class of one-weight MSRD code if and only if $L_{\mathcal{U}}$ is a scattered $\fq$-multi-linear set and $L_{\mathcal{U}}=\PG(1,q^m)$.
\end{corollary}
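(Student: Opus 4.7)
The plan is to translate the one-weight MSRD property into a pointwise statement about the multi-weight function $\ww_{L_{\mathcal U}}$, and then observe that this pointwise statement splits into the two geometric conditions in the corollary.

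First I would use the dictionary provided by \eqref{eq:geometriconeweight2}, together with the definition of multi-weight, to rewrite the sum-rank weight of a codeword in geometric terms. More precisely, if $G$ is a generator matrix of a code $\mathcal C\in\Phi([\mathcal U])$ and $v\in\Fm^2\setminus\{0\}$, then by Theorem \ref{th:connection}
$$\ww(vG)=N-\sum_{i=1}^t\dim_{\fq}(\mathcal U_i\cap v^\perp)=N-\ww_{L_{\mathcal U}}(P),$$
where $P=\langle v^\perp\rangle_{\Fm}\in\PG(1,q^m)$. Since the map $v\mapsto\langle v^\perp\rangle_{\Fm}$ surjects from $\Fm^2\setminus\{0\}$ onto the points of $\PG(1,q^m)$, the code $\mathcal C$ is one-weight with weight $w_0$ if and only if $\ww_{L_{\mathcal U}}(P)=N-w_0$ is constant as $P$ ranges over $\PG(1,q^m)$.

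Next I would impose the MSRD condition. By Theorem \ref{th:Singletonbound}, an $[\bfn,2,d]_{q^m/q}$ code is MSRD precisely when $d=N-1$. Combined with the previous step, $\mathcal C$ is a one-weight MSRD code if and only if
$$\ww_{L_{\mathcal U}}(P)=N-(N-1)=1\qquad \text{for every }P\in\PG(1,q^m).$$

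Finally I would split this single equality into the two inequalities appearing in the statement. The bound $\ww_{L_{\mathcal U}}(P)\le 1$ for every $P$ is exactly the definition of a scattered $\fq$-multi-linear set; the bound $\ww_{L_{\mathcal U}}(P)\ge 1$ for every $P$ says that every point of $\PG(1,q^m)$ belongs to at least one $L_{\mathcal U_i}$, i.e.\ $L_{\mathcal U}=\PG(1,q^m)$. Conversely, these two conditions together force $\ww_{L_{\mathcal U}}(P)=1$ for all $P$, and the previous step gives the one-weight MSRD property. This completes the equivalence. There is no real obstacle here: the entire argument is bookkeeping on top of Theorem \ref{th:connection} and the Singleton bound, and the only thing to watch is that the surjectivity $v\mapsto\langle v^\perp\rangle_{\Fm}$ really lets one quantify over all points of $\PG(1,q^m)$ rather than over codewords.
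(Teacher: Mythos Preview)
Your proposal is correct and follows exactly the approach the paper intends: the corollary is stated there as an immediate consequence of \eqref{eq:geometriconeweight2} and Theorem \ref{th:Singletonbound}, and you have simply written out those two steps (the weight--multi-weight dictionary and the Singleton bound for $k=2$) explicitly, then split the condition $\ww_{L_{\mathcal U}}(P)=1$ into ``scattered'' and ``covering''. The only cosmetic point is that in $\Fm^2$ the hyperplane $v^\perp$ is already a $1$-dimensional $\Fm$-subspace, so you can write $P=v^\perp$ directly rather than $P=\langle v^\perp\rangle_{\Fm}$.
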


The above corollary extends the already known connection between MRD codes and scattered linear sets (see e.g.\ \cite{sheekey2016new,polverino2020connections}).
Corollary \ref{prop:multiplicitymsrd} yields to a lower and an upper bound on the value $t$ when the code considered is MSRD.

\begin{corollary}
Let $\mathcal{\mathcal{\mathcal{U}}}=(\mathcal{U}_1,\ldots,\mathcal{U}_t)$ be a $[\mathbf{n},2,d]_{q^m/q}$ system and let $\C \in \Phi([\mathcal{U}])$. Then, $\C$ is a one-weight MSRD code if and only if
\begin{equation}\label{eq:scattered_oneweight}q^m+1 = |L_{\mathcal{U}}|= \sum_{i=1}^t \frac{q^{n_i}-1}{q-1}.\end{equation}
In particular, in this case we have
\begin{equation}\label{eq:boundt}
q+1 \leq t \leq q^m+1
\end{equation}
and $t\equiv 1 \pmod q$.
\end{corollary}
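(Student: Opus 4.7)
The plan is to apply the preceding Corollary \ref{prop:multiplicitymsrd} and do a straightforward point count in $\PG(1,q^m)$. By that corollary, $\C$ being one-weight MSRD is equivalent to $L_{\mathcal{U}}$ being a scattered $\mathbb{F}_q$-multi-linear set with $L_{\mathcal{U}}=\PG(1,q^m)$. Since $\PG(1,q^m)$ consists of $q^m+1$ points, the first equality $|L_{\mathcal{U}}|=q^m+1$ is immediate. Being scattered forces the $L_{\mathcal{U}_i}$'s to be pairwise disjoint and each individually scattered; hence by the equality case of \eqref{eq:boundsizels} we have $|L_{\mathcal{U}_i}|=\frac{q^{n_i}-1}{q-1}$, and summing yields the second equality in \eqref{eq:scattered_oneweight}. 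This establishes the ``if and only if'' part.

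For the range of $t$, I would first address the upper bound. Each $L_{\mathcal{U}_i}$ contains at least one point since $n_i\geq 1$ forces $\frac{q^{n_i}-1}{q-1}\geq 1$. Because the $L_{\mathcal{U}_i}$'s partition a set of cardinality $q^m+1$, we immediately get $t\leq q^m+1$.

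The congruence and the lower bound require a small computation. Multiplying \eqref{eq:scattered_oneweight} by $q-1$ gives
\[
\sum_{i=1}^t q^{n_i} \;=\; (q-1)(q^m+1)+t \;=\; q^{m+1}-q^m+q-1+t.
\]
Reducing modulo $q$, each term on the left vanishes (as $n_i\geq 1$), while the right-hand side reduces to $t-1$. Hence $t\equiv 1\pmod{q}$. To upgrade this to $t\geq q+1$, I would rule out $t=1$: a single scattered $\mathbb{F}_q$-linear set on $\PG(1,q^m)$ has rank $n_1\leq m$ by \eqref{eq:boundscatt}, hence $|L_{\mathcal{U}_1}|\leq \frac{q^m-1}{q-1}<q^m+1$, contradicting $L_{\mathcal{U}}=\PG(1,q^m)$. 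Combining $t\geq 2$ with the congruence $t\equiv 1 \pmod q$ yields $t\geq q+1$.

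I do not anticipate any substantive obstacle: the main content has already been packaged in Corollary \ref{prop:multiplicitymsrd} and in the scattered-linear-set bound \eqref{eq:boundscatt}; what remains is elementary point counting and a modular arithmetic step.
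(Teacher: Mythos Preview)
Your proof is correct and follows essentially the same route as the paper: invoke Corollary~\ref{prop:multiplicitymsrd}, count points via the scatteredness of each $L_{\mathcal{U}_i}$ and their disjointness, and finish with a congruence argument. The only cosmetic difference is in the order for the lower bound---the paper first uses $n_i\le m$ to get $t\ge q$ and then excludes $t=q$ by congruence, while you establish $t\equiv 1\pmod q$ first and exclude $t=1$; these are the same ingredients rearranged.
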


\begin{proof}
By Corollary \ref{prop:multiplicitymsrd}, we have that $w_{L_{\mathcal{U}}}(P) = 1$, for each $P \in \PG(1,q^m)$, that is $L_{\mathcal{U}_i}$ is scattered and the $L_{\mathcal{U}_i}$'s are pairwise disjoint. Then by \eqref{eq:boundscatt} $ \dim_{\F_q}(\mathcal{U}_i)=n_i \leq m$ and $\lvert L_{\mathcal{U}_i} \rvert =\frac{q^{n_i}-1}{q-1}$. Since $\Phi([\mathcal{U}])$ is a class of one-weight MSRD  code, again by Corollary \ref{prop:multiplicitymsrd}, we have that 
\[q^m+1 = |L_{\mathcal{U}}|= \sum_{i=1}^t \,\,\lvert L_{\mathcal{U}_i} \rvert.
\]
Then by \eqref{eq:boundsizels} we have 
\[q^m+1=|L_{\mathcal{U}}|=\sum_{i=1}^t \frac{q^{n_i}-1}{q-1} \leq t \frac{q^m-1}{q-1},\]
since $L_{\mathcal{U}_i}$ is scattered for every $i$. 
This implies that $t\geq q$. Now, if $t=q$, then we have
$$q^m+1=|L_{\mathcal{U}}|=\sum_{i=1}^t \frac{q^{n_i}-1}{q-1} \equiv 0 \pmod q,$$
which is a contradiction.
So $t$ must be different from $q$ and hence $t \geq q+1$. The upper bound follows from the fact that the number of points in $\PG(1,q^m)$ is $q^m+1$. Finally, one gets $t \equiv 1 \pmod q$ directly from \eqref{eq:scattered_oneweight}.
\end{proof}

\begin{remark}
We point out that the bounds on $t$ obtained in the above result are sharp.
Indeed, the value of $t$ in the system obtained from the doubly-extended linearized Reed-Solomon code $\mathcal{C}_{2,\mathrm{a},\boldsymbol{\beta}}^{\gamma,\delta}$ reaches the lower bound of \eqref{eq:boundt}. On the other hand, if $\PG(1,q^m)=\{P_1=\langle v_1 \rangle_{\F_{q^m}}, \ldots, P_{q^m+1}=\langle v_{q^m+1} \rangle_{\F_{q^m}}\}$, then $\mathcal{U}_1=\langle v_1\rangle_{\F_q},\ldots,\mathcal{U}_{q^m+1}=\langle v_{q^m+1} \rangle_{\F_q}$ and hence the value of $t$ of such a construction reaches the upper bound in \eqref{eq:boundt}. This latter construction is the projective system associated to the doubly-extended Reed-Solomon code, and it is well-known to be the unique -- up to equivalence -- $2$-dimensional one-weight code in the Hamming metric; see \cite{bonisoli1983every}.
\end{remark}

Thus, now we focus on one-weight MSRD codes whose number of blocks $t$ is $q+1$, that is extremal codes meeting the lower bound in \eqref{eq:boundt} with equality.

\begin{theorem}\label{th:boundq+1}
Let $\C$ be an $[(n_1,\ldots,n_{q+1}),2]_{q^m/q}$ one-weight MSRD code. Then, 
\begin{enumerate}
    \item either $n_1=\ldots=n_{q-1}=m$ and $n_q=n_{q+1}=1$,
    \item or $q=2$, $n_1=n_2=m-1$ and $n_3=2$.
\end{enumerate}
\end{theorem}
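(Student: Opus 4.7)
The plan is to translate the one-weight MSRD hypothesis into a purely Diophantine problem via the multi-linear-set framework developed earlier in the paper, and then classify its solutions by the number of blocks of maximal length.

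\textbf{Step 1: Reduction to a Diophantine identity.} By Corollary~\ref{prop:multiplicitymsrd} applied to any $\mathcal{U}=(\mathcal{U}_1,\ldots,\mathcal{U}_{q+1})\in\Psi([\mathcal{C}])$, the hypothesis that $\mathcal{C}$ is a one-weight MSRD code is equivalent to $L_{\mathcal{U}}$ being a scattered $\F_q$-multi-linear set that covers $\PG(1,q^m)$. Each $L_{\mathcal{U}_i}$ is therefore individually scattered, so that $1\le n_i\le m$ by nondegeneracy and by \eqref{eq:boundscatt}, and $|L_{\mathcal{U}_i}|=(q^{n_i}-1)/(q-1)$. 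Pairwise disjointness together with the covering condition then yields
\[
\sum_{i=1}^{q+1}\frac{q^{n_i}-1}{q-1}=q^m+1,
\]
which rearranges to the key identity
\[
\sum_{i=1}^{q+1} q^{n_i}=q^{m+1}-q^m+2q. \qquad (\star)
\]

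\textbf{Step 2: Case analysis by $s\coloneqq\#\{i:n_i=m\}$.} Let $T\coloneqq\sum_{n_i<m}q^{n_i}$; the identity $(\star)$ becomes $T=(q-1-s)q^m+2q$, while the obvious upper bound is $T\le (q+1-s)q^{m-1}$. Combining these two and clearing $q^{m-1}$ gives
\[
q^{m-1}\bigl(q^2-2q-1-s(q-1)\bigr)\le -2q.
\]
The bracket is non-negative for $s\le q-3$ (any $q\ge 2$) and for $s=q-2$ when $q\ge 3$, and in those ranges the left-hand side is $\ge 0>-2q$, so they are excluded. The extremes $s\in\{q,q+1\}$ force $m=1$: for $s=q+1$ this follows from $(\star)$ directly ($2q^m=2q$), and for $s=q$ it follows because $T>0$ would require $q^m<2q$, hence $m=1$, but then $s=q+1$, a contradiction. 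In either case $m=1$ and every $n_i=1$, which is case~(1).

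\textbf{Step 3: The surviving sub-cases.} Only $s=q-1$ (any $q\ge 2$) and $(q,s)=(2,0)$ remain. For $s=q-1$, $(\star)$ reduces to $q^{n_q}+q^{n_{q+1}}=2q$ with $1\le n_{q+1}\le n_q\le m-1$; any exponent $\ge 2$ satisfies $q^{n_q}\ge q^2>2q$, so necessarily $n_q=n_{q+1}=1$, yielding case~(1). For $(q,s)=(2,0)$, one solves $2^{n_1}+2^{n_2}+2^{n_3}=2^m+4$ with $1\le n_3\le n_2\le n_1\le m-1$: the assumption $n_1<m-1$ gives a total of at most $3\cdot 2^{m-2}<2^m$, a contradiction, so $n_1=m-1$; similarly $n_2<m-1$ gives $2^{n_2}+2^{n_3}\le 2^{m-1}<2^{m-1}+4$, so $n_2=m-1$; finally $2^{n_3}=4$ gives $n_3=2$, and compatibility with $n_3\le n_2$ forces $m\ge 3$. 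This is case~(2).

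The main obstacle is the anomaly at $(q,s)=(2,0)$: precisely here the general upper bound on $T$ fails to rule out solutions, and one must enumerate the admissible triples by hand to isolate the unique block-length pattern $(m-1,m-1,2)$. All the other ranges of $s$ are killed uniformly by the single inequality derived in Step~2.
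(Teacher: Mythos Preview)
Your proof is correct and takes essentially the same approach as the paper: both derive the identity $(\star)$ from the scattered multi-linear set characterization and then perform a case analysis, with the $q=2$ anomaly handled separately by enumeration. Your version is somewhat more systematic, organizing the analysis by $s=\#\{i:n_i=m\}$, whereas the paper splits simply on whether $n_{q-1}\le m-1$; the content is the same.

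One small imprecision worth fixing: in Step~3 for $s=q-1$ you write ``any exponent $\ge 2$ satisfies $q^{n_q}\ge q^2>2q$'', but for $q=2$ one has $q^2=2q$. The conclusion is unaffected since $q^{n_q}+q^{n_{q+1}}\ge q^2+q>2q$ whenever $n_q\ge 2$, but you should phrase the bound that way.
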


\begin{proof}
 Let $\C$ be an $[(n_1,\ldots,n_{q+1}),2]_{q^m/q}$ one-weight MSRD code and assume that $n_{q-1}\leq m-1$. Then by  \eqref{eq:scattered_oneweight}, we have
 $$ q^{m+1}-q^m+2q=\sum_{i=1}^t {q^{n_i}}\leq q^{m+1}-2q^m+3q^{m-1},$$
 from which we obtain that $q^m+2q\leq 3q^{m-1}$. It is easy to see that this is only possible for $q=2$, and hence for $q\geq 3$ the only possibility is the first case.
 
 Now, let us suppose that $q=2$.  In this case,  \eqref{eq:scattered_oneweight} becomes 
 $$ 2^{n_1}+2^{n_2}+2^{n_3}=2^m+4.$$
 If $n_1=m$ then clearly we must have $n_2=n_3=1$ and we get again the first case. Thus, assume that $n_1\leq m-1$. If $n_2\leq m-2$, then $2^{n_1}+2^{n_2}+2^{n_3}\leq 2^{m-1}+2^{m-2}+2^{m-2}$, which contradicts \eqref{eq:scattered_oneweight}. Then, the only possibility is that $n_1=n_2=m-1$ and $n_3=2$.
\end{proof}

Because of the existence of $2$-dimensional doubly-extended linearized Reed-Solomon codes, we know that case (1) in Theorem \ref{th:boundq+1} is always possible. We now see that also case (2) is possible, providing a construction for every $m\geq 3$.

\begin{theorem}\label{thm:oneweight_MSRD_sporadic}
Let $m$ be a positive integer with $m\geq 3$ and let $\mathcal{H}$ be an $(m-1)$-dimensional $\F_2$-subspace of $\F_{2^m}$.
For $\delta \notin \mathcal{H}$, define \begin{align*}
    \mathcal{X}&=\{(x,x^2)\st x \in \mathcal{H}\},\\
    \mathcal{Y}_{\delta}&=\{(x,x^2+\delta x)\st x \in \mathcal{H}\}, \\
    \mathcal{Z}_{\delta}&=\langle (1,0),(0,\delta) \rangle_{\fq},
\end{align*} 
and $\mathcal{U}=(\mathcal{X},\mathcal{Y}_{\delta},\mathcal{Z}_{\delta})$. Then, any code in $\Phi([\mathcal{U}])$ is a one-weight MSRD $[(m-1,m-1,2),2,2m-1]_{2^m/2}$ code.
\end{theorem}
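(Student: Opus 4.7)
The plan is to invoke Corollary \ref{prop:multiplicitymsrd}: it suffices to show that $L_{\mathcal{U}}$ is a scattered $\F_2$-multi-linear set covering all of $\PG(1,2^m)$. Since $|\PG(1,2^m)|=2^m+1$, and a scattered $\F_2$-linear set of rank $r$ has exactly $2^r-1$ points, once we check that each of $\mathcal{X}$, $\mathcal{Y}_\delta$, $\mathcal{Z}_\delta$ is scattered of the claimed rank and that the three associated linear sets are pairwise disjoint, the total count will be $(2^{m-1}-1)+(2^{m-1}-1)+3=2^m+1$, forcing equality $L_{\mathcal{U}}=\PG(1,2^m)$ and multi-weight $1$ at every point. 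The parameters $(m-1,m-1,2)$ are already in non-increasing order precisely because $m\ge 3$, and $N-k+1=2m-1$ gives the advertised minimum distance.

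First I would verify that $\mathcal{X}$ and $\mathcal{Y}_\delta$ are $\F_2$-subspaces of dimension $m-1$: in characteristic $2$, the maps $x\mapsto(x,x^2)$ and $x\mapsto(x,x^2+\delta x)$ are $\F_2$-linear and injective on $\mathcal{H}$. Scatteredness is then routine: if $(x,x^2)$ and $(y,y^2)$ in $\mathcal{X}\setminus\{0\}$ are $\F_{2^m}$-proportional via $y=\lambda x$, then $\lambda^2 x^2=\lambda x^2$ yields $\lambda=1$; analogously for $\mathcal{Y}_\delta$ one gets $\lambda^2=\lambda$. For $\mathcal{Z}_\delta$, its three nonzero vectors $(1,0),(0,\delta),(1,\delta)$ generate three distinct projective points, so $L_{\mathcal{Z}_\delta}$ is trivially scattered of size $3$.

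The main step, and the one where $\delta\notin\mathcal{H}$ is actually used, is to establish pairwise disjointness of the three linear sets. For $L_{\mathcal{X}}\cap L_{\mathcal{Z}_\delta}$ and $L_{\mathcal{Y}_\delta}\cap L_{\mathcal{Z}_\delta}$, I would just test each of the three points of $L_{\mathcal{Z}_\delta}$ against a generic element of $\mathcal{X}$ or $\mathcal{Y}_\delta$; in each case the proportionality forces either $x=0$ or $x=\delta$, the latter being excluded by $\delta\notin\mathcal{H}$. The core case is $L_{\mathcal{X}}\cap L_{\mathcal{Y}_\delta}=\emptyset$: if $(x,x^2)$ and $(y,y^2+\delta y)$ with $x,y\in\mathcal{H}^\ast$ are $\F_{2^m}$-proportional, setting $\lambda=y/x$ and substituting into the second coordinate equation gives $y^2+\delta y = yx$, i.e.\ $x+y=\delta$. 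Since $\mathcal{H}$ is closed under addition this forces $\delta\in\mathcal{H}$, contradicting the hypothesis.

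Assembling these pieces, Corollary \ref{prop:multiplicitymsrd} directly yields that any representative of $\Phi([\mathcal{U}])$ is a one-weight MSRD code of the stated parameters; the unique nonzero weight must be $N-k+1=2m-1$ by the MSRD property. I expect the scatteredness and point counting to be mechanical; the only real obstruction is the $L_{\mathcal{X}}\cap L_{\mathcal{Y}_\delta}=\emptyset$ argument, whose success hinges precisely on the reduction to the identity $x+y=\delta$ together with $\mathcal{H}$ being an $\F_2$-hyperplane and $\delta$ lying outside it.
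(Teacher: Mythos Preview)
Your proof is correct and follows essentially the same approach as the paper: invoke Corollary \ref{prop:multiplicitymsrd}, verify that each $L_{\mathcal{X}}$, $L_{\mathcal{Y}_\delta}$, $L_{\mathcal{Z}_\delta}$ is scattered, prove pairwise disjointness, and conclude by counting points. The only minor difference is tactical: the paper deduces scatteredness of $L_{\mathcal X}$ and $L_{\mathcal Y_\delta}$ by embedding them in the known scattered linear set $\{\langle (y,y^2)\rangle\}$ (and a $\PGL$-equivalent copy) and proves disjointness by explicitly parametrizing $L_{\mathcal X}=\{\langle(1,y)\rangle : y\in\mathcal H^*\}$, $L_{\mathcal Y_\delta}=\{\langle(1,y+\delta)\rangle : y\in\mathcal H^*\}$, whereas you argue both scatteredness and disjointness directly via the proportionality equations, arriving at the same key identity $x+y=\delta$.
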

\begin{proof}
By Corollary \ref{prop:multiplicitymsrd}, we need to prove that $L_{\mathcal{U}}$ is a scattered $\fq$-multi-linear set and $L_{\mathcal{U}}=\PG(1,q^m)$.
The $\fq$-linear sets $L_{\mathcal{X}}$ and $L_{\mathcal{Y}_{\delta}}$ are two $\fq$-linear sets contained in $\PG(1,q^m)$ having rank $m-1$. 
The linear set $L_1=\{\langle (y,y^2)\rangle_{\F_{2^m}} \st y \in \F_{2^m}^* \}$ is scattered (see \cite{blokhuis2000scattered}) and, since $L_2=\{\langle (y,y^2+\delta y)\rangle_{\F_{2^m}} \st y \in \F_{2^m}^* \}$ is $\mathrm{PGL}(2,q^m)$-equivalent to $L_1$, $L_2$ is scattered as well.
Therefore, we have that $L_{\mathcal{X}}$ and $L_{\mathcal{Y}_{\delta}}$ are scattered, since $L_{\mathcal{X}}\subseteq L_1$ and $L_{\mathcal{Y}_{\delta}}\subseteq L_2$.
We note that
\[ L_{\mathcal{X}}=\{\langle (1,y) \rangle_{\F_{2^m}} \st y \in \mathcal{H}^*\} \]
and 
\[ L_{\mathcal{Y}_{\delta}}=\{\langle (1,y+\delta) \rangle_{\F_{2^m}} \st y \in \mathcal{H}^*\}, \]
where $\mathcal{H}^*=\mathcal{H}\setminus\{0\}$, and hence $L_{\mathcal{X}}\cap L_{\mathcal{Y}_{\delta}}=\emptyset$.
Clearly, $L_{\mathcal{Z}_{\delta}}=\PG(1,q)$  is a scattered $\fq$-linear set and 
\[ L_{\mathcal{Z}_{\delta}}=\{ \langle (1,0) \rangle_{\F_{2^m}}, \langle (0,1) \rangle_{\F_{2^m}}, \langle (1,\delta) \rangle_{\F_{2^m}} \}. \]
Since $\delta$ is the only nonzero element in $\F_{2^m}\setminus(\mathcal{H}^*\cup (\delta+\mathcal{H}^*))$, we also have that $L_{\mathcal{Z}_{\delta}}\cap (L_{\mathcal{X}}\cup L_{\mathcal{Y}_{\delta}})=\emptyset$.
So, since $L_{\mathcal{X}},L_{\mathcal{Y}_{\delta}}$ and $L_{\mathcal{Z}_{\delta}}$ are pairwise disjoint and they are also scattered $\fq$-linear sets, it follows that $L_{\mathcal{U}}$ is a scattered $\fq$-multi-linear set such that $L_{\mathcal{U}}=\PG(1,q^m)$. 
\end{proof}

Any code in $\Phi([\mathcal{U}])$ will be called \textbf{$2$-fold linearized Reed-Solomon code}.

\begin{remark}
The $2$-fold linearized Reed-Solomon codes obtained in Theorem \ref{thm:oneweight_MSRD_sporadic} are very special and indeed they can only occur when $q=2$. To the best of our knowledge they cannot be recovered from known constructions. It is clear that they are not linearized Reed-Solomon codes \cite{martinezpenas2018skew}, since for $q=2$ they are constituted by only one block. Furthermore, twisted linearized Reed-Solomon codes coincide with linearized Reed-Solomon code when $q=2$ \cite{neri2021twisted}. Moreover, the constructions presented in \cite{martinez2020general} have all the same block length, and thus they do not coincide with $2$-fold linearized Reed-Solomon codes.

Moreover, these $2$-fold linearized Reed-Solomon codes cannot even be obtained by puncturing one of the known MSRD codes. This is easy to see since the puncturing of a sum-rank metric code $\C\in \Phi([\mU])$, with $\mU=(\mU_1,\ldots,\mU_t)$ corresponds to choosing a code $\C'\in \Phi([\mU'])$, where $\mU'=(\mU_1',\ldots, \mU_t')$ and $\mU'_i\subseteq \mU_i$ for each $i\in[t]$. However, since the $\Fq$-multi-linear set $L_{\mU'}$ with $\mU'=(\mathcal X, \mathcal X_\delta, \mathcal Z_\delta)$, covers the whole projective line $\PG(1,2^m)$, this cannot happen by Corollary \ref{prop:multiplicitymsrd}.
\end{remark}

\begin{example}
 Let $q=2$ and $m=4$, and let $\F_{16}=\F_2(\beta)$, where $\beta^4+\beta+1=0$. Let us choose $\mathcal H=\langle 1, \beta, \beta^2 \rangle_{\F_2}$, and $\delta=\beta^3$. We have that the sets $\mathcal X, \mathcal Y_\delta, \mathcal Z_\delta$ are 
 \begin{align*}\mathcal X&=\langle (1,1),(\beta,\beta^2), (\beta^2,\beta^4)\rangle_{\F_2}, \\
 \mathcal Y_\delta&=\langle (1,\beta^{14}),(\beta,\beta^{10}), (\beta^2,\beta^8)\rangle_{\F_2}, \\
 \mathcal Z_\delta &=\langle (1,0),(0,\beta^3) \rangle_{\F_2}.
 \end{align*}
 Thus, a $[(3,3,2),2,7]_{16/2}$ $2$-fold linearized Reed-Solomon code $\C\in \Phi([\mU])$ is the one generated by
 $$ G\coloneqq \left(\begin{array}{ccc|ccc|cc} 1 & \beta & \beta^2 & 1 & \beta & \beta^2 & 1 & 0 \\
 1 & \beta^2 & \beta^4 & \beta^{14} & \beta^{10} & \beta^8 & 0 & \beta^3\end{array}\right).$$
\end{example}

\begin{remark}
If $L_{\mathcal{U}}$ is a scattered $\F_2$-linear set of rank $m$ in $\PG(1,2^m)$, then $\mathcal{U}$ is $\GammaL (2,2^m)$-equivalent to the $\F_2$-subspace 
\[ \{ (y,y^2) \st y \in \F_{2^m} \}, \]
since $\mathcal{U}$ corresponds to a translation hyperoval, whose classification dates back to the paper by Payne \cite{payne1971complete}.
This implies that every one-weight MSRD $[(m,1,1),2,m+1]_{2^m/2}$ code is equivalent to a doubly-extended linearized Reed-Solomon code.
\end{remark}

\begin{remark}
Let $m=4$. If $L_{\mathcal{U}}$ is a scattered $\F_q$-linear set of rank $4$ in $\PG(1,q^4)$, then $\mathcal{U}$ is $\GammaL (2,q^4)$-equivalent either to the $\F_q$-subspace 
\[ \mathcal{V}=\{ (y,y^q) \st y \in \F_{q^4} \}, \]
or to
\[ \mathcal{W}=\{ (y^q,y^{q^2}+\delta y) \st y \in \F_{q^4} \}, \]
with $\delta \in \F_{q^4}$ with $\N_{q^4/q}(\delta)\notin \{0,1\}$, see \cite{bonoli2005fq,csajbok2018maximumPG} for the classification of scattered linear sets and then see \cite[Section 4]{csajbok2018classes} to obtain the classification in terms of subspaces.

This implies that every $\fq$-subspace of dimension $m$ of any system associated with a one-weight MSRD $[(m,\ldots,m,1,1),2,(q-1)m+1]_{q^4/q}$ code is $\mathrm{\Gamma L}(2,q^4)$-equivalent to either $\mathcal{V}$ or to $\mathcal{W}$.
\end{remark}

\bigskip

We now show that one can  construct $2$-dimensional one-weight MSRD codes, with $t$ blocks, also when $q+1<t<q^m+1$.

The example of one-weight doubly-extended linearized Reed-Solomon codes can be extended as follows.
First we recall the following result from \cite{neri2021twisted} for special sets of parameters.
To do this, we introduce the following notation. Let $\mathcal{G}$ be a group and let $g_1,\ldots,g_t \in \mathcal{G}$, we denote by $\langle g_1,\ldots,g_t\rangle$ the subgroup of $\mathcal{G}$ generated by $g_1,\ldots,g_t$.

\begin{theorem}[see \textnormal{\cite[Theorem 6.3]{neri2021twisted}}]\label{th:constrNeri}
Let $a=(a_1,\ldots,a_t) \in (\F_{q^m}^*)^t$ be such that $\N_{q^m/q}(a_i) \neq \N_{q^m/q}(a_j)$ if $a_i \neq a_j$. 
Suppose that there exists $\eta \in \F_{q^m}$ such that $\N_{q^m/q}(\eta) \notin \langle \N_{q^m/q}(a_1), \ldots, \N_{q^m/q}(a_t) \rangle\subseteq \F_{q}^*$. 
Let $\gamma=(\gamma_1,\ldots,\gamma_m)$ be an ordered $\F_q$-basis of $\F_{q^m}$.
Define
\[ T_\eta\coloneqq\{f=x_0 + x_1 x +x_0 \eta x^2 \st x_0,x_1 \in \F_{q^m}\} \subseteq \F_{q^m}[x;\sigma].\]
Then the code 
\[
\mathcal{H}_{a,\gamma}^{\eta}\coloneqq\{ \mathrm{ev}_{a,\gamma}(f) \st f \in T_\eta\} \subseteq \F_{q^m}^{\mathbf{m}}
\]
is an $[\mathbf{m},2,tm-1]_{q^m/q}$ MSRD code, with $\mathbf{m}=(\underbrace{m,\ldots,m}_{t \mbox{ \scriptsize{times}}})$. 
\end{theorem}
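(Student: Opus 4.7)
The plan is to verify that $\mathcal{H}_{a,\gamma}^\eta$ is both of $\F_{q^m}$-dimension $2$ and of minimum sum-rank distance at least $tm-1$; since $N=tm$ and $k=2$, this matches the Singleton bound and gives the MSRD property (the dimension claim will follow automatically from the weight lower bound applied to any nonzero $f$). Since the operator $\gamma\mapsto f(\gamma)_{a}$ is $\F_q$-linear and $\gamma_1,\ldots,\gamma_m$ is an $\F_q$-basis of $\F_{q^m}$, the $i$-th block of $\mathrm{ev}_{a,\gamma}(f)$ has $\F_q$-rank exactly $m-\dim_{\F_q}\ker_{a_i}(f)$, so the sum-rank weight will be
\[
\ww(\mathrm{ev}_{a,\gamma}(f)) \;=\; tm-\sum_{i=1}^t \dim_{\F_q}\ker_{a_i}(f),
\]
reducing the problem to showing $\sum_i \dim_{\F_q}\ker_{a_i}(f)\le 1$ for every nonzero $f\in T_\eta$.

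I would then split on $x_0$. If $x_0=0$, then $f=x_1 x$ with $x_1\ne 0$; each $\ker_{a_i}(f)=\{0\}$ because $\sigma$ is injective and $a_i\ne 0$, so the weight is $tm$. If $x_0\ne 0$, normalizing $x_0=1$ gives $f=1+x_1 x+\eta x^2$ of degree $2$. Here I would invoke the Lam--Leroy-type bound for zeros of skew polynomials lying in pairwise distinct conjugacy classes---the same tool that underpins the MSRD property of classical linearized Reed--Solomon codes---which under the distinct-norm hypothesis on the $a_i$ gives $\sum_i \dim_{\F_q}\ker_{a_i}(f)\le\deg f=2$. The remaining work is then to use the norm hypothesis on $\eta$ to sharpen this bound from $2$ to $1$.

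To rule out the equality case, for each nonzero $\gamma_i\in\ker_{a_i}(f)$ I would set $\beta_i\coloneqq\sigma(\gamma_i)/\gamma_i$ (so $\N_{q^m/q}(\beta_i)=1$ by Hilbert~90) and $u_i\coloneqq a_i\beta_i$; after dividing $f(\gamma_i)_{a_i}=0$ by $\gamma_i$, the vanishing equation becomes the clean form $1+x_1 u_i+\eta\,u_i\sigma(u_i)=0$. Equality in Lam--Leroy presents two configurations: either a single block has a $2$-dimensional kernel (two $\F_q$-independent $\gamma_i,\gamma_i'$ at the same $a_i$; note $\beta_i\ne\beta_i'$ since otherwise $\gamma_i/\gamma_i'\in\F_q$), or two distinct blocks each contribute a $1$-dimensional kernel. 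In both configurations, eliminating $x_1$ between the two clean equations and using the identity $u_i\cdot u_j\sigma(u_j)-u_j\cdot u_i\sigma(u_i)=u_iu_j\sigma(u_j-u_i)$ yields
\[
\eta \;=\; \frac{u_j-u_i}{u_i\,u_j\,\sigma(u_j-u_i)}
\]
(two-block case, with the analogous single-block formula). Taking norms and using $\N_{q^m/q}(\sigma(x))=\N_{q^m/q}(x)$ together with $\N_{q^m/q}(\beta_i)=1$ collapses this to $\N_{q^m/q}(\eta)=\N_{q^m/q}(a_i)^{-1}\N_{q^m/q}(a_j)^{-1}$ (respectively $\N_{q^m/q}(a_i)^{-2}$), placing $\N_{q^m/q}(\eta)\in\langle \N_{q^m/q}(a_1),\ldots,\N_{q^m/q}(a_t)\rangle$ and contradicting the hypothesis on $\eta$. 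The main obstacle will be verifying this algebraic collapse cleanly in both configurations; what makes it possible is precisely the coupling of the $x^0$ and $x^2$ coefficients of $T_\eta$ through the same scalar $x_0\eta$, which forces $u_i\sigma(u_i)$ to appear and lets the Hilbert~90 cancellations go through.
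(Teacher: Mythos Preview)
The paper does not actually prove this theorem: it is quoted verbatim as a known result from \cite{neri2021twisted} and used as a black box to derive Lemma~\ref{lem:Neriex}. So there is no ``paper's own proof'' to compare against.

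That said, your proposal is a correct self-contained proof. The reduction to $\sum_i\dim_{\F_q}\ker_{a_i}(f)\le 1$ is exactly right, and the elimination argument is clean: from $1+x_1u_r+\eta\,u_r\sigma(u_r)=0$ for $r\in\{i,j\}$, multiplying the $i$-equation by $u_j$, the $j$-equation by $u_i$, and subtracting yields $(u_j-u_i)=\eta\,u_iu_j\,\sigma(u_j-u_i)$, whence $\N_{q^m/q}(\eta)=\N_{q^m/q}(a_i)^{-1}\N_{q^m/q}(a_j)^{-1}$ via $\N_{q^m/q}(\beta_r)=1$. Two minor remarks: first, you do not actually need the Lam--Leroy bound as an input, since your contradiction argument already rules out any configuration with two $\F_q$-independent roots (same block) or roots in two distinct blocks, directly giving $\sum_i\dim_{\F_q}\ker_{a_i}(f)\le 1$; second, the verification that $u_i\ne u_j$ is indeed forced (same block: $\F_q$-independence of $\gamma_i,\gamma_j$ gives $\beta_i\ne\beta_j$; different blocks: $\N_{q^m/q}(u_i)=\N_{q^m/q}(a_i)\ne \N_{q^m/q}(a_j)=\N_{q^m/q}(u_j)$), so the division is legitimate. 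The dimension-$2$ claim is immediate since $(x_0,x_1)\mapsto x_0+x_1x+x_0\eta x^2$ is $\F_{q^m}$-linear and injective, and $\mathrm{ev}_{a,\gamma}$ is injective on $T_\eta$ by the weight bound.
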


The code $\mathcal{H}_{a,\gamma}^{\eta}$ was introduced in \cite[Definition 6.2]{neri2021twisted} and it is called \textbf{twisted linearized Reed-Solomon code}.

\begin{lemma}\label{lem:Neriex}
Let $\{a_1,\ldots,a_t\} \subseteq \F_{q^m}^*$ be such that $\N_{q^m/q}(a_i) \neq \N_{q^m/q}(a_j)$ if $a_i \neq a_j$. Suppose that there exists $\eta \in \F_{q^m}$ such that $\N_{q^m/q}(\eta) \notin \langle  \N_{q^m/q}(a_1), \ldots, \N_{q^m/q}(a_t) \rangle$. 
Let
\[\mathcal{U}_i=\{(y+\sigma^2(y) \eta N_2(a_i),a_i \sigma(y))\st  y \in \F_{q^m}\} \subseteq \F_{q^m}^2.\] 
Let $\mathcal{U}=(\mathcal{U}_1,\ldots,\mathcal{U}_t)$, then $L_{\mathcal{U}}$ is a scattered $\F_q$-multi-linear set in $\mathrm{PG}(1,q^m)$.
\end{lemma}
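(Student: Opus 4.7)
The plan is to recognize the subspaces $\mathcal{U}_i$ as the ones defining the geometric system of the twisted linearized Reed-Solomon code $\mathcal{H}_{a,\gamma}^{\eta}$ from Theorem \ref{th:constrNeri}, and then exploit the fact that this code is MSRD (with $k=2$) to invoke Corollary \ref{teo:designMSRD} and conclude that all hyperplane intersections have total $\F_q$-dimension at most $1$, which is exactly the scattered property.

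First, I would compute an explicit $\Fmnk$ system associated to $\mathcal{H}_{a,\gamma}^{\eta}$. Since $T_\eta = \{x_0 + x_1 x + x_0\eta x^2 : x_0, x_1 \in \F_{q^m}\}$ is a $2$-dimensional $\F_{q^m}$-space with basis $f_1 = 1 + \eta x^2$ and $f_2 = x$, evaluating at $(a_i, \gamma_j)$ yields
$$ f_1(\gamma_j)_{a_i} = \gamma_j + \eta \sigma^2(\gamma_j)\N_2(a_i), \qquad f_2(\gamma_j)_{a_i} = a_i \sigma(\gamma_j).$$
Thus the $i$-th block of the generator matrix has columns $(\gamma_j + \eta\sigma^2(\gamma_j)\N_2(a_i),\, a_i\sigma(\gamma_j))^\top$ for $j = 1, \ldots, m$. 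Using the $\F_q$-linearity of $\sigma$ and $\sigma^2$, and writing $y = \sum_j \lambda_j \gamma_j$ as $\lambda_j$ ranges over $\F_q$, the $\F_q$-span of these columns is exactly
$$\mathcal{U}_i = \{(y + \eta \sigma^2(y) \N_2(a_i),\, a_i \sigma(y)) : y \in \F_{q^m}\},$$
so $(\mathcal{U}_1, \ldots, \mathcal{U}_t) \in \Psi([\mathcal{H}_{a,\gamma}^{\eta}])$.

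Next, by Theorem \ref{th:constrNeri}, $\mathcal{H}_{a,\gamma}^{\eta}$ is an MSRD code with $k=2$. Applying Corollary \ref{teo:designMSRD}, for every $\F_{q^m}$-hyperplane $\mathcal{H}$ of $\F_{q^m}^2$ we obtain
$$ \sum_{i=1}^{t} \dim_{\F_q}(\mathcal{U}_i \cap \mathcal{H}) \leq k - 1 = 1.$$
Finally, the hyperplanes of $\F_{q^m}^2$ are precisely the $1$-dimensional $\F_{q^m}$-subspaces $\langle v\rangle_{\F_{q^m}}$, i.e.\ the points $P \in \PG(1,q^m)$. Hence, by the very definition of multi-weight,
$$ \ww_{L_{\mathcal{U}}}(P) = \sum_{i=1}^t \ww_{L_{\mathcal{U}_i}}(P) = \sum_{i=1}^t \dim_{\F_q}(\mathcal{U}_i \cap \langle v\rangle_{\F_{q^m}}) \leq 1$$
for every point $P = \langle v\rangle_{\F_{q^m}} \in \PG(1,q^m)$, which is exactly the scattered property for the $\F_q$-multi-linear set $L_{\mathcal{U}}$.

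The main (and essentially only) obstacle is the bookkeeping in the first step: correctly identifying the columns of a generator matrix of $\mathcal{H}_{a,\gamma}^{\eta}$ under the generalized operator evaluation and recognizing, via $\F_q$-linearity, that their $\F_q$-span matches the given description of $\mathcal{U}_i$. Once this identification is made, the rest is a direct application of the MSRD characterization in Corollary \ref{teo:designMSRD} translated into the linear-set language via the weight/intersection dictionary.
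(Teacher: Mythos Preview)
Your proof is correct and follows essentially the same route as the paper: identify $(\mathcal U_1,\ldots,\mathcal U_t)$ as a system associated with the twisted linearized Reed-Solomon code $\mathcal H_{a,\gamma}^{\eta}$, then use its MSRD property (Theorem~\ref{th:constrNeri}) together with the weight--intersection dictionary (the paper cites Theorem~\ref{th:connection}, you use its consequence Corollary~\ref{teo:designMSRD}) to bound the multi-weight by $1$. Your version is simply more explicit in verifying the first step via the generator matrix computation, which the paper leaves implicit.
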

\begin{proof}
By Theorem \ref{th:connection} and Theorem \ref{th:constrNeri}, it follows that $ w_{L_{\mathcal{U}}}(\langle v \rangle_{\F_{q^m}}) \leq 1$,
for every point $\langle v\rangle_{\F_{q^m}}\in \PG(1,q^m)$.
This means that each point $P \in \PG(1,q^m)$ belongs to at most one of the linear sets $L_{\mathcal{U}_1},\ldots,L_{\mathcal{U}_t}$ and $w_{L_{\mathcal{U}_i}}(P)\leq 1$.
Hence the assertion is proved.
\end{proof}

\begin{remark}
Choosing $\eta=0$, we return to the case of Lemma \ref{lem:disjointpseudo}.
\end{remark}

\begin{remark}
Note that for the twisted linearized Reed-Solomon codes we have $t \leq q-1$. Moreover, the biggest proper subgroup of $\F_q^*$ has cardinality $\frac{q-1}{r}$, where $r$ is the smallest prime dividing $q-1$. Then, if we would construct a linearized Reed-Solomon codes with $\eta \ne 0$ we have $t \leq \frac{q-1}{r}$.  
\end{remark}

We can complete the set of such subspaces to get a one-weight MSRD code.
Indeed, let $t=(q-1)/r$, where $r$ is the smallest prime that divides $q-1$. Let $\mathcal{G}=\{g_1,\ldots,g_t\}$ be the subgroup of $\F_{q}^*$ of order $t$. Let $a_1,\ldots,a_t \in \F_{q^m}^*$ be elements with distinct norm such that $\N_{q^m/q}(a_i) \in \mathcal{G}$.  Let $\eta \in \F_{q^m}^*$ such that $\N_{q^m/q}(\eta) \notin \mathcal{G}$. Let $\mathcal{U}_i=\{(y+\sigma^2(y)\eta N_2(a_i),a_i \sigma(y))\st  y \in \F_{q^m}\} \subseteq \F_{q^m}^2$ and $\mathcal{U}=(\mathcal{U}_1,\ldots,\mathcal{U}_t)$. By Lemma \ref{lem:Neriex} the $L_{\mathcal{U}_i}$'s are pairwise disjoint and each of them has $\frac{q^m-1}{q-1}$ points. Let $z=q^m+1-\frac{q-1}{r}\frac{q^m-1}{q-1}$ and let $v_1,\ldots,v_z \in \F_{q^m}^2 \setminus \{(0,0)\}$ such that
\[\{P_1=\langle v_1 \rangle_{\F_{q^m}},\ldots,P_z=\langle v_z \rangle_{\F_{q^m}}\} =\mathrm{PG}(1,q^m) \setminus \{L_{\mathcal{U}} \}. \]  
Let $\mathcal{U}'_i=\langle v_i \rangle_{\fq}$ for every $i\in[z]$ and let $\mathcal{U}'=(\mathcal{U}'_1,\ldots,\mathcal{U}'_z)$. 
As a consequence of Lemma \ref{lem:Neriex} and of the definition of $\mathcal{U}_{i}$'s, we have that 
\[ w_{L_{\mathcal{U}}}(P) + w_{L_{\mathcal{U}'}}(P)=1,\] 
for every point $P \in \PG(1,q^m)$.
Therefore, we have the following result.

\begin{corollary}
Let $t=(q-1)/r$, where $r$ is the smallest prime that divides $q-1$. Let $\mathcal{G}=\{g_1,\ldots,g_t\}$ be the subgroup of $\F_{q}^*$ of order $t$. Let $a_1,\ldots,a_t \in \F_{q^m}^*$ be elements with distinct norm such that $\N_{q^m/q}(a_i) \in \mathcal{G}$.  Let $\eta \in \F_{q^m}^*$ such that $\N_{q^m/q}(\eta) \notin \mathcal{G}$. Let $\mathcal{U}_i=\{(y+\sigma^2(y)\eta N_2(a_i),a_i \sigma(y))\st  y \in \F_{q^m}\} \subseteq \F_{q^m}^2$.
Let $v_1,\ldots,v_z \in \F_{q^m}^2 \setminus \{(0,0)\}$ such that 
\[\{P_1=\langle v_1 \rangle_{\F_{q^m}},\ldots,P_z=\langle v_z \rangle_{\F_{q^m}}\} =\mathrm{PG}(1,q^m) \setminus \{ L_{\mathcal{U}}\}. \]  
Let $\mathcal{U}'_i=\langle v_i \rangle_{\fq}$ for every $i\in[z]$. 
Then the class of the $[\mathbf{n},2,mt+z-1]_{q^m/q}$ system $\mathcal{U}=(\mathcal{U}_1,\ldots,\mathcal{U}_t,\mathcal{U}'_1,\ldots,\mathcal{U}'_z)$, with $\mathbf{n}=(\underbrace{m,\ldots,m}_{t \mbox{ \scriptsize{times}} }, \underbrace{1,\ldots,1}_{z \mbox{ \scriptsize{times}}})$, defines a class $\Phi([\mathcal{U}])$ of non-degenerate one-weight MSRD $[\mathbf{n},2,mt+z-1]_{q^m/q}$ codes.
\end{corollary}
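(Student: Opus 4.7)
The plan is to apply Corollary \ref{prop:multiplicitymsrd}, which reduces the problem to showing that $L_\mathcal{U}$ is a scattered $\F_q$-multi-linear set of $\PG(1,q^m)$ that covers the whole projective line.

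First, I would invoke Lemma \ref{lem:Neriex} to conclude that $L_{(\mathcal{U}_1,\ldots,\mathcal{U}_t)}$ is already a scattered $\F_q$-multi-linear set: each $L_{\mathcal{U}_i}$ is a scattered $\F_q$-linear set of rank $m$, and the sets $L_{\mathcal{U}_1},\ldots,L_{\mathcal{U}_t}$ are pairwise disjoint. This means $w_{L_{(\mathcal{U}_1,\ldots,\mathcal{U}_t)}}(P)\in\{0,1\}$ for every $P\in\PG(1,q^m)$, and $w_{L_{(\mathcal{U}_1,\ldots,\mathcal{U}_t)}}(P)=1$ precisely when $P\in L_{(\mathcal{U}_1,\ldots,\mathcal{U}_t)}$.

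Next, I would note that each $\mathcal{U}'_j=\langle v_j\rangle_{\F_q}$ is a $1$-dimensional $\F_q$-subspace, so $L_{\mathcal{U}'_j}=\{P_j\}$ is a single point of weight one, and the chosen points $P_1,\ldots,P_z$ are pairwise distinct and lie in the complement $\PG(1,q^m)\setminus L_{(\mathcal{U}_1,\ldots,\mathcal{U}_t)}$. Hence for every point $P\in\PG(1,q^m)$ exactly one of the following holds: either $P$ belongs to exactly one of the $L_{\mathcal{U}_i}$ with $i\in[t]$ and has weight one there (and weight zero in all other $L_{\mathcal{U}_j}$ and in all $L_{\mathcal{U}'_j}$), or $P$ equals a unique $P_j$ and has weight one in $L_{\mathcal{U}'_j}$ only. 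In both cases $w_{L_\mathcal{U}}(P)=1$, which simultaneously proves that $L_\mathcal{U}$ is scattered and that $L_\mathcal{U}=\PG(1,q^m)$.

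Finally, I would conclude by Corollary \ref{prop:multiplicitymsrd} that any $\C\in\Phi([\mathcal{U}])$ is a one-weight MSRD code. The parameters follow by computation: the total length is $N=\sum_{i=1}^t\dim_{\F_q}\mathcal{U}_i+\sum_{j=1}^z\dim_{\F_q}\mathcal{U}'_j=mt+z$, and since the code is MSRD with $k=2$ we get $d=N-k+1=mt+z-1$. Non-degeneracy is immediate because $\langle\mathcal{U}_1,\ldots,\mathcal{U}_t,\mathcal{U}'_1,\ldots,\mathcal{U}'_z\rangle_{\F_{q^m}}=\F_{q^m}^2$, which is forced by $L_\mathcal{U}=\PG(1,q^m)$. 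No step is really an obstacle here; the whole argument is a clean combination of Lemma \ref{lem:Neriex} (which supplies scatteredness of the ``Reed--Solomon part'') with the construction of the $\mathcal{U}'_j$'s (which covers the remaining points by single one-dimensional subspaces), plugged into the geometric characterization of one-weight MSRD codes.
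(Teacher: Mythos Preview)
Your proof is correct and follows essentially the same approach as the paper: invoke Lemma \ref{lem:Neriex} to get scatteredness of the twisted part, adjoin one-dimensional $\F_q$-subspaces for the uncovered points so that the multi-weight is identically $1$, and conclude via Corollary \ref{prop:multiplicitymsrd}. The only difference is cosmetic: the paper presents this argument in the discussion preceding the corollary rather than as a formal proof, while you also spell out the parameter computation and non-degeneracy.
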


\subsection{Lifting construction}\label{sec:non-hom2}

We conclude this section by describing a procedure to construct one-weight sum-rank metric codes starting from a set of linear sets in $\PG(1,q^m)$. 

Let $\mathcal{U}_1,\ldots, \mathcal{U}_s$ be $\F_q$-subspaces in $\F_{q^m}^2$, let $\mathcal{U}=(\mathcal{U}_1,\ldots, \mathcal{U}_s)$ and consider $L_{\mathcal{U}}$ be the $\fq$-multi-linear set associated with $\mathcal{U}$.
Let 
\[ M=\max\left\{ w_{L_{\mathcal{U}}}(P) \st  P \in \PG(1,q^m)\right\}. \]

Denote by $\mathcal{M}(\mathcal{U})$ a vector of subspaces whose entries  (rearranged in decreasing order according to their dimension) are
\begin{itemize}
    \item $\mathcal{U}_1,\ldots,\mathcal{U}_s$;
    \item $c$ copies of $\langle v \rangle_{\F_{q^m}}$ and a $d$-dimensional subspace of $\langle v \rangle_{\F_{q^m}}$, for any $P=\langle v \rangle_{\F_{q^m}}\in \PG(1,q^m)$, where $M-w_{L_{\mathcal{U}}}(P)=c m+d$ with $c,d \in \mathbb{N}$ and $d<m$.
\end{itemize}

We call $\mathcal{M}({\mathcal{U}})$ the \textbf{lifting} of $\mathcal{U}$.
Clearly, $w_{L_{\mathcal{M}({\mathcal{U}})}}(P)=M$ for every $P\in \PG(1,q^m)$.

So, adding more copies of the points with weight less than the maximum value yields to the construction of one-weight sum-rank metric codes in the following way.

\begin{corollary}\label{cor:costantMLU}
Let $L_{\mathcal{U}_1},\ldots,L_{\mathcal{U}_s}$ be $s$ $\fq$-linear sets in $\PG(1,q^m)$ of rank $n_1,\ldots,n_s$, respectively, and let $\mathcal{U}=(\mathcal{U}_1,\ldots,\mathcal{U}_s)$.
Let 
\[ M=\max\left\{w_{L_{\mathcal{U}}}(P) \st  P \in \PG(1,q^m)\right\}. \] 
Let $\ell=\sum_{P \in L_{\mathcal{U}}} (M- w_{L_{\mathcal{U}}}(P))$ and $z=q^m+1-|L_{\mathcal{U}}|$.
The elements of an $\mathcal{M}(\mathcal{U})$ define an  $[\mathbf{n},2,d]_{q^m/q}$ system, with minimum distance $d=\sum_{i=1}^s n_i+(z -1)M + \ell$.

In particular, $\Phi([\mathcal{M}({\mathcal{U}})])$ is a class of non degenerate one-weight $[\mathbf{n},2,d]_{q^m/q}$-code.
\end{corollary}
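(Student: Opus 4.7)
The plan is to reduce everything to Theorem \ref{th:connection} specialised to the case $k=2$, where every $\F_{q^m}$-hyperplane of $\F_{q^m}^2$ is a $1$-dimensional $\F_{q^m}$-subspace and hence corresponds to a single point of $\PG(1,q^m)$. First I would verify that $\mathcal{M}(\mathcal{U})$ is indeed an $[\mathbf{n},2]_{q^m/q}$ system: since by construction $L_{\mathcal{M}(\mathcal{U})}=\PG(1,q^m)$ (every point receives weight $M\ge 1$), at least two subspaces appearing in $\mathcal{M}(\mathcal{U})$ lie in distinct $1$-dimensional $\F_{q^m}$-subspaces, and these already span $\F_{q^m}^2$. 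Hence any code in $\Phi([\mathcal{M}(\mathcal{U})])$ is nondegenerate by Proposition \ref{prop:charact_nondegenarate} via the correspondence of Theorem \ref{thm:1-1corr}.

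Next I would verify two numerical facts. The total length $N$ of the system can be computed block-by-block: the original subspaces contribute $\sum_{i=1}^s n_i$, and for each point $P=\langle v\rangle_{\F_{q^m}}\in\PG(1,q^m)$ the lifting adds $c$ copies of $\langle v\rangle_{\F_{q^m}}$ (each of $\F_q$-dimension $m$) together with a $d$-dimensional $\F_q$-subspace of $\langle v\rangle_{\F_{q^m}}$, for a total contribution $cm+d=M-w_{L_{\mathcal{U}}}(P)$. Splitting the sum over $\PG(1,q^m)$ according to whether $P\in L_{\mathcal{U}}$ or not (the latter $z$ points contribute $M$ each since $w_{L_{\mathcal{U}}}(P)=0$ there) yields
\[
N \;=\; \sum_{i=1}^s n_i \;+\; \sum_{P\in\PG(1,q^m)}\bigl(M-w_{L_{\mathcal{U}}}(P)\bigr) \;=\; \sum_{i=1}^s n_i + zM + \ell.
\]
Moreover, all subspaces added for the point $P$ are contained in the $1$-dimensional $\F_{q^m}$-subspace $\langle v\rangle_{\F_{q^m}}$, so they contribute to the multi-linear set only at $P$, adding exactly $cm+d=M-w_{L_{\mathcal{U}}}(P)$ to its multi-weight. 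Consequently $w_{L_{\mathcal{M}(\mathcal{U})}}(P)=M$ for \emph{every} point $P\in\PG(1,q^m)$.

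Putting the two facts together, for any nonzero $v\in\F_{q^m}^2$ and any generator matrix $G$ of a code $\C\in\Phi([\mathcal{M}(\mathcal{U})])$, Theorem \ref{th:connection} gives
\[
\ww(vG) \;=\; N-\sum_{j}\dim_{\F_q}\bigl(\mathcal{V}_j\cap v^\perp\bigr) \;=\; N - w_{L_{\mathcal{M}(\mathcal{U})}}(P) \;=\; N-M,
\]
where $P$ is the unique point of $\PG(1,q^m)$ defined by $v^\perp$ and the $\mathcal{V}_j$ are the components of $\mathcal{M}(\mathcal{U})$. This shows simultaneously that $\C$ is a one-weight code and that its minimum distance equals
\[
d \;=\; N-M \;=\; \sum_{i=1}^s n_i + (z-1)M + \ell,
\]
as claimed. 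The only non-trivial point in the argument is the bookkeeping to check that the lifting really produces multi-weight $M$ at \emph{every} point of $\PG(1,q^m)$, and not just at those of $L_{\mathcal{U}}$; this is the reason for including the $z$ points outside $L_{\mathcal{U}}$ in the construction, and it follows at once from the Euclidean decomposition $M-w_{L_{\mathcal{U}}}(P)=cm+d$ applied uniformly to all $P$.
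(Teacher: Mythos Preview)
Your proof is correct and follows exactly the approach implicit in the paper: the paper states just before the corollary that $w_{L_{\mathcal{M}(\mathcal{U})}}(P)=M$ for every $P\in\PG(1,q^m)$ and then records the corollary without further argument, relying on \eqref{eq:geometriconeweight2} (equivalently Theorem~\ref{th:connection} in dimension~$2$). You have simply made the bookkeeping for $N=\sum_i n_i+zM+\ell$ and $d=N-M$ explicit, and added the (easy) verification that the system is nondegenerate; nothing is missing.
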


Any code in $\Phi([\mathcal{M}({\mathcal{U}})])$ will be called the \textbf{lifted code of} $\mathcal{U}$.

\begin{remark}
We note that the above procedure can give constructions of one-weight codes which are not of the form in Section \ref{sec:orbital}.
Suppose that $L_{\mathcal{U}}\ne \PG(1,q^m)$ and $|L_{\mathcal{U}_i}|\geq 2$ for every $i$, let $P$ be a point in $\PG(1,q^m)$ not lying in $L_{\mathcal{U}}$ and let $Q=\langle w\rangle_{\F_{q^m}}$ be such that $w_{L_{\mathcal{U}}}(Q)=M$. 
Assume that $\mathcal{M}(\mathcal{U})$ is the concatenation of orbital constructions.
Then in the subspaces in the orbit of $\mathcal{U}_P$ under the action of any transitive group $\mathcal{G}$ should also cover at least one non-zero vectors in $\mathcal{U}_i\cap \langle w\rangle_{\F_{q^m}}$ (counted with multiplicity). This would imply that $\mathcal{M}(L_{\mathcal{U}})(Q)\geq M+1$, a contradiction to the above corollary.
\end{remark}

\begin{example}
Consider the $\fq$-linear set defined by $\mathcal{U}=\{(y,\mathrm{Tr}_{q^m/q}(y)) \st y \in \F_{q^m}\}$, where $\mathrm{Tr}_{q^m/q}$ denotes the trace function of $\F_{q^m}$ over $\fq$.
It is well-known that $L_{\mathcal{U}}$ is a club, that is there is only one point of $L_\mathcal{U}$ having weight $m-1$ (which is the point $\langle (1,0)\rangle_{\F_{q^m}}$ in this case) and all the remaining $q^{m-1}$ points have weight one; see e.g.\ \cite{de2016linear}. 
So, according to the above described procedure, the entries of $\mathcal{M}(\mathcal{U})$ (up to reordering) are:
\begin{itemize}
    \item $\mathcal{U}$;
    \item an $(m-2)$-dimension $\fq$-subspace of $\langle (1,\eta)\rangle_{\F_{q^m}}$ if $\langle (1,\eta)\rangle_{\F_{q^m}}\in L_\mathcal{U}$ and $\eta \ne 0$;
    \item an $(m-1)$-dimensional $\fq$-subspace of $\langle v\rangle_{\F_{q^m}}$ if $\langle v\rangle_{\F_{q^m}}\notin L_\mathcal{U}$.
\end{itemize}
By Corollary \ref{cor:costantMLU}, $\mathcal{M}(\mathcal{U})$ defines a non degenerate one-weight code $[\mathbf{n},2,d]_{q^m/q}$-code, with
\[
\mathbf{n}=(m,\underbrace{m-1,\ldots,m-1}_{q^m-q^{m-1} \mbox{ \scriptsize{times}}},\underbrace{m-2,\ldots,m-2}_{q^{m-1} \mbox{ \scriptsize{times}}}) \mbox{ and }d=(m-2)q^{m-1}+(m-1)(q^{m}-q^{m-1}) +1.
\]
\end{example}

\section{Conclusions and open problems}\label{sec:8}

In this paper we first introduce the notion of $[\mathbf{n},k,d]_{q^m/q}$ system, extending the ones of $q$-systems and projective systems already known. 
Then we have established a one-to-one correspondence between equivalence classes of $[\mathbf{n},k,d]_{q^m/q}$ systems and equivalence classes of linear $[\mathbf{n},k,d]_{q^m/q}$ sum-rank metric codes, which naturally extends the connections between equivalence classes of projective systems and linear Hamming-metric codes and the one between $q$-systems and linear rank metric codes.
Thank to this connection, we are able to introduce a new class of sum-rank metric codes which we call doubly-extended linearized Reed-Solomon code, since it is obtained by adding two blocks to a linearized Reed-Solomon code, similarly to the well-known class of doubly-extended Reed-Solomon codes in the Hamming metric. As one might expect, this is a family of MSRD codes and when they also have dimension $2$ then they are also one-weight codes.
Since every $[\mathbf{n},k,d]_{q^m/q}$ system also defines a projective system, to a linear sum-rank metric code we can associate a linear Hamming-metric code. As a consequence of the well-known result by Bonisoli, we obtain some restrictions on the parameters of a linear sum-rank metric code with constant rank-profile.
Moreover, transitive groups have been used to construct sum-rank metric codes with constant rank-profile, this also yields to a definition of simplex code by making use of Singer subgroups of $\mathrm{GL}(k,q^m)$. Such a definition of simplex code is different from the one presented by Martínez-Peñas, which was defined through the dual of nontrivial perfect codes with minimum distance $3$. 
Finally, we investigate $2$-dimensional linear one-weight sum-rank metric codes by means of the linear sets associated with an $[\mathbf{n},2,d]_{q^m/q}$ system, which we call $\fq$-multi-linear sets. This enables us to provide bounds on the number of blocks of a one-weight MSRD code and to characterize those that have the smallest number of blocks. We can also enlarge the construction of twisted linearized Reed-Solomon codes to get families of one-weight MSRD codes. We then describe a procedure to construct one-weight sum-rank metric codes starting from any $\fq$-multi-linear set contained in the projective line.

We conclude the paper with some questions/problems that we think could be of interest for the reader. 

\begin{open}
Is it possible to say more about \emph{extremal} one-weight sum-rank metric codes which are also MSRD codes? We have already shown in Theorem \ref{th:boundq+1} that for two-dimensional one-weight MSRD codes with exactly $q+1$ blocks, the block-lengths are uniquely determined for $q\geq 3$. Can this result be generalized to $k$-dimensional one-weight MSRD codes for $k\geq 3$?
\end{open}

\begin{open} The minimum distance of a $\mathbf{n}$-simplex code constructed in Section \ref{sec:orbital} through a subspace $\mathcal{U}$ and a Singer subgroup $\mathcal{G}$ of $\mathrm{GL}(k,q^m)$, depends on the total weight of the rank-metric codes associated with $\mathcal{U}$. It is therefore  of interest  to study the total weight of rank-metric codes.
\end{open}

\begin{open} 
The $2$-dimensional one-weight MSRD codes obtained from Theorem \ref{thm:oneweight_MSRD_sporadic}, called $2$-fold linearized Reed-Solomon, are special constructions  occurring only when $q=2$ and which, to the best of our knowledge, cannot be recovered from known constructions.
Can these examples be extended to higher dimensions?
\end{open}

\begin{open} In Section \ref{sec:non-hom2}, we describe a geometric construction for linear one-weight sum-rank metric codes of  dimension $2$ called the lifted code of $\mathcal{U}$. Can be this procedure extended in order to construct linear one-weight sum-rank metric codes of any dimension? 
\end{open}

\section*{Acknowledgements}

The authors are very grateful to Gianira N. Alfarano for fruitful discussions and comments.
This research was supported through the programme “Research in Pairs” (ID 2145q) by the Mathematisches Forschungsinstitut Oberwolfach in November 2021. The first and last authors are deeply grateful to the Institute for its hospitality.
The research of the last two authors was partially supported by the Italian National Group for Algebraic and Geometric Structures and their Applications (GNSAGA - INdAM).
The research of the last author was supported by the project ``VALERE: VanviteLli pEr la RicErca" of the University of Campania ``Luigi Vanvitelli''. 

\bibliographystyle{abbrv}
\bibliography{biblio}
\end{document}